	\titlespacing{\section}{0pt}{12pt}{0pt}
	\titlespacing{\subsection}{0pt}{6pt}{0pt}
	\crefname{equation}{equation}{equations}
\theoremstyle{plain}
	\newtheorem{theorem}{Theorem}
	\newtheorem{proposition}[theorem]{Proposition}
	\newtheorem{corollary}[theorem]{Corollary}
	\newtheorem{lemma}[theorem]{Lemma}
	\newtheorem{conjecture}[theorem]{Conjecture}
	\newtheorem*{conjecture*}{Conjecture}
	\numberwithin{theorem}{section}
\theoremstyle{definition}
	\newtheorem{definition}[theorem]{Definition}
	\newtheorem{example}[theorem]{Example}
\theoremstyle{remark}
	\newtheorem*{remark*}{Remark}
\definecolor{linkred}{rgb}{0.75,0,0}
\definecolor{linkblue}{rgb}{0,0,0.75}
\definecolor{lightblue}{rgb}{0.95,0.95,1}
\definecolor{lightred}{rgb}{1,0.95,0.95}
\definecolor{MapleRed}{rgb}{1,0,0}
\definecolor{MapleBlue}{rgb}{0,0,1}
\definecolor{MaplePink}{rgb}{1,0,1} 
\def\MapleInput#1{\noindent{{\small $>$ {\tt \color{MapleRed}{#1} }}}}
\def\MapleOutput#1{{\begin{center} \color{MapleBlue}{#1} \end{center}}}
\newcommand\blfootnote[1]{
	\begingroup
	\renewcommand\thefootnote{}\footnote{#1}
	\addtocounter{footnote}{-1}
	\endgroup
}
\setlist{nolistsep}
\newcommand {\dd}{\mathrm{d}}
\newcommand {\h}{\hbar}
\begin{document}

{\large \bfseries Generalisations of the Harer--Zagier recursion for 1-point functions}

{\bfseries Anupam Chaudhuri and Norman Do}

{\em Abstract.} Harer and Zagier proved a recursion to enumerate gluings of a $2d$-gon that result in an orientable genus $g$ surface, in their work on Euler characteristics of moduli spaces of curves. Analogous results have been discovered for other enumerative problems, so it is natural to pose the following question: how large is the family of problems for which these so-called 1-point recursions exist?

In this paper, we prove the existence of 1-point recursions for a class of enumerative problems that have Schur function expansions. In particular, we recover the Harer--Zagier recursion, but our methodology also applies to the enumeration of dessins d'enfant, to Bousquet-M\'{e}lou--Schaeffer numbers, to monotone Hurwitz numbers, and more. On the other hand, we prove that there is no 1-point recursion that governs simple Hurwitz numbers. Our results are effective in the sense that one can explicitly compute particular instances of 1-point recursions, and we provide several examples. We conclude the paper with a brief discussion and a conjecture relating 1-point recursions to the theory of topological recursion.
\blfootnote{\emph{2010 Mathematics Subject Classification:} 05A15, 05E10, 14N10. \\
\emph{Date:} 30 December 2018 \\ The second author was supported by the Australian Research Council grants DE130100650 and DP180103891.}

~

\hrule

\setlength{\parskip}{0pt}
\tableofcontents
\setlength{\parskip}{8pt}

~

\hrule

\section{Introduction}

For integers $g \geq 0$ and $d \geq 1$, let $a_g(d)$ denote the number of ways to glue the edges of a $2d$-gon in pairs to obtain an orientable genus $g$ surface. The data of a surface constructed by gluing edges of polygons in pairs is often referred to in the literature as a {\em ribbon graph}. In their pioneering work, Harer and Zagier apply matrix model techniques to this enumeration of ribbon graphs with one face to deduce a formula for the virtual Euler characteristics of moduli spaces of curves. One consequence of their calculation is the fact that the numbers $a_g(d)$ satisfy the following recursion~\cite{har-zag}.
\begin{equation} \label{eq:harerzagier}
(d+1) \, a_g(d) = 2(2d-1) \, a_g(d-1) + (2d-1) (d-1) (2d-3) \, a_{g-1}(d-2)
\end{equation}

Despite the simple appearance of this formula, Zagier later stated~\cite{lan-zvo}: ``No combinatorial interpretation of the recursion\ldots\ is known.'' The Harer--Zagier recursion has since attracted a great deal of interest, and there now exist several proofs, some of which are combinatorial in nature~\cite{akh-sha, cha-fer-fus, gou-nic, las, mor-sha, pit}.

In more recent work of the second author and Norbury~\cite{do-nor}, as well as the subsequent work of Chekhov~\cite{che}, an analogue of the Harer--Zagier recursion was deduced for the number of {\em dessins d'enfant} with one face. More precisely, let $b_g(d)$ denote the number of ways to glue the edges of a $2d$-gon, whose vertices are alternately coloured black and white, in pairs to obtain an orientable genus $g$ surface. Of course, we impose the caveat that vertices may only be glued together if they share the same colour. The numbers $b_g(d)$ satisfy the following recursion.
\begin{equation} \label{eq:donorbury}
(d+1) \, b_g(d) = 2(2d-1) \, b_g(d-1) + (d-1)^2 (d-2) \, b_{g-1}(d-2)
\end{equation}

It is natural to embed the problem of calculating $a_g(d)$ into the more general enumeration of ways to glue the edges of $n$ labelled polygons with $d_1, d_2, \ldots, d_n$ sides to obtain an orientable genus $g$ surface. This problem then lends itself naturally to a simple combinatorial recursion, whose roots lie in the work of Tutte~\cite{tut}, but was first expressed by Walsh and Lehman~\cite{wal-leh}. The mechanism for such a recursion comes from removing an edge from the ribbon graph formed by the edges of the polygons, and observing that one is left with either a simpler ribbon graph or the disjoint union of two simpler ribbon graphs. The cost of combinatorial simplicity is the necessity to consider gluings of an arbitrary number of polygons, rather than gluings of just one polygon.

Recursions similar to those expressed in~\cref{eq:harerzagier,eq:donorbury} have appeared in other contexts, such as random matrix theory~\cite{led}. However, it is not true in general that these recursions involve three terms, as in the examples above. In the context of enumerative geometry and mathematical physics, the analogues of $a_g(d)$ and $b_g(d)$ are known as {\em 1-point invariants}, since they often arise as expansion coefficients of 1-point correlation functions. And more generally, the enumeration of ways to glue $n$ polygons to obtain surfaces produces numbers known as {\em $n$-point invariants}. The preceding discussion motivates us to make the following definition.

\begin{definition} \label{def:1pointrecursion}
We say that the collection of numbers $n_g(d) \in \mathbb{C}$ for integers $g \geq 0$ and $d \geq 1$ satisfies a {\em 1-point recursion} if there exist integers $i_{\max}$, $j_{\max}$ and complex polynomials $p_{ij}$, not all equal to zero, such that
\begin{equation} \label{eq:1pointrecursion}
\sum_{i=0}^{i_{\max}} \sum_{j=0}^{j_{\max}} p_{ij}(d) \, n_{g-i}(d-j) = 0,
\end{equation}
whenever all terms in the equation are defined.
\end{definition}

The current work is motivated by the following interrelated questions.
\begin{itemize}
\item What unified proofs of 1-point recursions exist, which encompass both \cref{eq:harerzagier,eq:donorbury}?
\item How universal is the the notion of a 1-point recursion?
\end{itemize}

We partially answer these questions by first observing that the enumeration of both ribbon graphs and dessins d'enfant can be expressed in terms of Schur functions. This suggests that 1-point recursions may exist more generally for problems that may be defined in an analogous way. Thus, we consider {\em double Schur function expansions} of the following form.
\begin{align} \label{eq:partitionfunction}
Z(\mathbf{p}; \mathbf{q}; \h) &= \sum_{\lambda \in {\mathcal P}} s_\lambda(p_1, p_2, \ldots) \, s_\lambda(\tfrac{q_1}{\h}, \tfrac{q_2}{\h}, \ldots) \, \prod_{\Box \in \lambda} G(c(\Box) \h) \notag \\
&= \exp \bigg[ \sum_{g=0}^\infty \sum_{n=1}^\infty \sum_{d_1, d_2, \ldots, d_n = 1}^\infty N_{g,n}(d_1, d_2, \ldots, d_n) \, \frac{\h^{2g-2+n}}{n!} \, p_{d_1} p_{d_2} \cdots p_{d_n} \bigg]
\end{align}
The precise meaning of all terms appearing in this equation will be discussed in \cref{sec:problems}. It currently suffices to observe that the ``enumerative problem'' is stored in the numbers $N_{g,n}(d_1, d_2, \ldots, d_n)$ appearing in the second line. These numbers have been recently studied in the work of Alexandov, Chapuy, Eynard and Harnad~\cite{ale-cha-eyn-har}, where they are given a combinatorial interpretation and referred to as {\em weighted Hurwitz numbers}.

The primary contribution of this paper is an approach to proving 1-point recursions for such ``enumerative problems''. In particular, our main result is the following.

\begin{theorem} \label{thm:main}
Let $G(z) \in \mathbb{C}(z)$ be a rational function with $G(0) = 1$ and suppose that finitely many terms of the sequence $q_1, q_2, q_3, \ldots$ of complex numbers are non-zero. Then the numbers $n_g(d) = d \, N_{g,1}(d)$ defined by \cref{eq:partitionfunction} satisfy a 1-point recursion.
\end{theorem}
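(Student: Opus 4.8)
The plan is to extract a closed-form expression for the $1$-point invariants $n_g(d) = d\,N_{g,1}(d)$ directly from the partition function, and then show that this expression satisfies a finite-order linear recursion with polynomial coefficients. Since $N_{g,1}(d)$ is a connected $1$-point invariant, taking the logarithm in \cref{eq:partitionfunction} means that $\sum_{g} N_{g,1}(d)\,\h^{2g-1}$ is simply the coefficient of $p_d$ in $\log Z$; but because there is only one $p$-variable to extract, the connected and disconnected $1$-point generating functions agree, so $n_g(d)$ can be read off from $Z$ itself. Setting all $p_i$ except a formal marker to zero and using the Schur function expansion, I would isolate the contribution of hook partitions $\lambda = (d-k, 1^k)$, since $s_\lambda(p_1,p_2,\ldots)$ restricted to the relevant specialisation picks out precisely the hooks of size $d$. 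This should yield a formula of the shape
\begin{equation} \label{eq:proofsketchformula}
n_g(d) = [\h^{2g-1}] \sum_{k=0}^{d-1} (-1)^k \, s_{(d-k,1^k)}\Big(\tfrac{q_1}{\h}, \tfrac{q_2}{\h}, \ldots\Big) \prod_{\Box \in (d-k,1^k)} G(c(\Box)\h),
\end{equation}
where the content product over a hook is an explicit product of $G$ evaluated at consecutive integer arguments scaled by $\h$.

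Next I would exploit the hypotheses. Because only finitely many $q_j$ are non-zero, the Schur function $s_{(d-k,1^k)}$ evaluated at the $q$-specialisation becomes, via the Jacobi--Trudi or the dual Cauchy/hook-content machinery, a concrete quasi-polynomial expression in $d$ and $k$ with controlled $\h$-dependence; and because $G \in \mathbb{C}(z)$ is rational with $G(0)=1$, the content product $\prod_{\Box} G(c(\Box)\h)$ is a ratio of products of linear factors in $\h$ with integer-shifted arguments. The key structural observation is that the summand in \cref{eq:proofsketchformula} satisfies a \emph{first-order} recursion in the hook-leg index $k$ whose coefficients are rational in $d$, $k$ and $\h$ --- this is where the rationality of $G$ does the real work, since the ratio of consecutive content products telescopes to a single rational factor. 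From such a hypergeometric-type term, standard creative telescoping (Gosper/Zeilberger-style reasoning) produces a linear recursion in $d$ for the sum over $k$, with coefficients polynomial in $d$ and rational in $\h$; clearing denominators and extracting the $\h^{2g-1}$ coefficient then transforms the $\h$-dependence into finitely many shifts in $g$, yielding precisely a relation of the form \cref{eq:1pointrecursion}.

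Concretely, the steps in order are: (i) reduce $n_g(d)$ to the single-variable coefficient extraction and obtain the hook-sum formula \cref{eq:proofsketchformula}; (ii) using $G$ rational, write the ratio of consecutive summands (in $k$) as an explicit rational function of $k$, $d$, $\h$, certifying that the summand is \emph{proper hypergeometric} in the relevant variables; (iii) apply creative telescoping in $d$ to obtain a recursion $\sum_i P_i(d,\h)\, n_{\,\cdot}(d-i) = 0$ with polynomial-in-$d$, rational-in-$\h$ coefficients, after summing the telescoped certificate over $k$ and checking that boundary terms at $k=0$ and $k=d-1$ contribute admissibly; (iv) clear $\h$-denominators and expand each coefficient as a Laurent polynomial in $\h$, so that matching powers of $\h$ converts multiplication by powers of $\h^{2}$ into shifts $g \mapsto g-i$, delivering the polynomials $p_{ij}(d)$. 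The main obstacle I anticipate is step (iii): ensuring that the creative telescoping genuinely terminates with coefficients \emph{polynomial} in $d$ (not merely rational) and that the denominators introduced by $G$'s poles do not obstruct the passage from an $\h$-recursion to a $g$-recursion --- in particular, one must verify that clearing the rational function $G$ uniformly in $d$ leaves coefficients that remain polynomial after the $\h$-expansion, and that the telescoping certificate does not blow up at the hook endpoints. This is precisely the place where the finiteness of the non-zero $q_j$ and the rationality of $G$ must be invoked together to guarantee a finite-order, polynomial-coefficient recursion rather than an infinite one.
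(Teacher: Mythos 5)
Your step (i) is sound and agrees with the paper's \cref{lem:1pointpart}: the linear-in-$\mathbf{p}$ part of $Z$ coincides with that of $\log Z$, and the principal specialisation kills all non-hook partitions (the paper proves this via the hook-content formula, since a non-hook has two boxes of content $0$). The genuine gap is in your step (ii). You claim the summand satisfies a \emph{first-order} recursion in the hook-leg index $k$, i.e.\ is (proper) hypergeometric, so that Gosper--Zeilberger creative telescoping applies. This is true for the content product, whose consecutive ratio is $G(k\h)/G((k-d)\h)$, and it is true for the Schur factor in the simple case $\mathbf{q}=(1,0,0,\ldots)$, where $s_{(k,1^{d-k})}(s,0,\ldots)=\binom{d-1}{k-1}s^d/d!$ by \cref{lem:hookschur} --- indeed in that case your plan essentially reproduces the paper's \cref{thm:requals1}. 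But for general finitely supported $\mathbf{q}$ with $r\geq 2$ non-zero entries, $s_{(k,1^{d-k})}(\tfrac{q_1}{\h},\ldots,\tfrac{q_r}{\h})$ is \emph{not} hypergeometric in $k$, nor is it ``quasi-polynomial in $d$ and $k$'' as you assert. By Jacobi--Trudi it is the sum $\sum_{j}(-1)^{j+1}h_{k-j}e_{d-k+j}$, and the sequences $h_n(\tfrac{q_1}{\h},\ldots,\tfrac{q_r}{\h})$, $e_n(\tfrac{q_1}{\h},\ldots,\tfrac{q_r}{\h})$ satisfy order-$r$ recursions (e.g.\ $\h(n+1)h_{n+1}=\sum_{k=1}^r q_k h_{n+1-k}$); for $r=2$, $q_1=q_2=1$, $\h=1$ one gets $h_n = I_n/n!$ with $I_n$ the involution numbers, $I_{n+1}=I_n+nI_{n-1}$, which has no rational consecutive ratio. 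So the summand is holonomic but not hypergeometric, the ratio certificate in your step (ii) does not exist, and Zeilberger-style telescoping in its proper-hypergeometric form never gets off the ground precisely in the generality the theorem demands.

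The repair is to upgrade from hypergeometric telescoping to closure properties of holonomic (D-finite) objects, which is what the paper does: \cref{lem:holonomic} shows a 1-point recursion exists iff $\sum_{d,g} n_g(d)\h^{2g-1}x^d$ is holonomic over $\mathbb{C}(\h)$ (this also subsumes your step (iv), including the parity bookkeeping converting powers of $\h^2$ into shifts in $g$); the Jacobi--Trudi expansion turns the hook sum into $n(d)=\sum_{k=1}^d a_k b_{d-k}\sum_{\ell=0}^{k-1}u_\ell v_{d-\ell}$ with $a,b$ hypergeometric (rationality of $G$) and $u,v$ holonomic of order $r$ (finiteness of $\mathbf{q}$); and the bespoke \cref{lem:mainlem}, proved via multivariate closure under Cauchy products, primitive diagonals and cone restrictions (\cref{diagtheorem}), certifies holonomicity of this nested convolution. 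Alternatively, holonomic creative telescoping in the style of Chyzak/Zeilberger's holonomic systems approach could replace your step (iii), but as written your proposal proves only the $r=1$ case. Your anticipated obstacle (polynomiality of coefficients in $d$, endpoint behaviour of the certificate) is not where the argument actually breaks; the hypergeometricity assumption is.
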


The proof of this theorem will be taken up in \cref{sec:recursion}, where we use the theory and language of holonomic sequences and functions. The basic observation is \cref{lem:holonomic}, which states that a 1-point recursion exists for $n_g(d)$ if and only if the sequence $n_d = \displaystyle\sum_g n_g(d) \, \h^{2g-1}$ is holonomic over $\mathbb{C}(\h)$.

\begin{example}
If we take $G(z) = 1+z$ and $\mathbf{q} = (0, 1, 0, 0, \ldots)$ in \cref{eq:partitionfunction}, then we recover the enumeration of ribbon graphs introduced earlier. In other words, we have $n_g(d) = a_g(d)$, so \cref{thm:main} asserts the existence of a 1-point recursion for the numbers $a_g(d)$.

Analogously, if we take $G(z) = (1+z)^2$ and $\mathbf{q} = (1, 0, 0, \ldots)$ in \cref{eq:partitionfunction}, then we recover the enumeration of dessins d'enfant introduced earlier. In other words, we have $n_g(d) = b_g(d)$, so \cref{thm:main} asserts the existence of a 1-point recursion for the numbers $b_g(d)$. 
\end{example}

One of the features of the theory of holonomic sequences and functions is that there are readily available algorithms to carry out computations, such as those found in the \texttt{gfun} package for {\sc Maple}~\cite{sal-zim}. Our proof of \cref{thm:main} not only asserts the existence of 1-point recursions, but also yields an algorithm to produce them. We use this to determine explicit 1-point recursions for:
\begin{itemize}
\item the enumeration of 3-hypermaps and 3-BMS numbers (see \cref{prop:hypermapsbms}); and

\item the enumeration of monotone Hurwitz numbers (see \cref{prop:monotonehurwitz}).
\end{itemize}

\begin{example}
The monotone Hurwitz numbers satisfy the following 1-point recursion.
\[
d \, m_g(d) = 2(2d-3) \, m_g(d-1) + d(d-1)^2 \, m_{g-1}(d)
\]
\end{example}

As a partial converse to \cref{thm:main}, we prove that there are enumerative problems governed by double Schur function expansions that do not satisfy a 1-point recursion. Of particular note is the case of simple Hurwitz numbers, which arise from \cref{eq:partitionfunction} by taking $G(z) = \exp(z)$ and $\mathbf{q} = (1, 0, 0, \ldots)$.

\begin{proposition}
The simple Hurwitz numbers do not satisfy a 1-point recursion.
\end{proposition}

Underlying our work are the related notions of integrability and topological recursion. Regarding the former, we only remark that the double Schur function expansions of \cref{eq:partitionfunction} are examples of hypergeometric tau-functions for the Toda integrable hierarchy~\cite{orl-shc}. The topological recursion can be used to produce enumerative invariants from a spectral curve, which is essentially a plane algebraic curve satisfying some mild conditions and equipped with certain extra data. From the work of Alexandrov, Chapuy, Eynard and Harnad~\cite{ale-cha-eyn-har}, we know that the assumptions of \cref{thm:main} lead to numbers $N_{g,n}(d_1, d_2, \ldots, d_n)$ in \cref{eq:partitionfunction} that can be calculated via the topological recursion. Furthermore, the associated spectral curve is an explicit rational curve, which depends on the particular choice of $G(z)$ and~$\mathbf{q}$. Combining \cref{thm:main} with the aforementioned work of Alexandrov et al. suggests the following conjecture, whose precise statement will later appear as \cref{con:mainprecise}.

\begin{conjecture} \label{con:main}
Topological recursion on a rational spectral curve produces invariants that satisfy a 1-point recursion. 
\end{conjecture}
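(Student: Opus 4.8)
The plan is to prove the conjecture by the same route used for \cref{thm:main}: reduce it, via \cref{lem:holonomic}, to a statement about holonomicity. Given a rational spectral curve, topological recursion produces a tower of meromorphic $1$-forms $\omega_{g,1}$ on the curve, and the $1$-point invariants $n_g(d)$ are extracted as the coefficients in a suitable local expansion of $\omega_{g,1}$ about a chosen point. By \cref{lem:holonomic} it then suffices to show that the generating series $n_d = \sum_g n_g(d) \, \h^{2g-1}$ is holonomic over $\mathbb{C}(\h)$, that is, that the sequence $(n_d)_{d \geq 1}$ satisfies a linear recurrence in $d$ whose coefficients are polynomial in $d$ and rational in $\h$.

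The bridge between the continuous and discrete pictures would be the perturbative wave function $\psi(x, \h)$ attached to the spectral curve, whose logarithmic $x$-derivative assembles the $1$-point forms into $\sum_g \h^{2g-1} \omega_{g,1}$. First I would establish that $\psi$ satisfies a \emph{quantum curve}: a linear differential operator $\widehat{P}(x, \h\partial_x)$ of finite order, with coefficients polynomial in $x$ and $\h$, that annihilates $\psi$. This is the step where rationality is essential, since the classical curve is cut out by a polynomial relation $P(x,y) = 0$, and its quantisation is expected to furnish such an operator. Granting the quantum curve, $\psi$ is $D$-finite in $x$ over $\mathbb{C}(\h)$, and the classical equivalence between $D$-finite functions and $P$-recursive sequences converts the ODE into a linear recurrence for the expansion coefficients; tracking the $\h$-grading shows that this recurrence has exactly the polynomial-in-$d$, rational-in-$\h$ shape demanded by \cref{def:1pointrecursion}. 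Applying \cref{lem:holonomic} then yields the $1$-point recursion, and I would expect the precise hypotheses, on the choice of expansion point and the normalisation of $\psi$, to be those recorded in \cref{con:mainprecise}.

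The main obstacle is the existence of the quantum curve in the generality required. Quantum curves have been constructed only for restricted families of spectral curves, and no theorem is known that produces a finite-order operator for an arbitrary rational curve; this is precisely why the statement remains a conjecture. Moreover, the argument cannot proceed order-by-order in $\h$, because a separate ODE for each $\omega_{g,1}$ would only give $D$-finiteness genus by genus, and $D$-finiteness is not preserved under the infinite sum over $g$. What is needed is a \emph{single} all-order operator $\widehat{P}$ with controlled $\h$-dependence, so that the holonomy is uniform in $g$. A secondary but genuine difficulty is making the definition of $n_g(d)$ canonical enough, by fixing the local coordinate and the treatment of the unstable contributions $\omega_{0,1}$ and $\omega_{0,2}$, that ``holonomic in $d$'' is well posed; I expect these choices to be dictated by the comparison with the weighted Hurwitz setting of \cref{eq:partitionfunction}.
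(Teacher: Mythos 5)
First, a point of comparison: the statement you were asked about is a \emph{conjecture} in the paper, and the paper offers no proof of it --- only evidence, namely that the Alexandrov--Chapuy--Eynard--Harnad family of rational spectral curves is covered by \cref{thm:main2}, and that the Chekhov--Norbury curve $x = z + 1/z$, $y = z/(z^2-1)$ admits a 1-point recursion via an identity of Gaberdiel--Klemm--Runkel. So there is no proof to match yours against, and you correctly flag that your own argument is conditional: it rests on the existence of a quantum curve (a finite-order operator $\widehat{P}(x, \h\partial_x)$ with controlled $\h$-dependence) for an arbitrary rational spectral curve, which is open. A conditional sketch of this kind is a reasonable research programme but is not a proof, so the conjecture would remain a conjecture even granting everything else in your outline.

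There is, however, a more concrete flaw in your bridge, independent of the open existence question. You propose that the logarithmic $x$-derivative of the wave function $\psi$ ``assembles the $1$-point forms into $\sum_g \h^{2g-1}\omega_{g,1}$.'' It does not: $\psi$ is built from the \emph{principal specialisation} $p_i = x^i$ of the partition function, so $\log\psi$ at order $\h^{2g-1}$ collects contributions from \emph{all} $N_{g',n}$ with $2g'-2+n = 2g-1$, i.e.\ from every odd $n$, evaluated on the diagonal $x_1 = \cdots = x_n = x$ --- not the $n=1$ invariants alone. The paper is explicit about this distinction: in the double Schur setting of \cref{eq:partitionfunction}, the wave function reduces to a sum over \emph{1-part} partitions, whereas the 1-point function is extracted by the deformation $\left[\frac{\partial}{\partial s} Z(s\mathbf{p};\mathbf{q};\h)\right]_{s=0}$, which by \cref{lem:schurevaluation} reduces to a sum over \emph{hook} partitions (\cref{lem:1pointpart}); Section 6.2 contrasts precisely these two specialisations. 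Consequently, even an all-order quantum curve would give D-finiteness of $\psi$ in $x$, hence P-recursiveness of the coefficients of $\psi$ --- but the class of holonomic functions is not closed under taking logarithms or under extracting the connected 1-point part, so holonomicity of $F(x,\h) = \sum_{d,g} n_g(d)\,\h^{2g-1} x^d$ (the hypothesis needed for \cref{lem:holonomic}, as in \cref{thm:main}) does not follow. Any viable attack along these lines would need a differential or difference equation for the hook-partition specialisation itself, not for the wave function; finding such an equation for a general rational spectral curve is exactly the content of \cref{con:mainprecise}.
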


In practice, one may only be interested in 1-point functions, as is the case for the problem originally studied by Harer and Zagier~\cite{har-zag}. Calculating these via the topological recursion requires the knowledge of $n$-point functions for all positive integers $n$. Thus, a 1-point recursion can provide an effective tool for calculation, from both the practical and theoretical perspectives. For instance, a 1-point recursion can lead to direct information regarding the structure of 1-point invariants --- see \cref{cor:monotonehurwitz} for an example of this phenomenon.

We conclude the paper with some evidence towards the conjecture above as well as a brief discussion on the related notion of quantum curves. In the context of the double Schur function expansions studied in this paper, quantum curves arise from a specialisation of \cref{eq:partitionfunction} that reduces the summation over all partitions to a summation over 1-part partitions. On the other hand, we will observe that 1-point recursions arise from a different specialisation that reduces it to a summation over hook partitions.

The structure of the paper is as follows.
\begin{itemize}
\item In \cref{sec:problems}, we introduce four classes of enumerative problems that will provide motivation for and examples of our main results. These are: the enumeration of ribbon graphs and dessins d'enfant; Bousquet-M\'{e}lou--Schaeffer numbers; Hurwitz numbers; and monotone Hurwitz numbers. A common thread between these problems is that their so-called partition functions have double Schur function expansions.

\item In \cref{sec:schur}, we precisely define double Schur function expansions and deduce an expression for their 1-point invariants. We also present certain evaluations of Schur functions that will subsequently prove useful.

\item In \cref{sec:recursion}, we recall the notion of holonomicity and relate it to the existence of 1-point recursions. This is used to prove \cref{thm:main} on the existence of 1-point recursions, which then leads to an algorithm for 1-point recursions.

\item In \cref{sec:applications}, we return to the four classes of enumerative problems introduced in \cref{sec:problems}. For three of these, we present examples of 1-point recursions, but for the case of simple Hurwitz numbers, we prove that no such recursion exists. We also demonstrate how 1-point recursions can be used to prove structural results, and sometimes explicit formulas, for 1-point invariants.

\item In \cref{sec:discussion}, we discuss relations between our work and the theory of topological recursion. In particular, we formulate a precise statement of \cref{con:main}, which loosely states that is a 1-point recursion for the invariants arising from topological recursion applied to a rational spectral curve. Some evidence toward this conjecture is presented, along with some remarks on the similarity between our calculation of 1-point recursions and the calculation of quantum curves.
\end{itemize}

\section{Enumerative problems} \label{sec:problems}

Our work is primarily motivated by the Harer--Zagier formula for the enumeration of ribbon graphs with one face~\cite{har-zag}, as well as the analogue for the enumeration of dessins d'enfant with one face~\cite{che,do-nor}. Apart from the obvious similarities between these two problems, they also both arise from double Schur function expansions. So we propose to study the broad class of ``enumerative problems'' stored in double Schur function expansions of the general form
\[
Z(\mathbf{p}; \mathbf{q}; \h) = \sum_{\lambda \in {\mathcal P}} s_\lambda(p_1, p_2, \ldots) \, s_\lambda(\tfrac{q_1}{\h}, \tfrac{q_2}{\h}, \ldots) \, F_\lambda(\h).
\]
Here, $\mathcal P$ denotes the set of all partitions (including the empty partition), $s_\lambda(p_1, p_2, \ldots)$ denotes the Schur function expressed in terms of power sum symmetric functions, and $F_\lambda(\h)$ is a formal power series in $\h$ for each partition $\lambda$. We use the shorthand $\mathbf{p} = (p_1, p_2, p_3, \ldots)$ and $\mathbf{q} = (q_1, q_2, q_3, \ldots)$ throughout the paper. Following the mathematical physics literature, we will refer to such power series as {\em partition functions} (although we note that this name does not refer to the integer partitions that appear in the equation above).

For our applications, we will take $F_\lambda(\h)$ to have the so-called {\em content product} form
\[
F_\lambda(\h) = \prod_{\Box \in \lambda} G(c(\Box) \h).
\]
Here, the product is over the boxes in the Young diagram for $\lambda$, $G(z) \in \mathbb{C}[[z]]$ is a formal power series normalised to have constant term 1, and $c(\Box)$ denotes the content of the box. Recall that the {\em content} of a box in row $i$ and column $j$ of a Young diagram is the integer $j-i$.

The partition function can be expressed as 
\[
Z(\mathbf{p}; \mathbf{q}; \h) = \exp \bigg[ \sum_{g=0}^\infty \sum_{n=1}^\infty \sum_{d_1, d_2, \ldots, d_n = 1}^\infty N_{g,n}(d_1, d_2, \ldots, d_n) \, \frac{\h^{2g-2+n}}{n!} \, p_{d_1} p_{d_2} \cdots p_{d_n} \bigg],
\]
where $N_{g,n}(d_1, d_2, \ldots, d_n) \in \mathbb{C}[q_1, q_2, \ldots]$. For various natural choices of the formal power series $G(z)$ and the weights $q_1, q_2, q_3, \ldots$, the quantity $N_{g,n}(d_1, d_2, \ldots, d_n)$ enumerates objects of combinatorial interest. We will be primarily concerned with the 1-point invariants that arise when $n = 1$. In particular, we consider the numbers $n_g(d) = d \, N_{g,1}(d)$, with the goal of determining whether or not there exists a 1-point recursion governing these numbers.

We now proceed to examine four classes of combinatorial problems that arise from double Schur function expansions. Readers looking for the general description of double Schur function expansions and their 1-point recursions may wish to skip directly to \cref{sec:schur}.

\subsection{Ribbon graphs and dessins d'enfant}

A ribbon graph --- also known as a map, embedded graph, fat graph or rotation system --- can be thought of as the 1-skeleton of a cell decomposition of an oriented compact surface. Ribbon graphs arise naturally in various areas of mathematics, including topological graph theory, moduli spaces of Riemann surfaces, and matrix models~\cite{lan-zvo}. A more formal definition is the following.

\begin{definition}
A {\em ribbon graph} is a finite connected graph equipped with a cyclic ordering of the half-edges meeting at each vertex. An {\em isomorphism} between two ribbon graphs is a bijection between their sets of half-edges that preserves all adjacencies, as well as the cyclic ordering of the half-edges meeting at each vertex.
\end{definition}

The underlying graph of a ribbon graph is precisely the 1-skeleton of a cell decomposition of a compact connected orientable surface. The cyclic ordering of the half-edges meeting at every vertex allows one to reconstruct the 2-cells and hence, the underlying oriented compact surface. Thus, one can assign a genus to a ribbon graph.

Alternatively, one can encode a ribbon graph as a pair of permutations $(\tau_0, \tau_1)$ such that $\tau_1$ has cycle type $(2, 2, \ldots, 2)$, and $\tau_0$ and $\tau_1$ generate a transitive subgroup of the symmetric group. We think of these permutations as acting on the half-edges of the ribbon graph, where $\tau_0$ rotates half-edges anticlockwise around their adjacent vertex and $\tau_1$ swaps half-edges belonging to the same underlying edge. More generally, one can consider an {\em $m$-hypermap} as a pair of permutations $(\tau_0, \tau_1)$ such that $\tau_1$ has cycle type $(m, m, \ldots, m)$, and $\tau_0$ and $\tau_1$ generate a transitive subgroup of the symmetric group. For further information on these topics, one may consult the book of Lando and Zvonkin~\cite{lan-zvo}.

\begin{definition}
Define the {\em ribbon graph number} $A_{g,n}(d_1, d_2, \ldots, d_n)$ to be the weighted count of ribbon graphs of genus $g$ with $n$ labelled faces of degrees $d_1, d_2, \ldots, d_n$. The weight of a ribbon graph~$\Gamma$ is $\frac{1}{|\text{Aut}~\Gamma|}$, where $\text{Aut}~\Gamma$ denotes the group of face-preserving automorphisms. The corresponding 1-point invariant is denoted $a_g(d) = 2d \, A_{g,1}(2d)$. (The factor of $2d$ in this definition provides agreement with the work of Harer and Zagier~\cite{har-zag} and produces a simpler 1-point recursion.)

We analogously define $A_{g,n}^{m}(d_1, d_2, \ldots, d_n)$ to be the weighted count of $m$-hypermaps of genus $g$ with $n$ labelled faces of degrees $d_1, d_2, \ldots, d_n$. The corresponding 1-point invariant is denoted $a^{m}_g(d) = md \, A^{m}_{g,1}(md)$.
\end{definition}

The following result is a consequence of the work of Alexandrov, Lewanski and Shadrin, in which they show an equivalence between counting hypermaps and the notion of strictly monotone orbifold Hurwitz numbers~\cite{ale-lew-sha}.

\begin{lemma} \label{lem:ribbongraphs}
The ribbon graph numbers arise from taking $\mathbf{q} = (0, 1, 0, 0, \ldots)$ and $G(z) = 1+z$ in \cref{eq:partitionfunction}. In other words, we have
\begin{align*}
Z(\mathbf{p}; \mathbf{q}; \h) &= \sum_{\lambda \in {\mathcal P}} s_\lambda(p_1, p_2, \ldots) \, s_\lambda(0, \tfrac{1}{\h}, 0, \ldots) \, \prod_{\Box \in \lambda} (1 + c(\Box) \h) \\
&= \exp \bigg[ \sum_{g=0}^\infty \sum_{n=1}^\infty \sum_{d_1, d_2, \ldots, d_n = 1}^\infty A_{g,n}(d_1, d_2, \ldots, d_n) \, \frac{\h^{2g-2+n}}{n!} \, p_{d_1} p_{d_2} \cdots p_{d_n} \bigg].
\end{align*}
More generally, the enumeration of $m$-hypermaps arises from keeping $G(z) = 1+z$, but taking $q_m = 1$ and $q_i = 0$ for $i \neq m$.
\end{lemma}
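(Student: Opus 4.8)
The plan is to expand both Schur functions in the power-sum basis and reduce the identity to the classical character-theoretic count of factorisations in the symmetric group. Writing $s_\lambda(\mathbf p) = \sum_{\nu \vdash |\lambda|} z_\nu^{-1} \chi^\lambda_\nu\, p_\nu$, the monomial $p_{d_1}\cdots p_{d_n}$ is extracted by the face-type $\nu = (d_1, \ldots, d_n)$, weighted by $\chi^\lambda_\nu / z_\nu$. For the second factor I would use the specialisation $q_i = \delta_{im}$: since only the power sum $p_{(m^k)}$ survives, one gets $s_\lambda(0, \ldots, \tfrac{1}{\h}, \ldots) = \h^{-k}\, \chi^\lambda_{(m^k)} / z_{(m^k)}$ when $|\lambda| = mk$ and $0$ otherwise. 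This is precisely the kind of Schur evaluation to be recorded in \cref{sec:schur}, and it encodes the hyperedge-type $(m^k)$ of an $m$-hypermap with $k = |\lambda|/m$ hyperedges.

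The content product is the heart of the matter. I would expand $\prod_{\Box \in \lambda}(1 + c(\Box)\h) = \sum_{r \geq 0} e_r\big(c(\Box) : \Box \in \lambda\big)\, \h^r$ and invoke the Jucys--Murphy elements $J_1, \ldots, J_N \in \mathbb{C}[S_N]$, whose elementary symmetric polynomial $e_r(J_1, \ldots, J_N)$ acts on the irreducible module $V^\lambda$ by the scalar $e_r(c(\Box) : \Box \in \lambda)$, while in the group algebra it equals the sum of all strictly monotone products of $r$ transpositions. This is the mechanism by which the choice $G(z) = 1+z$ produces strictly monotone Hurwitz numbers, and the substance of the Alexandrov--Lewanski--Shadrin correspondence is that these strictly monotone (orbifold) counts coincide with the weighted enumeration of $m$-hypermaps.

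Combining these ingredients through the orthogonality of irreducible characters (the Frobenius formula), the coefficient of $p_{d_1}\cdots p_{d_n}$ in $Z$ becomes a weighted sum over a permutation $\tau_\infty$ of face-type $\nu$, a permutation $\tau_1$ of hyperedge-type $(m^k)$, and a strictly monotone transposition factorisation supplying the vertex permutation $\tau_0$ with $\tau_0 \tau_1 \tau_\infty = \mathrm{id}$ --- that is, the disconnected weighted count of $m$-hypermaps with the prescribed face degrees, each carrying its weight $1/|\mathrm{Aut}\,\Gamma|$. Passing from $Z$ to $\log Z$ then restricts to connected hypermaps and installs the factor $1/n!$, via the standard exponential formula relating disconnected and connected enumerations.

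The step I expect to be the main obstacle --- beyond importing the content-product identity above --- is matching the $\h$-grading to the genus. Each term carries $\h^{r-k}$, and one must verify that the Euler relation $V - E + F = 2 - 2g$, with $E = k$ hyperedges, $F = n$ faces, and $r$ the number of transpositions, forces $r - k = 2g - 2 + n$ exactly when the underlying surface has genus $g$; equivalently, that $r$ is pinned down by $g$, $n$ and the $d_i$ via Riemann--Hurwitz. Once this bookkeeping is in place, the general $m$-hypermap statement follows, and specialising to $m = 2$ (so that $\tau_1$ has cycle type $(2, 2, \ldots, 2)$) with $a_g(d) = 2d\, A_{g,1}(2d)$ recovers the ribbon graph numbers.
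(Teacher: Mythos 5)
Your proposal is correct, but it takes a genuinely different route from the paper, which offers no computation at all for this lemma: it simply cites the equivalence of Alexandrov--Lewanski--Shadrin between hypermap counting and strictly monotone orbifold Hurwitz numbers. You instead reconstruct the mechanism behind that citation --- the character expansion of both Schur factors, the identification of the content product with elementary symmetric polynomials of Jucys--Murphy elements, Frobenius orthogonality to produce factorisation counts, and the exponential formula --- citing ALS only for the final step matching strictly monotone counts with hypermaps. Two remarks on your version. First, even that last citation can be dispensed with: Jucys' full generating identity $\prod_{i=1}^{N}(z+J_i) = \sum_{\sigma \in S_N} z^{k(\sigma)}\,\sigma$ shows that the coefficient of $\sigma$ in $e_r(J_1,\ldots,J_N)$ is $1$ if $r = N - k(\sigma)$ and $0$ otherwise, so summing the content product against $\h^r$ directly produces $\sum_{\tau_0 \in S_N} \h^{\,N-k(\tau_0)}\,\tau_0$; the resulting triples $(\tau_0,\tau_1,\tau_\infty)$ with $\tau_1$ of cycle type $(m,\ldots,m)$ are then \emph{exactly} the pairs $(\tau_0,\tau_1)$ by which the paper defines $m$-hypermaps, with faces read off from $\tau_\infty = (\tau_0\tau_1)^{-1}$, so the monotone-factorisation layer collapses and the proof becomes self-contained. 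Second, the genus bookkeeping you flagged as the main obstacle does close up: with $N = mk$ darts, the $\h$-power of a term is $r - k$ where $r = N - k(\tau_0)$, and the Euler relation for the associated bipartite cell decomposition, $k(\tau_0) + k - mk + n = 2 - 2g$, gives precisely $r - k = 2g - 2 + n$, matching the grading in \cref{eq:partitionfunction}. What your approach buys is an actual proof where the paper has a pointer to the literature; what the paper's citation buys is brevity and the combinatorial interpretation (weighted paths in the Cayley graph) established in the cited works, including the orbit-stabiliser and exponential-formula details that your outline, reasonably, leaves standard.
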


Ribbon graphs can be considered as special cases of the more general notion of dessins d'enfant.

\begin{definition}
A {\em dessin d'enfant} is a ribbon graph whose vertices are coloured black and white such that every edge is adjacent to one vertex of each colour. An {\em isomorphism} between two dessins d'enfant is an isomorphism between their underlying ribbon graphs that preserves the vertex colouring.
\end{definition}

One obtains the notion of a ribbon graph by considering dessins d'enfant in which every black vertex has degree two. In that case, one can simply remove the degree two vertex and amalgamate the adjacent two edges into a single edge, to obtain a ribbon graph. Similarly, dessins d'enfant in which every black vertex has degree $m$ give rise to $m$-hypermaps.

\begin{definition}
Define the {\em dessin d'enfant number} $B_{g,n}(d_1, d_2, \ldots, d_n)$ to be the weighted count of dessins d'enfant of genus $g$ with $n$ labelled faces of degrees $2d_1, 2d_2, \ldots, 2d_n$. The weight of a dessin d'enfant~$\Gamma$ is $\frac{1}{|\text{Aut}~\Gamma|}$, where $\text{Aut}~\Gamma$ denotes the group of face-preserving automorphisms. The corresponding 1-point invariant is denoted $b_g(d) = d \, B_{g,1}(d)$.
\end{definition}

More generally, we can refine the enumeration by weighting with parameters that record the degrees of the black vertices.

\begin{definition}
Define the {\em double dessin d'enfant number} $\overline{B}_{g,n}(d_1, d_2, \ldots, d_n)$ to be the analogous weighted count of dessins d'enfant, where the weight of a dessin d'enfant $\Gamma$ with black vertices of degrees $\lambda_1, \lambda_2, \ldots, \lambda_\ell$ is $\frac{q_{\lambda_1} q_{\lambda_2} \cdots q_{\lambda_\ell}}{|\text{Aut}~\Gamma|}$. The corresponding 1-point invariant is denoted $\overline{b}_g(d) = d \, \overline{B}_{g,1}(d)$.
\end{definition}

In the above definition, $q_1, q_2, q_3, \ldots$ are indeterminates, so we have $\overline{B}_{g,n}(d_1, d_2, \ldots, d_n) \in \mathbb{Q}[q_1, q_2, q_3, \ldots]$. The following result generalises \cref{lem:ribbongraphs}.

\begin{lemma}
The double dessin d'enfant numbers arise from taking $\mathbf{q} = (q_1, q_2, q_3, \ldots)$ and $G(z) = 1+z$ in \cref{eq:partitionfunction}. In other words, we have
\begin{align*}
Z(\mathbf{p}; \mathbf{q}; \h) &= \sum_{\lambda \in {\mathcal P}} s_\lambda(p_1, p_2, \ldots) \, s_\lambda(\tfrac{q_1}{\h}, \tfrac{q_2}{\h}, \ldots) \, \prod_{\Box \in \lambda} (1 + c(\Box) \h) \\
&= \exp \bigg[ \sum_{g=0}^\infty \sum_{n=1}^\infty \sum_{d_1, d_2, \ldots, d_n = 1}^\infty \overline{B}_{g,n}(d_1, d_2, \ldots, d_n) \, \frac{\h^{2g-2+n}}{n!} \, p_{d_1} p_{d_2} \cdots p_{d_n} \bigg].
\end{align*}
\end{lemma}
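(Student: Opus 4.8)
The plan is to realise the double dessin d'enfant numbers through the standard permutation model and then to match the resulting character sum against the double Schur function expansion. Recall that a dessin d'enfant with $d$ edges is encoded by a triple of permutations $(\sigma_\bullet, \sigma_\circ, \sigma_*)$ in the symmetric group $S_d$ satisfying $\sigma_\bullet \sigma_\circ \sigma_* = \mathrm{id}$ and generating a transitive subgroup, where the cycles of $\sigma_\bullet$ record the black vertices, the cycles of $\sigma_\circ$ record the white vertices, and the cycles of $\sigma_*$ record the faces, a face of degree $2d_i$ corresponding to a cycle of length $d_i$. Counting such triples of a prescribed combinatorial type up to simultaneous conjugation produces the automorphism-weighted count $\overline{B}_{g,n}$, with the weight $q_{\lambda_1}\cdots q_{\lambda_\ell}$ attached according to the cycle type $(\lambda_1,\ldots,\lambda_\ell)$ of $\sigma_\bullet$. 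First I would drop the transitivity hypothesis, so that arbitrary (possibly disconnected) triples are counted; the generating function for these is the full partition function, while its logarithm records the connected count, in agreement with the exponential form of \cref{eq:partitionfunction}.

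Next I would pass to characters via the Frobenius formula, using the power sum expansion $s_\lambda = \sum_\mu z_\mu^{-1}\chi^\lambda_\mu\, p_\mu$. Writing each factor in these terms, the plan is to identify $s_\lambda(\mathbf{p})$ with the sum over face profiles (the cycle type of $\sigma_*$, marked by $p_{d_1}\cdots p_{d_n}$), the factor $s_\lambda(\tfrac{q_1}{\h}, \tfrac{q_2}{\h}, \ldots)$ with the sum over black vertex profiles (the cycle type of $\sigma_\bullet$, marked by the monomial in the $q_i$ since $p_\beta(\mathbf{q}/\h) = \h^{-\ell(\beta)}\prod_i q_{\beta_i}$), and the content product with the sum over white vertices. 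For the last identification I would invoke the Jucys--Murphy elements $J_1, \ldots, J_d$, whose elementary symmetric polynomials act on the irreducible module $V_\lambda$ by the elementary symmetric polynomials in the contents, so that $\prod_{\Box\in\lambda}(1 + c(\Box)\h)$ is the eigenvalue of the central element $\prod_{k}(1 + \h J_k)$. Expanding this element in the conjugacy class basis realises $\sigma_\circ$ as a strictly monotone product of transpositions; by the equivalence between hypermaps and strictly monotone Hurwitz numbers already used in \cref{lem:ribbongraphs}, this reproduces each white vertex structure with the correct multiplicity.

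The $\h$-grading then matches automatically: the factor $s_\lambda(\mathbf{q}/\h)$ contributes $\h^{-V_\bullet}$ and the content product contributes $\h^{\,d - V_\circ}$, where $V_\bullet$ and $V_\circ$ are the numbers of black and white vertices, so Euler's relation $V_\bullet + V_\circ - d + n = 2 - 2g$ gives a total power $\h^{\,d - V_\bullet - V_\circ} = \h^{2g-2+n}$, exactly the power appearing in \cref{eq:partitionfunction}. Since the only genuine novelty relative to \cref{lem:ribbongraphs} is the replacement of the specialised weight $\mathbf{q} = (0,1,0,\ldots)$ by indeterminates, and since this replacement simply attaches the monomial $\prod_i q_{\lambda_i}$ to a dessin with black vertices of degrees $\lambda_1, \ldots, \lambda_\ell$, the remaining bookkeeping is formal. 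I expect the main obstacle to be the precise identification of the content product with the white vertex sum --- that is, verifying that the strictly monotone factorisations counted by $\prod_k(1 + \h J_k)$ are in weight-preserving correspondence with arbitrary white vertices, together with tracking the normalising factors (the $z_\mu$ in the character expansions, the $\frac{1}{n!}$ and the factor $d$ in $\overline{b}_g(d) = d\,\overline{B}_{g,1}(d)$, and the $1/\h$ scaling of the second Schur function) so that all constants agree.
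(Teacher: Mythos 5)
Your outline is correct, but there is nothing in the paper to compare it against: the lemma is stated without proof, as the evident generalisation of \cref{lem:ribbongraphs}, which is itself only attributed to the Alexandrov--Lewanski--Shadrin equivalence between hypermaps and strictly monotone Hurwitz numbers. What you have written is a faithful reconstruction of the standard character-theoretic argument underlying that citation: the triple model $(\sigma_\bullet,\sigma_\circ,\sigma_*)$ with $\sigma_\bullet\sigma_\circ\sigma_*=\mathrm{id}$, the exponential formula to trade transitivity for the $\log$, the power-sum expansions of both Schur factors, and the Jucys--Murphy realisation of the content product. The one step you flag as the main obstacle --- that the strictly monotone factorisations generated by $\prod_k(1+\h J_k)$ correspond weight-preservingly to \emph{arbitrary} white vertices --- is closed by Jucys' classical identity in $\mathbb{C}[S_d]$,
\[
\prod_{k=1}^{d}\bigl(1+\h J_k\bigr)\;=\;\sum_{\sigma\in S_d}\h^{\,d-k(\sigma)}\,\sigma,
\]
where $k(\sigma)$ denotes the number of cycles; equivalently, every permutation admits exactly one strictly monotone transposition factorisation, of length $d-k(\sigma)$. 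Using this identity directly lets you bypass the detour through strictly monotone Hurwitz numbers and the appeal to \cref{lem:ribbongraphs} altogether. With it, the bookkeeping closes exactly as you predict: the Frobenius count of triples in prescribed classes $C_\nu\times C_\mu\times C_\rho$ with product the identity is $\frac{|C_\nu|\,|C_\mu|\,|C_\rho|}{d!}\sum_\lambda \chi^\lambda_\nu\chi^\lambda_\mu\chi^\lambda_\rho/\dim\lambda$, and since $|C_\mu|=d!/z_\mu$, dividing by $d!$ (which orbit--stabiliser converts into the $\frac{1}{|\mathrm{Aut}|}$ weight for face-labelled dessins) reassembles precisely $\sum_\lambda s_\lambda(\mathbf{p})\,s_\lambda(\tfrac{q_1}{\h},\tfrac{q_2}{\h},\ldots)\prod_{\Box\in\lambda}(1+c(\Box)\h)$, the content product contributing $\h^{\,d-V_\circ}$ by the identity above, so your Euler-relation computation of the $\h$-grading is exactly right. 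Your conventions also match the paper's normalisations --- a face of degree $2d_i$ corresponds to a cycle of $\sigma_*$ of length $d_i$ marked by $p_{d_i}$; as a sanity check, the coefficient of $p_2$ in $Z$ works out to $\tfrac{1}{2\h}(q_2+q_1^2)$, giving $\overline{b}_0(2)=q_2+q_1^2$ in agreement with the paper's table.
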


One obtains the usual dessin d'enfant enumeration by setting $\mathbf{q} = (1, 1, 1, \ldots)$ in the double dessin d'enfant enumeration.
\begin{align*}
Z(\mathbf{p}; \mathbf{q}; \h) &= \exp \bigg[ \sum_{g=0}^\infty \sum_{n=1}^\infty \sum_{d_1, d_2, \ldots, d_n = 1}^\infty B_{g,n}(d_1, d_2, \ldots, d_n) \, \frac{\h^{2g-2+n}}{n!} \, p_{d_1} p_{d_2} \cdots p_{d_n} \bigg] \\
&= \sum_{\lambda \in {\mathcal P}} s_\lambda(p_1, p_2, \ldots) \, s_\lambda(\tfrac{1}{\h}, \tfrac{1}{\h}, \tfrac{1}{\h}, \ldots) \, \prod_{\Box \in \lambda} (1 + c(\Box) \h) \\
&= \sum_{\lambda \in {\mathcal P}} s_\lambda(p_1, p_2, \ldots) \, s_\lambda(\tfrac{1}{\h}, 0, 0, \ldots) \, \prod_{\Box \in \lambda} (1 + c(\Box) \h)^2
\end{align*}
The second equality here relies on the fact that $s_\lambda(\tfrac{1}{\h}, \tfrac{1}{\h}, \tfrac{1}{\h}, \ldots) = s_\lambda(\tfrac{1}{\h}, 0, 0, \ldots) \prod (1 + c(\Box) \h)$, which is a direct corollary of the hook-length and the hook-content formulas --- see \cref{eq:hookcontent}.

\begin{figure}[ht!]
\begin{center}
\begin{tabularx}{\textwidth}{cccX}
\rowcolor{lightblue} $d$ & $g$ & $a_g(d)$ & $\overline{b}_g(d)$ \\
\rowcolor{lightred} 1 & 0 & $1$ & $q_1$ \\
\rowcolor{lightblue} 2 & 0 & $2$ & $q_2 + q_1^2$ \\
\rowcolor{lightblue} 2 & 1 & $1$ & $0$ \\ 
\rowcolor{lightred} 3 & 0 & $5$ & $q_3 + 3 q_2 q_1 + q_1^3$ \\
\rowcolor{lightred} 3 & 1 & $10$ & $q_3$ \\ 
\rowcolor{lightblue} 4 & 0 & $14$ & $q_4 + 4q_3q_1 + 2q_2^2 + 6q_2q_1^2 + q_1^4$ \\
\rowcolor{lightblue} 4 & 1 & $70$ & $5q_4 + 4q_3q_1 + q_2^2$ \\
\rowcolor{lightblue} 4 & 2 & $21$ & $0$ \\ 
\rowcolor{lightred} 5 & 0 & $42$ & $q_5 + 5q_4q_1 + 5q_3q_2 + 10q_3q_1^2 + 10q_2^2q_1 + 10q_2q_1^3 + q_1^5$ \\
\rowcolor{lightred} 5 & 1 & $420$ & $15q_5 + 25q_4q_1 + 15q_3q_2 + 10q_3q_1^2 + 5q_2^2q_1$ \\
\rowcolor{lightred} 5 & 2 & $483$ & $8q_5$ \\ 
\rowcolor{lightblue} 6 & 0 & $132$ & $q_6 + 6q_5q_1 + 6q_4q_2 + 15q_4q_1^2 + 3q_3^2 + 30q_3q_2q_1 + 20q_3q_1^3 + 5q_2^3 + 30q_2^2q_1^2 + 15q_2q_1^4 + q_1^6$ \\
\rowcolor{lightblue} 6 & 1 & $2310$ & $35q_6 + 90q_5q_1 + 60q_4q_2 +75q_4q_1^2 +25q_3^2 + 90q_3q_2q_1 + 20q_3q_1^3 + 10q_2^3 + 15q_2^2q_1^2$ \\
\rowcolor{lightblue} 6 & 2 & $6468$ & $84 q_6 + 48q_5q_1 + 24q_4q_2 + 12q_3^2$ \\
\rowcolor{lightblue} 6 & 3 & $1485$ & $0$
\end{tabularx}
\end{center}
\caption{Table of ribbon graph numbers and double dessin d'enfant numbers.}
\end{figure}

\subsection{Bousquet-M\'{e}lou--Schaeffer numbers}

One can encode a dessin d'enfant via a pair $(\sigma_1, \sigma_2)$ of permutations acting on the edges. Here, $\sigma_1$ acts by rotating each edge anticlockwise around its adjacent black vertex and $\sigma_2$ acts by rotating each edge anticlockwise around its adjacent white vertex. The connectedness of the dessin d'enfant is encoded in the fact that the two permutations generate a transitive subgroup of the symmetric group. For more details, one can consult the extensive literature on dessins d'enfant~\cite{lan-zvo}. More generally, one has the notion of Bousquet-M\'{e}lou--Schaeffer numbers~\cite{bou-sch}.

\begin{definition}
For $m$ a positive integer, the {\em Bousquet-M\'{e}lou--Schaeffer (BMS) number} $B^{m}_{g,n}(d_1, d_2, \ldots, d_n)$ is equal to $\frac{1}{|\mathbf{d}|!}$ multiplied by the number of tuples $(\sigma_1, \sigma_2, \ldots, \sigma_m)$ of permutations in $S_{|\mathbf{d}|}$ such that
\begin{itemize}
\item $\sum_{i=1}^m (|\mathbf{d}|-k(\sigma_i)) = 2g - 2 + n + |\mathbf{d}|$, where $k(\sigma)$ denotes the number of cycles in $\sigma$;
\item $\sigma_1 \circ \sigma_2 \circ \cdots \circ \sigma_m$ has $n$ labelled cycles with lengths $d_1, d_2, \ldots, d_n$; and
\item $\sigma_1, \sigma_2, \ldots, \sigma_m$ generate a transitive subgroup of the symmetric group.
\end{itemize}
The corresponding 1-point invariant is denoted $b^m_g(d) = d \, B^m_{g,1}(d)$.
\end{definition}

\begin{lemma}
The $m$-BMS numbers arise from taking $\mathbf{q} = (1, 0, 0, \ldots)$ and $G(z) = (1+z)^m$ in \cref{eq:partitionfunction}. In other words, we have
\begin{align*}
Z(\mathbf{p}; \mathbf{q}; \h) &= \sum_{\lambda \in {\mathcal P}} s_\lambda(p_1, p_2, \ldots) \, s_\lambda(\tfrac{1}{\h}, 0, 0, \ldots) \, \prod_{\Box \in \lambda} (1 + c(\Box) \h)^m \\
&= \exp \bigg[ \sum_{g=0}^\infty \sum_{n=1}^\infty \sum_{d_1, d_2, \ldots, d_n = 1}^\infty B_{g,n}^{m}(d_1, d_2, \ldots, d_n) \, \frac{\h^{2g-2+n}}{n!} \, p_{d_1} p_{d_2} \cdots p_{d_n} \bigg].
\end{align*}
\end{lemma}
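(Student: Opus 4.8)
The plan is to translate the Schur-function sum into a sum over tuples of permutations using the representation theory of the symmetric group, and then to identify that combinatorial sum with the generating function of BMS numbers. First I would record two evaluations. Writing the Schur function in the power-sum basis as $s_\lambda = \sum_{\mu \vdash |\lambda|} z_\mu^{-1}\chi^\lambda_\mu\, p_\mu$ and substituting $p_1 = 1/\h$, $p_2 = p_3 = \cdots = 0$, only the term $\mu = (1^d)$ survives (with $d = |\lambda|$), giving $s_\lambda(\tfrac1\h, 0, 0, \ldots) = \tfrac{f^\lambda}{d!}\,\h^{-d}$, where $f^\lambda = \chi^\lambda_{(1^d)}$ counts standard tableaux. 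Second, I would interpret the content product via the Jucys--Murphy elements $J_i = \sum_{j<i}(j\,i) \in \mathbb{C}[S_d]$. These commute, and Jucys's identity gives $\prod_{i=1}^d(1 + \h J_i) = \sum_{\sigma \in S_d}\h^{\,d-k(\sigma)}\,\sigma =: \mathcal{J}$, a central element. On the Gelfand--Tsetlin basis each $J_i$ acts by the content of the box labelled $i$, so $\mathcal{J}$ acts on the irreducible $V_\lambda$ by the scalar $\prod_{\Box\in\lambda}(1 + c(\Box)\h)$, and hence $\mathcal{J}^m$ acts by $\prod_{\Box\in\lambda}(1 + c(\Box)\h)^m$ --- exactly the content product for $G(z) = (1+z)^m$.

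Next I would combine these two facts. Since $\mathcal{J}^m = \sum_{(\sigma_1,\dots,\sigma_m)}\h^{\sum_i(d-k(\sigma_i))}\,\sigma_1\cdots\sigma_m$ is central and acts on $V_\lambda$ by the content product, its group-algebra coefficients are recovered through $[\pi]\,\mathcal{J}^m = \tfrac{1}{d!}\sum_{\lambda\vdash d} f^\lambda\,\chi^\lambda_{\nu}\prod_{\Box}(1+c(\Box)\h)^m$, where $\nu$ is the cycle type of $\pi$; the row--column orthogonality of $S_d$ characters is what powers this step. After reorganising the partition function, $Z$ becomes
\[
Z = \sum_{d\ge 0}\frac{1}{d!}\sum_{(\sigma_1,\dots,\sigma_m)\in S_d^m}\h^{-d + \sum_i(d-k(\sigma_i))}\prod_{\text{cycles }c\text{ of }\sigma_1\cdots\sigma_m} p_{|c|},
\]
a sum over all (not necessarily transitive) $m$-tuples. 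The exponent of $\h$ is additive over the orbits of $\langle\sigma_1,\dots,\sigma_m\rangle$, and on each orbit the Riemann--Hurwitz relation $2-2g = k(\pi)+\sum_i k(\sigma_i) - (m-1)d$ rewrites it as $2g-2+n$, matching the grading in the exponential.

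Then I would pass from disconnected to connected counts: by the exponential formula for labelled structures, $\log Z$ is the same sum restricted to transitive tuples, which is precisely where the transitivity hypothesis in the BMS definition enters. Finally I would match the combinatorial prefactors, verifying that summing $B^m_{g,n}(d_1,\dots,d_n)\,\tfrac{\h^{2g-2+n}}{n!}\,p_{d_1}\cdots p_{d_n}$ over ordered tuples $(d_1,\dots,d_n)$ reproduces $\tfrac{1}{d!}\sum_{\text{transitive}}\prod_c p_{|c|}$: the $n!$ labellings of the cycles of $\sigma_1\cdots\sigma_m$ cancel the factor $\tfrac1{n!}$, and the genus constraint $\sum_i(d-k(\sigma_i)) = 2g-2+n+d$ is exactly the condition isolating the coefficient of $\h^{2g-2+n}$.

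I expect the main obstacle to lie in this last paragraph rather than in the algebra: keeping the automorphism normalisation $\tfrac1{d!}$, the face-labelling factor $\tfrac1{n!}$, and the exponential formula mutually consistent, while checking that the topological genus produced by Riemann--Hurwitz coincides with the combinatorial genus appearing in the definition of $B^m_{g,n}$. The character theory and the Jucys--Murphy identity are essentially mechanical once the evaluation $s_\lambda(\tfrac1\h,0,\dots) = \tfrac{f^\lambda}{d!}\h^{-d}$ is in hand; the real care is in the combinatorial normalisations and in the additivity of the Euler characteristic over connected components.
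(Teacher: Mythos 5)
Your proof is correct, but there is nothing in the paper to compare it against in detail: the paper states this lemma without proof, treating it as standard material on weighted Hurwitz numbers (it is the content-product/hypergeometric tau-function formalism of the cited works of Alexandrov--Chapuy--Eynard--Harnad and Orlov--Shcherbin). Your derivation --- the evaluation $s_\lambda(\tfrac{1}{\h},0,0,\ldots) = \tfrac{f^\lambda}{d!}\,\h^{-d}$, Jucys's identity $\prod_{i=1}^d(1+\h J_i) = \sum_{\sigma\in S_d}\h^{\,d-k(\sigma)}\sigma$, the eigenvalue $\prod_{\Box\in\lambda}(1+c(\Box)\h)$ of this central element on $V_\lambda$, character orthogonality to extract group-algebra coefficients, and the exponential formula to pass from arbitrary to transitive tuples --- is precisely the standard argument underlying those references, and every step you outline goes through.

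One remark: the obstacle you flag in your final paragraph is not actually there. The paper \emph{defines} the genus in $B^m_{g,n}$ combinatorially, by the condition $\sum_{i=1}^m(|\mathbf{d}|-k(\sigma_i)) = 2g-2+n+|\mathbf{d}|$, so no comparison between a topological genus from Riemann--Hurwitz and a combinatorial genus is required; the exponent $-d+\sum_i(d-k(\sigma_i))$ of $\h$ in your disconnected sum equals $2g-2+n$ on each transitive tuple by definition, and the branched-cover picture is only invoked by the paper afterwards, via the Riemann existence theorem. The only small checks worth recording are that this $g$ is an integer (parity: $\operatorname{sgn}(\sigma_i)=(-1)^{d-k(\sigma_i)}$ and $\operatorname{sgn}(\sigma_1\cdots\sigma_m)=(-1)^{d-n}$ force $\sum_i(d-k(\sigma_i))\equiv d-n \pmod 2$) and that transitive tuples have $g\geq 0$, so that the sum over $g\geq 0$ in the exponential is exhaustive; the latter follows from the branched-cover interpretation or a direct combinatorial estimate. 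Your normalisation bookkeeping --- the $\tfrac{1}{|\mathbf{d}|!}$ built into the definition of $B^m_{g,n}$, the $n!$ cycle labellings cancelling the $\tfrac{1}{n!}$, and the additivity of the $\h$-exponent over orbits of $\langle\sigma_1,\ldots,\sigma_m\rangle$ --- is all as it should be.
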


By the Riemann existence theorem, one can equivalently consider $B_{g,n}^{m}(d_1, d_2, \ldots, d_n)$ to be the weighted count of connected genus $g$ branched covers $f: (C; p_1, p_2, \ldots, p_n) \to (\mathbb{CP}^1; \infty)$ such that
\begin{itemize}
\item $f^{-1}(\infty) = d_1 p_1 + d_2 p_2 + \cdots + d_n p_n$;
\item all other ramification occurs at the $m$th roots of unity.
\end{itemize}
The weight of a branched cover $f: C \to \mathbb{CP}^1$ is $\frac{1}{|\text{Aut}~f|}$, where an automorphism of $f$ is a Riemann surface automorphism $\phi: C \to C$ such that $f \circ \phi = f$.

More generally, we can refine the enumeration by weighting by parameters that record the ramification profile at one of the roots of unity.

\begin{definition}
The {\em double Bousquet-M\'{e}lou--Schaeffer number} $\overline{B}^{m}_{g,n}(d_1, d_2, \ldots, d_n)$ is the weighted count of genus $g$ connected branched covers $f: (C; p_1, p_2, \ldots, p_n) \to (\mathbb{CP}^1; \infty)$ such that
\begin{itemize}
\item $f^{-1}(\infty) = d_1 p_1 + d_2 p_2 + \cdots + d_n p_n$;
\item all other ramification occurs at the $m$th roots of unity.
\end{itemize}
The weight of a branched cover with ramification profile $(\lambda_1, \lambda_2, \ldots, \lambda_\ell)$ over $\exp(\frac{2\pi i}{m})$ is $\frac{q_{\lambda_1} q_{\lambda_2} \cdots q_{\lambda_{\ell}}}{|\text{Aut}~f|}$. The corresponding 1-point invariant is denoted $\overline{b}^m_g(d) = d \, \overline{B}^m_{g,1}(d)$.
\end{definition}

These numbers arise from taking $\mathbf{q} = (q_1, q_2, q_3, \ldots)$ and $G(z) = (1+z)^{m-1}$ in \cref{eq:partitionfunction}.

\begin{figure}[ht!]
\begin{center}
\begin{tabularx}{\textwidth}{ccX}
\rowcolor{lightblue} $d$ & $g$ & $\overline{b}^3_g(d)$ \\ 
\rowcolor{lightred} 1 & 0 & $q_1$ \\ 
\rowcolor{lightblue} 2 & 0 & $q_2 + 2q_1^2$ \\ 
\rowcolor{lightblue} 2 & 1 & $q_2$ \\ 
\rowcolor{lightred} 3 & 0 & $q_3 + 6q_2q_1 + 5q_1^3$ \\ 
\rowcolor{lightred} 3 & 1 & $8q_3 + 12q_2q_1 + q_1^3$ \\ 
\rowcolor{lightred} 3 & 2 & $3q_3$ \\ 
\rowcolor{lightblue} 4 & 0 & $q_4 + 8q_3q_1 + 4q_2^2 + 28q_2q_1^2 + 14q_1^4$ \\ 
\rowcolor{lightblue} 4 & 1 & $30q_4 + 96q_3q_1 + 34q_2^2 + 100q_2q_1^2 + 10q_1^4$ \\ 
\rowcolor{lightblue} 4 & 2 & $93q_4 + 88q_3q_1 + 34q_2^2 + 16q_2q_1^2$ \\ 
\rowcolor{lightblue} 4 & 3 & $20q_4$ \\ 
\rowcolor{lightred} 5 & 0 & $q_5 + 10q_4q_1 + 10q_3q_2 + 45q_3q_1^2 + 45q_2^2q_1 + 120q_2q_1^3 + 42q_1^5$ \\ 
\rowcolor{lightred} 5 & 1 & $80q_5 + 400q_4q_1 + 280q_3q_2 + 770q_3q_1^2 + 560q_2^2q_1 + 700q_2q_1^3 + 70q_1^5$ \\ 
\rowcolor{lightred} 5 & 2 & $901q_5 + 1990q_4q_1 + 1290q_3q_2 + 1405q_3q_1^2 + 1055q_2^2q_1 + 380q_2q_1^3 + 8q_1^5$ \\ 
\rowcolor{lightred} 5 & 3 & $1650q_5 + 1200q_4q_1 + 820q_3q_2 + 180q_3q_1^2 + 140q_2^2q_1$ \\ 
\rowcolor{lightred} 5 & 4 & $248q_5$
\end{tabularx}
\end{center}
\caption{Table of double BMS-3 numbers.}
\end{figure}

\subsection{Hurwitz numbers}

Hurwitz numbers enumerate branched covers of the Riemann sphere. They were first studied by Hurwitz~\cite{hur} in the late nineteenth century, although interest in Hurwitz numbers has been revived in recent decades due to connections to enumerative geometry~\cite{eke-lan-sha-vai, oko-pan}, integrability~\cite{oko}, and topological recursion~\cite{bou-mar, eyn-mul-saf}.

\begin{definition}
The {\em simple Hurwitz number} $H_{g,n}(d_1, d_2, \ldots, d_n)$ is the weighted count of genus $g$ connected branched covers $f: (C; p_1, p_2, \ldots, p_n) \to (\mathbb{CP}^1; \infty)$ such that
\begin{itemize}
\item $f^{-1}(\infty) = d_1 p_1 + d_2 p_2 + \cdots + d_n p_n$; and
\item the only other ramification is simple and occurs at the $m$th roots of unity.
\end{itemize}
The weight of a branched cover $f$ is $\frac{1}{m! \, |\text{Aut}~f|}$, where we have $m = 2g-2+n+|\mathbf{d}|$ from the Riemann--Hurwitz formula. The corresponding 1-point invariant is denoted $h_g(d) = d \, H_{g,1}(d)$.
\end{definition}

Again, the Riemann existence theorem allows one to encode a branched cover via its monodromy representation, which makes connection with permutation factorisations. The result is the following algebraic description of simple Hurwitz numbers.

\begin{proposition}
The simple Hurwitz number $H_{g,n}(d_1, d_2, \ldots, d_n)$ is $\frac{1}{m! \, |\mathbf{d}|!}$ multiplied by the number of tuples $(\tau_1, \tau_2, \ldots, \tau_m)$ of transpositions in $S_{|\mathbf{d}|}$ such that
\begin{itemize}
\item $m = 2g-2+n+|\mathbf{d}|$;
\item $\tau_1 \circ \tau_2 \circ \cdots \circ \tau_m$ has labelled cycles of lengths $d_1, d_2, \ldots, d_n$; and
\item $\tau_1, \tau_2, \ldots, \tau_m $ generate a transitive subgroup of the symmetric group.
\end{itemize}
\end{proposition}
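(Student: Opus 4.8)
The plan is to pass through the monodromy representation of a branched cover, using the Riemann existence theorem to convert the geometric count into a count of permutation tuples, and then to reconcile the two weightings via an orbit--stabiliser argument. Write $N = |\mathbf{d}| = d_1 + \cdots + d_n$, so that every cover in question has degree $N$. I would fix a base point together with standard generating loops $\gamma_1, \ldots, \gamma_m, \gamma_\infty$ of the fundamental group of $\mathbb{CP}^1$ punctured at the $m$-th roots of unity and at $\infty$, subject to the single relation $\gamma_1 \cdots \gamma_m \gamma_\infty = 1$. The Riemann existence theorem then identifies isomorphism classes of connected degree-$N$ branched covers $f$, branched only over these points, with $S_N$-conjugacy classes of transitive homomorphisms $\rho \colon \pi_1 \to S_N$; writing $\sigma_i = \rho(\gamma_i)$, such a homomorphism is exactly a tuple $(\sigma_1, \ldots, \sigma_m)$ generating a transitive subgroup, with the monodromy over $\infty$ given by $\sigma_\infty = (\sigma_1 \cdots \sigma_m)^{-1}$.

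First I would translate the ramification data into cycle-type conditions on this tuple. Simple ramification at each root of unity means precisely that every $\sigma_i$ is a transposition, while the condition $f^{-1}(\infty) = d_1 p_1 + \cdots + d_n p_n$ with labelled preimages means that $\sigma_1 \cdots \sigma_m$ (which has the same cycle type as its inverse $\sigma_\infty$) has its cycles labelled $1, \ldots, n$ with lengths $d_1, \ldots, d_n$; connectedness is transitivity. Next I would check that the genus constraint $m = 2g - 2 + n + |\mathbf{d}|$ is automatic rather than an extra hypothesis: Riemann--Hurwitz for $f \colon C \to \mathbb{CP}^1$ gives $2g - 2 = -2N + \sum_p (e_p - 1)$, where the ramification over $\infty$ contributes $\sum_i (d_i - 1) = N - n$ and the $m$ simple branch points contribute $m$, so that $2g - 2 = -2N + (N - n) + m$, i.e. $m = 2g - 2 + n + N$. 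Thus the $m$ transpositions of a transitive factorisation of an element of cycle type $(d_1, \ldots, d_n)$ automatically record a cover of the correct genus.

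The heart of the matter is to match the two weightings. Let $\mathcal{T}$ denote the set of tuples $(\tau_1, \ldots, \tau_m)$ satisfying the three bulleted conditions, on which $S_N$ acts by simultaneous conjugation, an action that respects the cycle-labelling on $\tau_1 \cdots \tau_m$. The orbits are exactly the isomorphism classes counted geometrically, and for a tuple $\vec\tau$ representing the cover $f$ the stabiliser is the centraliser of the monodromy group inside $S_N$ that additionally preserves the labelling --- which is canonically the group $\text{Aut}~f$ of face-preserving deck transformations. By orbit--stabiliser, each orbit has size $N!/|\text{Aut}~f|$, so summing over orbits gives $|\mathcal{T}| = N! \sum_f 1/|\text{Aut}~f|$. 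Dividing by $N! = |\mathbf{d}|!$ and inserting the common normalisation $\frac{1}{m!}$ present in both definitions yields the claimed identity. I expect the main obstacle to be this last bookkeeping step: one must argue carefully that face-preserving automorphisms of $f$ correspond bijectively to the labelling-preserving centraliser of the monodromy --- in particular that the labelling of the cycles of the product accounts exactly for the labelling of the preimages $p_1, \ldots, p_n$, even when some of the $d_i$ coincide --- so that the orbit--stabiliser count produces precisely the factor $\frac{1}{|\mathbf{d}|!}$ and nothing more.
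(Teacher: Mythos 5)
Your proposal is correct and follows exactly the route the paper indicates: the paper offers no detailed proof, merely remarking that ``the Riemann existence theorem allows one to encode a branched cover via its monodromy representation,'' and your monodromy dictionary, Riemann--Hurwitz consistency check, and orbit--stabiliser matching of $\frac{1}{|\mathrm{Aut}\,f|}$ with the factor $\frac{1}{|\mathbf{d}|!}$ is precisely the standard argument being invoked. You even flag the one genuine subtlety --- that the stabiliser of a labelled tuple is the labelling-preserving part of the centraliser of the transitive monodromy group, which corresponds to automorphisms fixing the marked points $p_1, \ldots, p_n$ --- so nothing is missing.
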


This algebraic description of simple Hurwitz numbers then leads naturally to the following result~\cite{oko}.

\begin{lemma}
The simple Hurwitz numbers arise from taking $\mathbf{q} = (1, 0, 0, \ldots)$ and $G(z) = \exp(z)$ in \cref{eq:partitionfunction}. In other words, we have
\begin{align*}
Z(\mathbf{p}; \mathbf{q}; \h) &= \sum_{\lambda \in {\mathcal P}} s_\lambda(p_1, p_2, \ldots) \, s_\lambda(\tfrac{1}{\h}, 0, 0, \ldots) \, \prod_{\Box \in \lambda} \exp(c(\Box) \h) \\
&= \exp \bigg[ \sum_{g=0}^\infty \sum_{n=1}^\infty \sum_{d_1, d_2, \ldots, d_n = 1}^\infty H_{g,n}(d_1, d_2, \ldots, d_n) \, \frac{\h^{2g-2+n}}{n!} \, p_{d_1} p_{d_2} \cdots p_{d_n} \bigg].\end{align*}
\end{lemma}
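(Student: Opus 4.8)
The plan is to prove this via the character theory of the symmetric group, in the spirit of Frobenius and Okounkov. Throughout, write $d = |\lambda|$, let $\chi^\lambda_\mu$ denote the value of the irreducible character indexed by $\lambda$ on the conjugacy class of cycle type $\mu$, let $f^\lambda = \chi^\lambda_{(1^d)}$ be the dimension of the corresponding representation, and let $z_\mu = \prod_i i^{m_i(\mu)} m_i(\mu)!$. The three symmetric-function inputs I would assemble are the following. First, the power-sum expansion of the Schur function, $s_\lambda(p_1, p_2, \ldots) = \sum_{\mu \vdash d} \frac{\chi^\lambda_\mu}{z_\mu} p_\mu$. Second, the specialisation $s_\lambda(\tfrac{1}{\h}, 0, 0, \ldots)$: since only the term $\mu = (1^d)$ survives once $p_2 = p_3 = \cdots = 0$, and since $\chi^\lambda_{(1^d)} = f^\lambda$ and $z_{(1^d)} = d!$, this equals $\frac{f^\lambda}{d!\,\h^{d}}$. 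Third, the content-product factor, which I would rewrite as $\prod_{\Box \in \lambda} \exp(c(\Box)\h) = \exp\big(\h \sum_{\Box \in \lambda} c(\Box)\big)$.

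The key algebraic fact I would then invoke is that the content sum is a central character: if $\mathbf{T} \in \mathbb{C}[S_d]$ denotes the sum of all transpositions, then $\mathbf{T}$ acts on the irreducible module $V^\lambda$ by the scalar $\omega := \sum_{\Box \in \lambda} c(\Box) = \binom{d}{2} \frac{\chi^\lambda_{(2,1^{d-2})}}{f^\lambda}$. Consequently $\exp\big(\h \sum_{\Box \in \lambda} c(\Box)\big) = \sum_{m \geq 0} \frac{(\h\omega)^m}{m!}$, and $\omega^m$ is precisely the eigenvalue of $\mathbf{T}^m$ that governs the Frobenius formula for transposition factorisations.

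Substituting these three inputs into \cref{eq:partitionfunction} and expanding the exponential in powers of $\h$, I would obtain
\[
Z(\mathbf{p}; \mathbf{q}; \h) = \sum_{d \geq 0} \frac{1}{d!} \sum_{m \geq 0} \frac{\h^{m-d}}{m!} \sum_{\nu \vdash d} \frac{p_\nu}{z_\nu} \bigg[ \sum_{\lambda \vdash d} f^\lambda \chi^\lambda_\nu \Big(\textstyle\sum_{\Box \in \lambda} c(\Box)\Big)^{m} \bigg].
\]
The bracketed character sum is exactly $d!$ times the number of ways to factor a fixed permutation of cycle type $\nu$ into $m$ transpositions, by the Frobenius formula (obtained by expanding $\mathbf{T}^m$ in the basis of central idempotents and reading off the coefficient of a permutation of type $\nu$). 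Thus each coefficient counts tuples $(\tau_1, \ldots, \tau_m)$ of transpositions in $S_d$ whose product has cycle type $\nu$, with $p_\nu$ recording that cycle type. Finally, Riemann–Hurwitz forces $m = 2g - 2 + \ell(\nu) + d$, so $\h^{m-d} = \h^{2g-2+\ell(\nu)}$, matching the exponent in \cref{eq:partitionfunction}, while the factor $\frac{1}{m!}$ matches the Hurwitz weight.

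It remains to pass from this count to the connected numbers $H_{g,n}$ appearing inside the exponential. Since the sum over all $\lambda \in \mathcal{P}$ produces the generating function for all — not necessarily transitive — factorisations, whereas $H_{g,n}$ is defined through transitive factorisations, the required identity is exactly the statement that the disconnected generating function is the exponential of the connected one. I would conclude by invoking the exponential formula, checking that the normalisation $\frac{p_\nu}{z_\nu}$ on the left reconciles with the $\frac{1}{n!}\, p_{d_1} \cdots p_{d_n}$ on the right after summing $d_1, \ldots, d_n$ independently. The step I expect to be the main obstacle is this last piece of bookkeeping: matching the $z_\nu$ automorphism factors against the labelled-face normalisation and confirming that transitivity is precisely what the logarithm extracts, since the character-theoretic computation itself reduces to a direct substitution once the content-sum-equals-central-character identity is in hand.
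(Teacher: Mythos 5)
Your proof is correct: the three symmetric-function inputs (the power-sum expansion of $s_\lambda$, the specialisation $s_\lambda(\tfrac{1}{\h},0,0,\ldots) = \frac{f^\lambda}{d!\,\h^d}$, and the identification of $\sum_{\Box \in \lambda} c(\Box)$ with the eigenvalue of the transposition class sum on $V^\lambda$), the expansion of $\mathbf{T}^m$ in central idempotents giving $\sum_\lambda f^\lambda \chi^\lambda_\nu \omega^m = d!\,N_m(\nu)$, and the final exponential-formula passage from disconnected to connected (transitive) factorisations — including the $z_\nu$-versus-labelled-cycle bookkeeping, where the factor $\prod_i m_i(\nu)!$ from labelling cycles cancels against $z_\nu$ and the $\frac{1}{n!}$ — all go through exactly as you anticipate, with $\h^{m-d}$ additive over components so that Riemann--Hurwitz matches the exponent $\h^{2g-2+n}$. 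Note that the paper itself supplies no proof of this lemma, citing Okounkov instead, and your argument is precisely the standard character-theoretic proof underlying that citation.
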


As with the enumerations considered previously in this section, one can consider a generalisation of the Hurwitz enumeration to its ``double'' counterpart~\cite{do-kar18}.

\begin{definition}
The {\em double Hurwitz number} $\overline{H}_{g,n}(d_1, d_2, \ldots, d_n)$ is the weighted count of genus $g$ connected branched covers $f: (C; p_1, p_2, \ldots, p_n) \to (\mathbb{CP}^1; \infty)$ such that
\begin{itemize}
\item $f^{-1}(\infty) = d_1 p_1 + d_2 p_2 + \cdots + d_n p_n$; 
\item the ramification profile over 0 is arbitrary; and
\item the only other ramification is simple and occurs at the $m$th roots of unity.
\end{itemize}
The weight of a branched cover $f$ with ramification profile $(\lambda_1, \lambda_2, \ldots, \lambda_\ell)$ over 0 is $\frac{q_{\lambda_1} q_{\lambda_2} \cdots q_{\lambda_{\ell}}}{m! \, |\mathrm{Aut}~ f|}$.
\end{definition}

Again, we have a natural double Schur function expansion for double Hurwitz number partition function~\cite{oko}.

\begin{lemma}
The double Hurwitz numbers arise from taking $\mathbf{q} = (q_1, q_2, q_3, \ldots)$ and $G(z) = \exp(z)$ in \cref{eq:partitionfunction}. In other words, we have
\begin{align*}
Z(\mathbf{p}; \mathbf{q}; \h) &= \sum_{\lambda \in {\mathcal P}} s_\lambda(p_1, p_2, \ldots) \, s_\lambda(\tfrac{q_1}{\h}, \tfrac{q_2}{\h}, \ldots) \, \prod_{\Box \in \lambda} \exp(c(\Box) \h) \\
&= \exp \bigg[ \sum_{g=0}^\infty \sum_{n=1}^\infty \sum_{d_1, d_2, \ldots, d_n = 1}^\infty \overline{H}_{g,n}(d_1, d_2, \ldots, d_n) \, \frac{\h^{2g-2+n}}{n!} \, p_{d_1} p_{d_2} \cdots p_{d_n} \bigg].
\end{align*}
\end{lemma}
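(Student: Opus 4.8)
The plan is to reduce the statement to a character-theoretic identity for the symmetric group, exactly as in the preceding lemmas for simple and double dessin enumerations; this is essentially Okounkov's computation~\cite{oko}, and the only new feature relative to the simple Hurwitz case is that the ramification profile over $0$ is now recorded by the indeterminates $q_1, q_2, \ldots$ rather than specialised to $\mathbf{q} = (1,0,0,\ldots)$. I would begin from the permutation-factorisation description. The disconnected analogue of $\overline{H}_{g,n}$ counts, with weight $\frac{1}{d!\, m!}$ where $d = |\mathbf{d}|$, the tuples $(\sigma_0, \tau_1, \ldots, \tau_m)$ in $S_d$ with $\sigma_0$ of cycle type $\mu$, each $\tau_i$ a transposition, and $\sigma_0 \tau_1 \cdots \tau_m$ of cycle type $\nu$, where the profile $\mu$ over $0$ carries the box-weight $q_\mu := q_{\mu_1} \cdots q_{\mu_{\ell(\mu)}}$. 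The target is to show that the right-hand side $Z = \sum_\lambda s_\lambda(\mathbf{p})\, s_\lambda(\tfrac{q_1}{\h}, \tfrac{q_2}{\h}, \ldots) \prod_\Box \exp(c(\Box)\h)$ is the generating function for these disconnected counts; the connected invariants $\overline{H}_{g,n}$ then emerge on taking the logarithm, which explains the $\exp[\,\cdot\,]$ on the second line of \cref{eq:partitionfunction}.

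The computational engine is the Frobenius formula. Writing $s_\lambda(\mathbf{p}) = \sum_\nu z_\nu^{-1} \chi^\lambda_\nu\, p_\nu$ and $s_\lambda(\tfrac{q_1}{\h}, \ldots) = \sum_\mu z_\mu^{-1} \chi^\lambda_\mu\, \h^{-\ell(\mu)} q_\mu$, and expanding $\prod_\Box \exp(c(\Box)\h) = \sum_{m \geq 0} \frac{\h^m}{m!} \big( \sum_{\Box \in \lambda} c(\Box) \big)^m$, the coefficient of $p_\nu q_\mu$ in $Z$ is $\sum_m \frac{\h^{m-\ell(\mu)}}{z_\nu z_\mu\, m!} \sum_{\lambda \vdash d} \chi^\lambda_\nu \chi^\lambda_\mu \big( \sum_\Box c(\Box) \big)^m$. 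On the combinatorial side, the Frobenius formula for factorisations expresses the weighted count of the tuples above as $\frac{1}{z_\nu z_\mu\, m!} \sum_{\lambda \vdash d} \chi^\lambda_\nu \chi^\lambda_\mu\, \omega_\lambda(T)^m$, where $\omega_\lambda(T) = \frac{|T|\,\chi^\lambda(T)}{\dim V_\lambda}$ is the scalar by which the sum of all transpositions acts on the irreducible $V_\lambda$. The identity that makes everything click is the classical fact that $\omega_\lambda(T) = \sum_{\Box \in \lambda} c(\Box)$ --- equivalently, that the content sum is the eigenvalue of the first power sum of the Jucys--Murphy elements. Substituting this shows the two coefficient expressions agree term by term.

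Once the coefficients match I would confirm the $\h$-grading. For a connected cover the Riemann--Hurwitz formula gives $m = 2g - 2 + n + \ell(\mu)$, so the net power $\h^{m-\ell(\mu)} = \h^{2g-2+n}$ is exactly the grading appearing in \cref{eq:partitionfunction}; the factor $\h^{-\ell(\mu)}$ produced by the arguments $q_j/\h$ is precisely what converts the combinatorial count $m$ of simple branch points into the topological quantity $2g-2+n$. Passing from the disconnected to the connected generating function is then the exponential formula: the logarithm of $Z$ selects connected covers, and translating the profile $\nu$ over $\infty$ (an unordered partition with symmetry factor $z_\nu$) into $n$ labelled faces of degrees $d_1, \ldots, d_n$ supplies the factor $\frac{1}{n!}\, p_{d_1} \cdots p_{d_n}$.

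I expect the main obstacle to be the careful reconciliation of all the symmetry factors --- $z_\nu$, $z_\mu$, $m!$, $n!$, and $|\mathrm{Aut}\, f|$ --- across the three descriptions (symmetric functions indexed by $\lambda$, factorisations in $S_d$, and weighted branched covers), together with a clean invocation of the exponential formula to move from disconnected to connected counts. The character theory itself is routine once the transposition-eigenvalue identity is in hand; the genuine care lies in matching normalisations so that the connected invariants are exactly the $\overline{H}_{g,n}$ of the definition, and in confirming that the weight $q_\mu$ over $0$ reassembles into the second Schur function $s_\lambda(\tfrac{q_1}{\h}, \tfrac{q_2}{\h}, \ldots)$ rather than some other specialisation.
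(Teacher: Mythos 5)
Your proposal is correct and is essentially the standard argument behind this lemma: the paper itself gives no proof, deferring to Okounkov~\cite{oko}, and that reference's computation is precisely your route --- expand both Schur functions via characters, identify $\prod_{\Box} \exp(c(\Box)\h)$ with powers of the central character $\omega_\lambda(T) = \sum_{\Box \in \lambda} c(\Box)$ of the transposition class sum, match coefficients of $p_\nu q_\mu$ against the Frobenius count of factorisations, check the $\h$-grading via Riemann--Hurwitz $m = 2g-2+n+\ell(\mu)$, and pass to connected counts by the exponential formula. Your bookkeeping of the symmetry factors $z_\nu$, $z_\mu$, $m!$, $n!$ is consistent with the paper's normalisations, so no gap to report.
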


\begin{figure}[ht!]
\begin{center}
\begin{tabularx}{\textwidth}{cccX}
\rowcolor{lightblue} $d$ & $g$ & $h_g(d)$ & $\overline{h}_g(d)$ \\ 
\rowcolor{lightred} 1 & 0 & 1 & $q_1$ \\ 
\rowcolor{lightred} 1 & 1 & 0 & $0$ \\ 
\rowcolor{lightred} 1 & 2 & 0 & $0$ \\ 
\rowcolor{lightblue} 2 & 0 & 1 & $q_2+q_1^2$ \\
\rowcolor{lightblue} 2 & 1 & $\frac{1}{6}$ & $\frac{1}{2}q_2 + \frac{1}{6}q_1^2$ \\
\rowcolor{lightblue} 2 & 2 & $\frac{1}{120}$ & $\frac{1}{24}q_2 + \frac{1}{120}q_1^2$ \\ 
\rowcolor{lightred} 3 & 0 & $\frac{3}{2}$ & $q_3 + 3q_2q_1 + \frac{3}{2}q_1^3$ \\
\rowcolor{lightred} 3 & 1 & $\frac{9}{8}$ & $3q_3 + \frac{9}{2}q_2q_1 + \frac{9}{8}q_1^3$ \\
\rowcolor{lightred} 3 & 2 & $\frac{27}{80}$ & $\frac{9}{4}q_3 + \frac{81}{40}q_2q_1 + \frac{27}{80}q_1^3$ \\ 
\rowcolor{lightblue} 4 & 0 & $\frac{8}{3}$ & $q_4 + 4q_3q_1 + 2q_2^2 + 8q_2q_1^2 + \frac{8}{3}q_1^4$ \\
\rowcolor{lightblue} 4 & 1 & $\frac{16}{3}$ & $10q_4 + 24q_3q_1 + \frac{28}{3}q_2^2 + \frac{80}{3}q_2q_1^2 + \frac{16}{3}q_1^4$ \\
\rowcolor{lightblue} 4 & 2 & $\frac{208}{45}$ & $\frac{82}{3}q_4 + \frac{216}{5}q_3q_1 + \frac{244}{15}q_2^2 + \frac{1456}{45}q_2q_1^2 + \frac{208}{45}q_1^4$ \\ 
\rowcolor{lightred} 5 & 0 & $\frac{125}{24}$ & $q_5 + 5q_4q_1 + 5q_3q_2 + \frac{25}{2}q_3q_1^2 + \frac{25}{2}q_2^2q_1 + \frac{125}{6}q_2q_1^3 + \frac{125}{24}q_1^5$ \\
\rowcolor{lightred} 5 & 1 & $\frac{3125}{144}$ & $25q_5 + \frac{250}{3}q_4q_1 + \frac{125}{2}q_3q_2 + \frac{3125}{24}q_3q_1^2 + \frac{625}{6}q_2^2q_1 + \frac{3125}{24}q_2q_1^3 + \frac{3125}{144}q_1^5$ \\
\rowcolor{lightred} 5 & 2 & $\frac{15625}{384}$ & $\frac{2125}{12}q_5 + \frac{1250}{3}q_4q_1 + \frac{6875}{24}q_3q_2 + \frac{21875}{48}q_3q_1^2 + \frac{3125}{9}q_2^2q_1 + \frac{15625}{48}q_2q_1^3 + \frac{15625}{384}q_1^5$
\end{tabularx}
\end{center}
\caption{Table of simple Hurwitz numbers and double Hurwitz numbers.}
\end{figure}

\subsection{Monotone Hurwitz numbers}

Monotone Hurwitz numbers first appeared in a series of papers by Goulden, Guay-Paquet and Novak, in which they arose as coefficients in the large $N$ asymptotic expansion of the Harish-Chandra--Itzykson--Zuber matrix integral over the unitary group $U(N)$~\cite{gou-gua-nov13a, gou-gua-nov13b, gou-gua-nov14}. Their definition resembles that of Hurwitz numbers, but with a monotonicity constraint imposed on the transpositions. This monotonicity condition is rather natural from the standpoint of the Jucys--Murphy elements in the symmetric group algebra $\mathbb{C}[S_{|\mathbf{d}|}]$. Monotone Hurwitz numbers are known to obey several analogous properties to Hurwitz numbers. For instance, there is a polynomial structure theorem~\cite{gou-gua-nov13b}, they are governed by topological recursion~\cite{do-dye-mat}, there is a quantum curve~\cite{do-dye-mat}, and there is an ELSV-type formula~\cite{ale-lew-sha, do-kar}.

\begin{definition}
The {\em simple monotone Hurwitz number} $M_{g,n}(d_1, d_2, \ldots, d_n)$ is $\frac{1}{|\mathbf{d}|!}$ multiplied by the number of tuples $(\tau_1, \tau_2, \ldots, \tau_m)$ of transpositions in $S_{|\mathbf{d}|}$ such that
\begin{itemize}
\item $m = 2g-2+n+|\mathbf{d}|$;
\item $\tau_1 \circ \tau_2 \circ \cdots \circ \tau_m$ has labelled cycles of lengths $d_1, d_2, \ldots, d_n$;
\item $\tau_1, \tau_2, \ldots, \tau_m$ generate a transitive subgroup of the symmetric group; and
\item if $\tau_i = (a_i ~ b_i)$ with $a_i < b_i$, then $b_1 \leq b_2 \leq \cdots \leq b_m$.
\end{itemize}
The corresponding 1-point invariant is denoted $m_g(d) = d \, M_{g,1}(d)$.
\end{definition}

\begin{lemma}
The monotone Hurwitz numbers arise from taking $\mathbf{q} = (1, 0, 0, \ldots)$ and $G(z) = \frac{1}{1-z}$ in \cref{eq:partitionfunction}. In other words, we have
\begin{align*}
Z(\mathbf{p}; \mathbf{q}; \h) &= \sum_{\lambda \in {\mathcal P}} s_\lambda(p_1, p_2, \ldots) \, s_\lambda(\tfrac{1}{\h}, 0, 0, \ldots) \, \prod_{\Box \in \lambda} \frac{1}{1-c(\Box) \h} \\
&= \exp \bigg[ \sum_{g=0}^\infty \sum_{n=1}^\infty \sum_{d_1, d_2, \ldots, d_n = 1}^\infty M_{g,n}(d_1, d_2, \ldots, d_n) \, \frac{\h^{2g-2+n}}{n!} \, p_{d_1} p_{d_2} \cdots p_{d_n} \bigg].
\end{align*}
\end{lemma}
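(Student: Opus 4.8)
The plan is to route through the representation theory of the symmetric group, exploiting the fact that monotone factorisations are exactly what the Jucys--Murphy elements compute. Write $N = d_1 + \cdots + d_n$ and let $J_k = \sum_{a<k}(a\ k) \in \mathbb{C}[S_N]$ denote the $k$-th Jucys--Murphy element. The first step is the combinatorial identity, valued in the group algebra,
\[
\sum_{\ell \geq 0} z^\ell \!\!\sum_{\substack{(\tau_1,\ldots,\tau_\ell)\\ \text{monotone}}} \tau_1 \cdots \tau_\ell \;=\; \prod_{k=1}^{N}{}^{\rightarrow} \frac{1}{1 - z J_k},
\]
where the inner sum ranges over sequences of transpositions with weakly increasing largest entries and the product is taken in increasing order of $k$. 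This holds because $J_k^{\,j}$ is precisely the sum of all length-$j$ products of transpositions whose largest entry equals $k$, so expanding each geometric factor $\frac{1}{1-zJ_k} = \sum_{j\geq 0} z^j J_k^{\,j}$ and multiplying in increasing order of $k$ records exactly the monotone sequences, graded by length.

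Next I would invoke the two classical facts about Jucys--Murphy elements: they commute pairwise, and any symmetric function of $J_1,\ldots,J_N$ is central in $\mathbb{C}[S_N]$, acting on the irreducible module $V^\lambda$ by that symmetric function evaluated on the multiset of contents $\{c(\Box) : \Box \in \lambda\}$. Since $\prod_k \frac{1}{1-zX_k} = \sum_j z^j h_j(X)$ is symmetric, the ordered product above is central and acts on $V^\lambda$ by the scalar $\prod_{\Box \in \lambda} \frac{1}{1 - z\, c(\Box)}$. Expanding this central element in central idempotents and inverting via character orthogonality yields, for any $\sigma$ of cycle type $\nu$,
\[
W(z,\sigma) := \sum_{\ell} z^\ell \, \#\{\text{monotone length-}\ell\text{ factorisations of }\sigma\} = \frac{1}{N!} \sum_{\lambda \vdash N} \Big(\prod_{\Box \in \lambda} \tfrac{1}{1-z\,c(\Box)}\Big) f^\lambda \, \chi^\lambda(\nu),
\]
where $f^\lambda = \dim V^\lambda = \chi^\lambda(1^N)$; note $W(z,\sigma)$ depends only on $\nu$, as it must for a central element.

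I would then expand the Schur side of the claimed identity. Using $s_\lambda(\mathbf{p}) = \sum_\nu z_\nu^{-1} \chi^\lambda(\nu)\, p_\nu$ together with the specialisation $s_\lambda(\tfrac{1}{\h},0,0,\ldots) = \tfrac{f^\lambda}{N!\,\h^{N}}$ (which follows from the same power-sum expansion, keeping only $\nu = (1^N)$), the first line of the lemma becomes $\sum_{N \geq 0} \h^{-N} \sum_{\nu \vdash N} z_\nu^{-1}\, p_\nu\, W(\h, \sigma_\nu)$, once the inner sum over $\lambda$ is recognised as $N!\,W(\h,\sigma_\nu)$ with $z=\h$. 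This is exactly the generating function for all (not necessarily transitive) monotone factorisations; a factorisation into $\ell$ transpositions with product of cycle type $\nu$ contributes a power $\h^{\ell-N}$, and since Riemann--Hurwitz gives $\ell - N = 2g-2+n$ on each transitive piece, the grading agrees with the $\h^{2g-2+n}$ in the exponent. The final step is to convert this disconnected count into the exponential of the connected one: a monotone factorisation splits along the orbits of the group it generates, the transpositions on distinct orbits are disjoint and hence commute, so the global monotone order is an arbitrary shuffle of the component orders, and the resulting shuffle factors are precisely those produced by the exponential formula, with the $\tfrac{1}{n!}$ absorbing the labelling of the $n$ cycles of the product. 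This identifies the series with $\exp[\sum_{g,n}\sum_{\mathbf{d}} M_{g,n}(\mathbf{d})\,\tfrac{\h^{2g-2+n}}{n!}\,p_{d_1}\cdots p_{d_n}]$.

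I expect the main obstacle to be this last step rather than the algebra: one must verify that the monotonicity constraint is genuinely compatible with the decomposition into connected components and that the shuffle counts reproduce exactly the combinatorial weights $z_\nu^{-1}$ and $\tfrac{1}{n!}$. By contrast, everything through the Jucys--Murphy computation is essentially formal once commutativity and centrality are invoked.
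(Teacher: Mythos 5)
The paper itself offers no proof of this lemma: it is stated as a known result, imported from the literature on monotone Hurwitz numbers (it is essentially the character formula of Goulden--Guay-Paquet--Novak~\cite{gou-gua-nov13a} and a special case of the weighted Hurwitz framework of~\cite{ale-cha-eyn-har}). Your Jucys--Murphy argument is precisely the standard proof underlying those references, and it is essentially complete and correct: the ordered-product identity $\sum_\ell z^\ell \sum \tau_1\cdots\tau_\ell = \prod_k^{\rightarrow}(1-zJ_k)^{-1}$ is right (with $J_1=0$ making the first factor trivial), the passage via centrality to the eigenvalue $\prod_{\Box\in\lambda}(1-z\,c(\Box))^{-1}$ on $V^\lambda$ is the theorem of Jucys and Murphy, your normalisation of $W(z,\sigma)$ is consistent with the Schur-side expansion $s_\lambda(\mathbf{p})=\sum_\nu z_\nu^{-1}\chi^\lambda(\nu)\,p_\nu$ and the specialisation $s_\lambda(\tfrac{1}{\h},0,0,\ldots)=\tfrac{f^\lambda}{N!\,\h^N}$, and the grading $\h^{\ell-N}=\h^{2g-2+n}$ on transitive pieces matches the paper's convention $m=2g-2+n+|\mathbf{d}|$. (Whether one composes the $\tau_i$ left-to-right or right-to-left is immaterial: inverting a factorisation reverses the order and preserves cycle type, so the counts agree.)

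One sentence in your final step should be corrected, although the conclusion survives. The global monotone order is \emph{not} ``an arbitrary shuffle of the component orders'': among all interleavings of the component sequences, exactly one is monotone. Transpositions supported on distinct orbits have distinct maxima, so sorting by maxima determines the interleaving, and ties in the maxima occur only within a single orbit, where the component's own order decides. This uniqueness is exactly what you need: it shows that disconnected monotone factorisations of $\sigma$ with prescribed orbit partition are in bijection with tuples of transitive monotone factorisations on the blocks, with no extra multiplicity, so that $Z=\exp F$ holds on the nose by the exponential formula --- were every shuffle admissible, $\log Z$ would acquire spurious multinomial factors and the identity would fail. The remaining bookkeeping you flag, namely $z_\nu^{-1}$ versus $\tfrac{1}{n!}$ with labelled cycles, is the standard conversion $\sum_{\nu\vdash N} z_\nu^{-1}\,p_\nu\,(\cdot)=\tfrac{1}{N!}\sum_{\sigma\in S_N}p_{\nu(\sigma)}\,(\cdot)$ together with the observation that summing over ordered tuples $(d_1,\ldots,d_n)$ with weight $\tfrac{1}{n!}$ agrees with summing over the unordered cycles of $\sigma$; it goes through without incident.
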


Again, one can consider a generalisation of the monotone Hurwitz enumeration to its ``double'' counterpart.

\begin{definition}
The {\em double monotone Hurwitz number} $\overline{M}_{g,n}(d_1, d_2, \ldots, d_n)$ is the weighted count of tuples $(\sigma, \tau_1, \tau_2, \ldots, \tau_m)$ of transpositions in $S_{|\mathbf{d}|}$ such that
\begin{itemize}
\item $m = 2g-2+n+k(\sigma)$, where $k(\sigma)$ denotes the number of cycles in $\sigma$;
\item $\sigma \circ \tau_1 \circ \tau_2 \circ \cdots \circ \tau_m$ has labelled cycles of lengths $d_1, d_2, \ldots, d_n$;
\item $\sigma, \tau_1, \tau_2, \ldots, \tau_m$ generate a transitive subgroup of the symmetric group; and
\item if $\tau_i = (a_i ~ b_i)$ with $a_i < b_i$, then $b_1 \leq b_2 \leq \cdots \leq b_m$.
\end{itemize}
The weight of such a tuple with $\sigma$ of cycle type $(\lambda_1, \lambda_2, \ldots, \lambda_\ell)$ is $\frac{1}{|\mathbf{d}|!} \, q_{\lambda_1} q_{\lambda_2} \cdots q_{\lambda_{\ell}}$. The corresponding 1-point invariant is denoted $\overline{m}_g(d) = d \, \overline{M}_{g,1}(d)$.
\end{definition}

\begin{lemma}
The double monotone Hurwitz numbers arise from taking $\mathbf{q} = (q_1, q_2, q_3, \ldots)$ and $G(z) = \frac{1}{1-z}$ in \cref{eq:partitionfunction}. In other words, we have
\begin{align*}
Z(\mathbf{p}; \mathbf{q}; \h) &= \sum_{\lambda \in {\mathcal P}} s_\lambda(p_1, p_2, \ldots) \, s_\lambda(\tfrac{q_1}{\h}, \tfrac{q_2}{\h}, \ldots) \, \prod_{\Box \in \lambda} \frac{1}{1-c(\Box) \h} \\
&= \exp \bigg[ \sum_{g=0}^\infty \sum_{n=1}^\infty \sum_{d_1, d_2, \ldots, d_n = 1}^\infty \overline{M}_{g,n}(d_1, d_2, \ldots, d_n) \, \frac{\h^{2g-2+n}}{n!} \, p_{d_1} p_{d_2} \cdots p_{d_n} \bigg].
\end{align*}
\end{lemma}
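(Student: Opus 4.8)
The plan is to pass to the algebraic description of the enumeration in terms of permutation factorisations and then to diagonalise the relevant central element of the symmetric group algebra $\mathbb{C}[S_d]$ using its irreducible representations, exactly as one does for the other Hurwitz-type expansions in this section. Fix $d = |\mathbf{d}|$ and recall the Jucys--Murphy elements $J_k = \sum_{i<k}(i~k) \in \mathbb{C}[S_d]$. The key input is the theorem of Jucys that the sum of all weakly monotone transposition sequences of length $m$ equals the complete homogeneous symmetric polynomial $h_m(J_1, \ldots, J_d)$, so that the weighted generating series for the monotone factor is the central element
\[
\mathcal{G}(\hbar) = \sum_{m \geq 0} h_m(J_1, \ldots, J_d)\, \hbar^m = \prod_{k=1}^d (1 - \hbar J_k)^{-1}.
\]
Since the symmetric functions of the $J_k$ are central, $\mathcal{G}(\hbar)$ acts on the irreducible module $V_\lambda$ as a scalar, and the classical computation of the eigenvalues of the $J_k$ on the Gelfand--Tsetlin basis shows that this scalar is precisely the content product $\prod_{\Box \in \lambda}(1 - c(\Box)\hbar)^{-1}$. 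This is exactly where the choice $G(z) = \frac{1}{1-z}$ enters.

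First I would assemble the full (disconnected) weighted count into a single central element. Writing $C_\rho$ for the conjugacy class sum of cycle type $\rho$, the insertion of the permutation $\sigma$ weighted by $q_\rho = \prod_i q_{\rho_i}$ and graded so that each cycle of $\sigma$ carries a factor $\hbar^{-1}$ (to match $m = 2g-2+n+k(\sigma)$ against the $\hbar^{2g-2+n}$ grading) is encoded by the central element $\sum_\rho \hbar^{-\ell(\rho)} q_\rho\, C_\rho$. Using the central character $C_\rho \mapsto |C_\rho|\chi^\lambda_\rho/\dim V_\lambda$ together with $|C_\rho| = d!/z_\rho$ and the Frobenius formula $s_\lambda = \sum_\rho z_\rho^{-1}\chi^\lambda_\rho\, p_\rho$, this element acts on $V_\lambda$ by $\frac{d!}{\dim V_\lambda}\, s_\lambda(\tfrac{q_1}{\hbar}, \tfrac{q_2}{\hbar}, \ldots)$. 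Hence the composite element $\Phi(\hbar) = \big(\sum_\rho \hbar^{-\ell(\rho)} q_\rho\, C_\rho\big)\,\mathcal{G}(\hbar)$ acts on $V_\lambda$ by
\[
f_\lambda = \frac{d!}{\dim V_\lambda}\, s_\lambda(\tfrac{q_1}{\hbar}, \tfrac{q_2}{\hbar}, \ldots) \prod_{\Box \in \lambda}\frac{1}{1 - c(\Box)\hbar}.
\]

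Next I would extract the target ramification data. For a central element acting on $V_\lambda$ by $f_\lambda$ one has $X = \sum_\lambda f_\lambda \tfrac{\dim V_\lambda}{d!}\sum_{\pi} \chi^\lambda(\pi^{-1})\,\pi$, so pairing the coefficient of $\pi$ against $p_{\mathrm{type}(\pi)}$ and applying the Frobenius formula once more on the $\pi$-side converts $\sum_\pi \langle X, \pi\rangle\, p_{\mathrm{type}(\pi)}$ into $\sum_\lambda f_\lambda \dim V_\lambda\, s_\lambda(p_1, p_2, \ldots)$. Substituting $f_\lambda$ and summing over $d$ then yields $\sum_\lambda s_\lambda(\mathbf{p})\, s_\lambda(\tfrac{q_1}{\hbar}, \tfrac{q_2}{\hbar}, \ldots)\prod_{\Box \in \lambda}(1-c(\Box)\hbar)^{-1}$, which is the first line of the asserted identity. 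The passage from this disconnected generating function to the connected invariants $\overline{M}_{g,n}$ in the exponent --- including the labelling factor $\frac{1}{n!}$ and the transitivity (connectedness) hypothesis in the definition --- is the standard exponential formula for transitive factorisations, with the bookkeeping of $\hbar$-powers controlled by the Riemann--Hurwitz relation $m = 2g-2+n+k(\sigma)$.

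The representation-theoretic heart of the argument (the two displayed eigenvalue computations) is classical and causes no difficulty; I expect the main obstacle to be purely a matter of careful normalisation. Specifically, one must reconcile the weight $\frac{1}{|\mathbf{d}|!}\, q_{\lambda_1}\cdots q_{\lambda_\ell}$ and the automorphism factors $\frac{1}{|\mathrm{Aut}|}$ in the definition with the $\frac{1}{d!}$ appearing in the idempotents and the $\frac{1}{n!}$ and $\prod p_{d_i}$ appearing in \cref{eq:partitionfunction}, and one must check that the $\hbar$-gradings of the monotone factor and of the $q$-insertion combine to give the claimed power $\hbar^{2g-2+n}$ uniformly. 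This is exactly the bookkeeping underlying the single monotone case, namely the specialisation $\mathbf{q} = (1,0,0,\ldots)$, for which $s_\lambda(\tfrac{1}{\hbar},0,\ldots) = \frac{\dim V_\lambda}{d!}\hbar^{-d}$ recovers the insertion $\sigma = \mathrm{id}$; the double case then requires only that this single insertion be replaced by the full class-sum insertion computed above.
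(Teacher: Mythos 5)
Your proposal is correct and is precisely the standard derivation that the paper relies on (the lemma is stated there without proof, drawing on the Jucys--Murphy / hypergeometric tau-function literature): encode the weakly monotone factorisations via Jucys' identity $\sum_m h_m(J_1,\ldots,J_d)\,\hbar^m = \prod_{k=1}^d (1-\hbar J_k)^{-1}$ acting on $V_\lambda$ by the content evaluation, encode the $q$-weighted insertion of $\sigma$ as the central class-sum element acting by $\tfrac{d!}{\dim V_\lambda}\,s_\lambda(\tfrac{q_1}{\hbar},\tfrac{q_2}{\hbar},\ldots)$, extract via the orthogonal idempotents and the Frobenius formula, and pass from disconnected to connected counts by the exponential formula. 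Your normalisation bookkeeping also checks out: the $\tfrac{1}{d!}$ in the weight cancels the $d!$ from the idempotent extraction, the grading $\hbar^{m-k(\sigma)} = \hbar^{2g-2+n}$ is additive over connected components so it survives taking the logarithm, and the passage from unlabelled cycle types $p_\mu$ to labelled cycles with the $\tfrac{1}{n!}$ factor is the standard correspondence.
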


\begin{figure}[ht!]
\begin{center}
\begin{tabularx}{\textwidth}{cccX}
\rowcolor{lightblue} $d$ & $g$ & $m_g(d)$ & $\overline{m}_g(d)$ \\ 
\rowcolor{lightred} 1 & 0 & $1$ & $q_1$ \\ 
\rowcolor{lightred} 1 & 1 & $1$ & $0$ \\ 
\rowcolor{lightred} 1 & 2 & $1$ & $0$ \\ 
\rowcolor{lightblue} 2 & 0 & $1$ & $q_2+q_1^2$ \\
\rowcolor{lightblue} 2 & 1 & $1$ & $q_2 + q_1^2$ \\
\rowcolor{lightblue} 2 & 2 & $1$ & $q_2 + q_1^2$ \\ 
\rowcolor{lightred} 3 & 0 & $2$ & $q_3 + 3q_2q_1 + 2q_1^3$ \\
\rowcolor{lightred} 3 & 1 & $10$ & $5q_3 + 15q_2q_1 + 10q_1^3$ \\
\rowcolor{lightred} 3 & 2 & $42$ & $21q_3 + 63q_2q_1 + 42q_1^3$ \\ 
\rowcolor{lightblue} 4 & 0 & $5$ & $q_4 + 4q_3q_1 + 2q_2^2 + 10q_2q_1^2 + 5q_1^4$ \\
\rowcolor{lightblue} 4 & 1 & $70$ & $15q_4 + 60q_3q_1 + 25q_2^2 + 140q_2q_1^2 + 70q_1^4$ \\
\rowcolor{lightblue} 4 & 2 & $735$ & $161q_4 + 644q_3q_1 + 252q_2^2 + 1470q_2q_1^2 + 735q_1^4$ \\ 
\rowcolor{lightred} 5 & 0 & $14$ & $q_5 + 5q_4q_1 + 5q_3q_2 + 15q_3q_1^2 + 15q_2^2q_1 + 35q_2q_1^3 + 14q_1^5$ \\
\rowcolor{lightred} 5 & 1 & $420$ & $35q_5 + 175q_4q_1 + 140q_3q_2 + 490q_3q_1^2 + 420q_2^2q_1 + 1050q_2q_1^3 + 420q_1^5$ \\
\rowcolor{lightred} 5 & 2 & $8778$ & $777q_5 + 3885q_4q_1 + 2835q_3q_2 + 10605q_3q_1^2 + 8505q_2^2q_1 + 21945q_2q_1^3 + 8778q_1^5$
\end{tabularx}
\end{center}
\caption{Table of monotone Hurwitz numbers and double monotone Hurwitz numbers.}
\end{figure}

\section{Double Schur function expansions} \label{sec:schur}

\subsection{Partition functions and 1-point invariants}

In the previous section, we established that for various choices of the formal power series $G(z)$ and the parameters $q_1, q_2, q_3, \ldots$, certain enumerative problems of geometric interest are stored in the partition function via the following equation.
\begin{align*}
Z(\mathbf{p}; \mathbf{q}; \h) &= \sum_{\lambda \in {\mathcal P}} s_\lambda(p_1, p_2, \ldots) \, s_\lambda(\tfrac{q_1}{\h}, \tfrac{q_2}{\h}, \ldots) \, \prod_{\Box \in \lambda} G(c(\Box) \h) \\
&= \exp \bigg[ \sum_{g=0}^\infty \sum_{n=1}^\infty \sum_{d_1, d_2, \ldots, d_n = 1}^\infty N_{g,n}(d_1, d_2, \ldots, d_n) \, \frac{\h^{2g-2+n}}{n!} \, p_{d_1} p_{d_2} \cdots p_{d_n} \bigg]
\end{align*}
The numbers $N_{g,n}(d_1, d_2, \ldots, d_n)$ have been referred to in the literature as {\em weighted Hurwitz numbers} and are known to enumerate certain paths in the Cayley graph of $S_{|\mathbf{d}|}$ generated by transpositions~\cite{ale-cha-eyn-har}. Furthermore, the partition function $Z(\mathbf{p}; \mathbf{q}; \h)$ is a hypergeometric tau-function for the Toda integrable hierarchy~\cite{orl-shc}.

We consider in particular the 1-point invariants $n_g(d) = d \, N_{g,1}(d)$ stored in the partition function.\footnote{The extra factor of $d$ in the definition of $n_g(d)$ will have little bearing on our results, but is introduced here for consistency with the original Harer--Zagier recursion and other results in the literature. We remark that the 1-point recursions are generally simpler with this normalisation, as can be witnessed from \cref{eq:harerzagier,eq:donorbury}.} In order to obtain information about these numbers, we deform the partition function via a parameter $s$ that keeps track of the unweighted degree in $p_1, p_2, p_3, \ldots$ and then extract the 1-point invariants by differentiation.
\begin{align*}
\left[ \frac{\partial}{\partial s} Z(s\mathbf{p}; \mathbf{q}; \h) \right]_{s=0} &= \sum_{\lambda \in {\mathcal P}} \left[ \frac{\partial}{\partial s} s_\lambda(sp_1, sp_2, \ldots) \right]_{s=0} \, s_\lambda(\tfrac{q_1}{\h}, \tfrac{q_2}{\h}, \ldots) \, \prod_{\Box \in \lambda} G(c(\Box) \h) \\
&= \sum_{g=0}^\infty \sum_{d = 1}^\infty N_{g,1}(d) \, \h^{2g-1} \, p_d
\end{align*}

At this stage, it is natural to introduce the so-called {\em principal specialisation} $p_d = x^d$ to record the degree via the single variable $x$. 
\begin{align} \label{eq:1pointpart}
\left[ \frac{\partial}{\partial s} Z(sx, sx^2, sx^3, \ldots; \mathbf{q}; \h) \right]_{s=0} 
&= \sum_{\lambda \in {\mathcal P}} \left[ \frac{\partial}{\partial s} s_\lambda(sx, sx^2, sx^3, \ldots) \right]_{s=0} \, s_\lambda(\tfrac{q_1}{\h}, \tfrac{q_2}{\h}, \ldots) \, \prod_{\Box \in \lambda} G(c(\Box) \h) \notag \\
&= \sum_{g=0}^\infty \sum_{d = 1}^\infty N_{g,1}(d) \, \h^{2g-1} \, x^d
\end{align}

\subsection{Schur function evaluations}

In this section, we deduce some facts about Schur functions that will be required at a later stage. We begin with the crucial observation that the evaluation of the Schur function appearing in \cref{eq:1pointpart} is zero unless $\lambda$ is a hook partition. Here, and throughout the paper, a {\em hook partition} refers to a partition of the form $(k, 1^{d-k})$, where $1 \leq k \leq d$.

\begin{lemma} \label{lem:schurevaluation}
\[
\left[ \frac{\partial}{\partial s} s_\lambda(sx, sx^2, sx^3, \ldots) \right]_{s=0} =
\begin{cases}
(-1)^{d-k} \, \frac{x^d}{d}, & \text{if $\lambda = (k, 1^{d-k})$ is a hook partition,} \\
0, & \text{otherwise}.
\end{cases}
\]
\end{lemma}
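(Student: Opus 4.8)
The plan is to expand the Schur function in the power-sum basis, carry out the scaling explicitly, and then isolate the term that is linear in $s$. Write $p_\mu = p_{\mu_1} p_{\mu_2} \cdots p_{\mu_\ell}$ for the power-sum product indexed by a partition $\mu$, and recall the Frobenius character formula
$$s_\lambda = \sum_{\mu \vdash d} \frac{\chi^\lambda_\mu}{z_\mu}\, p_\mu,$$
where $d = |\lambda|$, the sum is over partitions $\mu$ of $d$, the coefficient $\chi^\lambda_\mu$ is the value of the irreducible $S_d$-character indexed by $\lambda$ on the conjugacy class of cycle type $\mu$, and $z_\mu$ is the usual centraliser order.

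First I would observe that the specialisation $p_k = s x^k$ turns each monomial $p_\mu$ into $s^{\ell(\mu)} x^{|\mu|} = s^{\ell(\mu)} x^d$, where $\ell(\mu)$ is the number of parts of $\mu$. Hence
$$s_\lambda(sx, sx^2, sx^3, \ldots) = x^d \sum_{\mu \vdash d} \frac{\chi^\lambda_\mu}{z_\mu}\, s^{\ell(\mu)}.$$
Differentiating in $s$ and setting $s = 0$ annihilates every term except those with $\ell(\mu) = 1$, since the $s$-derivative of $s^{\ell(\mu)}$ at $s = 0$ vanishes unless $\ell(\mu) = 1$. The only partition of $d$ with a single part is $\mu = (d)$, so the derivative reduces to $\frac{\chi^\lambda_{(d)}}{z_{(d)}}\, x^d$, and since $z_{(d)} = d$ this equals $\frac{\chi^\lambda_{(d)}}{d}\, x^d$.

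It then remains to compute the character value $\chi^\lambda_{(d)}$ of a full $d$-cycle, which is the heart of the lemma. Here I would invoke the Murnaghan--Nakayama rule, by which $\chi^\lambda_{(d)}$ is a signed count of ways to tile the Young diagram of $\lambda$ with a single border strip of size $d$. Because $|\lambda| = d$, the entire diagram must itself be one border strip, and a Young diagram is a border strip precisely when it contains no $2 \times 2$ block, that is, precisely when $\lambda$ is a hook $(k, 1^{d-k})$. For a non-hook $\lambda$ there is no such tiling, so $\chi^\lambda_{(d)} = 0$; for the hook $(k, 1^{d-k})$ the strip occupies $d-k+1$ rows, so its height (rows minus one) is $d-k$ and the Murnaghan--Nakayama sign is $(-1)^{d-k}$, whence $\chi^\lambda_{(d)} = (-1)^{d-k}$. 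Substituting these two cases yields exactly the claimed formula.

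The genuinely substantive step is the last one: everything preceding it is bookkeeping with the Frobenius expansion and the scaling, whereas the vanishing on non-hooks and the precise sign on hooks rest on the combinatorics of border-strip tilings. If one prefers to avoid the Murnaghan--Nakayama machinery, an alternative is to note that $\frac{\chi^\lambda_{(d)}}{d}$ is the coefficient of $p_d$ in $s_\lambda$, equivalently $\langle s_\lambda, p_d \rangle / z_{(d)}$ under the Hall inner product, and to evaluate this via a dual Cauchy or Jacobi--Trudi expansion; but the border-strip argument is the most direct route to both the hook characterisation and the sign.
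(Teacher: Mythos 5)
Your proof is correct, and it reaches the lemma by a genuinely different route from the paper's. The paper never passes through the character theory of $S_d$: it applies the hook-content formula $s_\lambda(s,s,s,\ldots) = \prod_{\Box \in \lambda} \frac{s+c(\Box)}{h(\Box)}$ (stated there as \cref{eq:hookcontent}), notes that a non-hook diagram contains at least two boxes of content $0$, so this specialisation is divisible by $s^2$ and its $s$-derivative at $s=0$ vanishes, and for a hook $(k,1^{d-k})$ substitutes the explicit multisets of hook-lengths and contents to obtain $s_\lambda(s,s,\ldots) = (-1)^{d-k}\,\frac{(s+k-1)(s+k-2)\cdots(s+k-d)}{d\,(k-1)!\,(d-k)!}$, whose derivative at $s=0$ is $\frac{(-1)^{d-k}}{d}$; the powers of $x$ are then reinstated by weighted homogeneity of Schur functions. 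You instead expand $s_\lambda$ in the power-sum basis via the Frobenius formula, observe that differentiating at $s=0$ extracts precisely the $\mu=(d)$ term, and evaluate $\chi^\lambda_{(d)}$ by the Murnaghan--Nakayama rule; all of these steps check out, including the identification of border-strip straight shapes with hooks and the height count $d-k$. The two arguments buy different things: yours explains conceptually why hooks and the sign $(-1)^{d-k}$ appear (a single border strip of height $d-k$) and isolates the general principle that the linear-in-$s$ part of the specialisation is the coefficient of $p_d$ in $s_\lambda$; the paper's proof stays entirely within elementary symmetric-function evaluations, requiring no representation theory, and it establishes en route the hook-content identity that the paper reuses elsewhere, for instance to rewrite $s_\lambda(\tfrac{1}{\h},\tfrac{1}{\h},\tfrac{1}{\h},\ldots)$ in terms of $s_\lambda(\tfrac{1}{\h},0,0,\ldots)$ in the dessin d'enfant discussion.
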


\begin{proof}
The lemma follows from the hook-content formula~\cite{mac}, which states that
\begin{equation} \label{eq:hookcontent}
s_\lambda(s, s, s, \ldots) = \prod_{\Box \in \Lambda} \frac{s+c(\Box)}{h(\Box)},
\end{equation}
where $c(\Box)$ and $h(\Box)$ denote the content and hook-length of a box in the Young diagram for $\lambda$, respectively.

If $\lambda$ is a non-empty partition that is not a hook, then its Young diagram contains at least two boxes with content 0. So the hook-content formula implies that $s_\lambda(s, s, s, \ldots)$ is a polynomial divisible by $s^2$ and it follows that 
\[
\left[ \frac{\partial}{\partial s} s_\lambda(sx, sx^2, sx^3, \ldots) \right]_{s=0} = 0.
\]

If $\lambda = (k, 1^{d-k})$ is a hook partition, then its hook-lengths are $\{1, 2, \ldots, k-1\} \cup \{1, 2 ,\ldots, d-k\} \cup \{d\}$, while its contents are $\{1, 2, \ldots, k-1\} \cup \{-1, -2, \ldots, -(d-k)\} \cup \{0\}$. Thus, we obtain
\[
s_\lambda(s, s, s, \ldots) = (-1)^{d-k} \, \frac{(s+k-1) (s+k-2) \cdots (s+k-d)}{d (k-1)! (d-k)!}.
\]
By directly differentiating with respect to $s$ and evaluating at $s = 0$, we obtain
\[
\left[ \frac{\partial}{\partial s} s_\lambda(s, s, \ldots) \right]_{s=0} = \frac{(-1)^{d-k}}{d}.
\]
The powers of $x$ appearing in the statement of the lemma can be reinstated, using the fact that Schur functions are weighted homogeneous.
\end{proof}

Now use \cref{lem:schurevaluation} in \cref{eq:1pointpart} to obtain the following.
\begin{align*}
\left[ \frac{\partial}{\partial s} Z(sx, sx^2, sx^3, \ldots; \mathbf{q}; \h) \right]_{s=0} &= \sum_{g=0}^\infty \sum_{d = 1}^\infty N_{g,1}(d) \h^{2g-1} x^d \\
&= \sum_{d=1}^\infty \sum_{k=1}^d (-1)^{d-k} \frac{x^d}{d} \, s_{(k, 1^{d-k})}(\tfrac{q_1}{\h}, \tfrac{q_2}{\h}, \ldots) \, \prod_{\Box \in \lambda} G(c(\Box) \h)
\end{align*}
Extracting the $x^d$ coefficient yields the following result.

\begin{lemma} \label{lem:1pointpart}
The 1-point invariants $n_g(d) = d \, N_{g,1}(d)$ defined by \cref{eq:partitionfunction} satisfy
\[
\sum_{g=0}^\infty n_g(d) \,\h^{2g-1} = \sum_{k=1}^d (-1)^{d-k} \, s_{(k, 1^{d-k})}(\tfrac{q_1}{\h}, \tfrac{q_2}{\h}, \ldots) \, \prod_{i=1}^d G((k-i)\h),
\]
for every positive integer $d$.
\end{lemma}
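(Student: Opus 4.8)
The plan is to read off the identity by comparing the two expressions for the first $s$-derivative of the principally-specialised partition function at $s = 0$, which are already assembled just above the statement. On the one hand, expanding the exponential form of $Z$ and differentiating gives
\[
\left[ \frac{\partial}{\partial s} Z(sx, sx^2, sx^3, \ldots; \mathbf{q}; \h) \right]_{s=0} = \sum_{g=0}^\infty \sum_{d=1}^\infty N_{g,1}(d) \, \h^{2g-1} \, x^d.
\]
On the other hand, differentiating the Schur-function form term by term and invoking \cref{lem:schurevaluation} annihilates every $\lambda$ that is not a hook and replaces the hook contribution by $(-1)^{d-k} x^d / d$, so the same quantity equals $\sum_{d \geq 1} \sum_{k=1}^d (-1)^{d-k} \frac{x^d}{d} \, s_{(k,1^{d-k})}(\tfrac{q_1}{\h}, \tfrac{q_2}{\h}, \ldots) \, \prod_{\Box \in (k,1^{d-k})} G(c(\Box)\h)$. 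Extracting the coefficient of $x^d$ and using the normalisation $n_g(d) = d \, N_{g,1}(d)$ to clear the factor $\frac{1}{d}$ then yields the claimed formula, up to rewriting the content product.

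The one computational point that remains is to identify $\prod_{\Box \in (k,1^{d-k})} G(c(\Box)\h)$ with $\prod_{i=1}^d G((k-i)\h)$. I would do this by listing the contents of a hook directly: the arm occupying row $1$, columns $1, \ldots, k$, has contents $0, 1, \ldots, k-1$, while the leg occupying column $1$, rows $2, \ldots, d-k+1$, has contents $-1, -2, \ldots, -(d-k)$. The union of these content values is precisely $\{k - i : i = 1, \ldots, d\}$, since as $i$ runs from $1$ to $d$ the integer $k-i$ runs over $k-1, k-2, \ldots, -(d-k)$ without repetition. Reindexing the product over contents accordingly produces the stated right-hand side.

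The manipulations above require interchanging differentiation, summation over $\lambda$, and extraction of the $x^d$-coefficient, and I would justify these by the grading by $|\lambda|$. For each fixed $\lambda$ the Schur function $s_\lambda$ is a polynomial in the power sums, so $s_\lambda(\tfrac{q_1}{\h}, \tfrac{q_2}{\h}, \ldots)$ is well-defined; moreover, under the specialisation $p_j = s x^j$ the function $s_\lambda$ is weighted-homogeneous of degree $|\lambda|$ in $x$, and $\partial_s|_{s=0}$ picks out exactly the coefficient of the power sum $p_{|\lambda|}$. Hence for each fixed $d$ only the finitely many partitions with $|\lambda| = d$ contribute to the coefficient of $x^d$, and the termwise operations are legitimate on each graded piece. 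This is the only place where one must be mildly careful; the content bookkeeping is the substantive-looking but ultimately routine step, and there is no genuine analytic obstacle.
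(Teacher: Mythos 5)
Your proposal is correct and takes essentially the same route as the paper: the paper likewise compares the two expansions of $\left[\frac{\partial}{\partial s} Z(sx, sx^2, \ldots; \mathbf{q}; \h)\right]_{s=0}$, invokes \cref{lem:schurevaluation} to kill all non-hook partitions and replace the hook term by $(-1)^{d-k}\frac{x^d}{d}$, and extracts the coefficient of $x^d$ to clear the factor $\frac{1}{d}$ against $n_g(d) = d\,N_{g,1}(d)$. Your explicit check that the contents of $(k,1^{d-k})$ are $\{0,1,\ldots,k-1\}\cup\{-1,\ldots,-(d-k)\} = \{k-i : 1 \leq i \leq d\}$ is the same bookkeeping the paper records inside the proof of \cref{lem:schurevaluation}.
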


We will later be interested in setting the parameter $q_i = 0$ for $i$ sufficiently large. In this case, we write $s_\lambda(\tfrac{q_1}{\h}, \tfrac{q_2}{\h}, \ldots, \tfrac{q_r}{\h})$ to mean the Schur function $s_\lambda(\tfrac{q_1}{\h}, \tfrac{q_2}{\h}, \ldots)$ evaluated at $q_{r+1} = q_{r+2} = \cdots = 0$.

We complete the section by presenting the following relations concerning Schur functions, which will be useful for the next section~\cite{mac}.
\begin{lemma} \label{lem:hookschur}
The Schur function indexed by the hook $(k, 1^{d-k})$ can be expressed as
\[
s_{(k, 1^{d-k})}(\mathbf{p}) = \sum_{j=1}^k (-1)^{j+1} \, h_{k-j}(\mathbf{p}) \, e_{d-k+j}(\mathbf{p}).
\]
Here, $h_n$ and $e_n$ respectively denote the homogeneous and elementary symmetric functions, which can in turn be expressed in terms of power sum symmetric functions via
\[
\sum_{n=0}^\infty h_n(\mathbf{p}) \, x^n = \exp \bigg[ \sum_{k=1}^\infty \frac{p_k}{k} x^k \bigg] \qquad \text{and} \qquad \sum_{n=0}^\infty e_n(\mathbf{p}) \, x^n = \exp \bigg[ \sum_{k=1}^r (-1)^{k-1} \frac{p_k}{k} x^k \bigg].
\]
In the case $p_1 = s$ and $p_k = 0$ for $k \geq 2$, the above expression evaluates to
\[
s_{(k, 1^{d-k})}(s, 0, 0, \ldots) = \binom{d-1}{k-1} \frac{s^d}{d!}.
\]
\end{lemma}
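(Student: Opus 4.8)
The plan is to establish the three assertions in turn, noting that the first two are standard identities of symmetric function theory (see \cite{mac}) and that the only genuine computation lies in the principal specialisation of the third. For the hook expansion, the cleanest self-contained route I would take is to iterate the Pieri rule rather than invoke the Giambelli/Jacobi--Trudi determinant directly. Writing $h_a = s_{(a)}$ and $e_b = s_{(1^b)}$, the Pieri rule for multiplication by an elementary symmetric function adds a vertical strip, so that
\[
h_{k-1}\,e_{d-k+1} = s_{(k,1^{d-k})} + s_{(k-1,1^{d-k+1})},
\]
since the only partitions obtained from the single row $(k-1)$ by adjoining a vertical $(d-k+1)$-strip are $(k,1^{d-k})$ and $(k-1,1^{d-k+1})$. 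Rearranging gives a recursion that decreases the arm length by one, and iterating it down to the single-column partition $(1^d)$, whose Schur function is $e_d$, yields exactly $s_{(k,1^{d-k})} = \sum_{j=1}^{k}(-1)^{j+1}h_{k-j}\,e_{d-k+j}$, with the telescoping signs producing the alternation.

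The two generating-function formulas are the defining passage between the $h$- and $e$-bases and the power-sum basis. I would recall $\sum_n h_n x^n = \prod_i (1-x_i x)^{-1}$ and $\sum_n e_n x^n = \prod_i (1+x_i x)$, take logarithms to obtain $\sum_i(-\log(1-x_i x)) = \sum_{k\ge 1}\frac{p_k}{k}x^k$ and $\sum_i\log(1+x_i x) = \sum_{k\ge 1}(-1)^{k-1}\frac{p_k}{k}x^k$, and then exponentiate; this is standard \cite{mac}.

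The substantive step is the specialisation $p_1 = s$, $p_k = 0$ for $k \ge 2$, which I would treat formally through these generating functions rather than as a finite-variable evaluation. Both exponentials then collapse to $\exp(sx)$, giving $h_n = e_n = s^n/n!$ for all $n$. Substituting into the hook expansion produces
\[
s_{(k,1^{d-k})}(s,0,0,\ldots) = s^{d}\sum_{j=1}^{k}\frac{(-1)^{j+1}}{(k-j)!\,(d-k+j)!} = \frac{s^{d}}{d!}\sum_{j=1}^{k}(-1)^{j+1}\binom{d}{k-j},
\]
and after the reindexing $m = k-j$ the claim reduces to the alternating binomial identity $\sum_{m=0}^{k-1}(-1)^{m}\binom{d}{m} = (-1)^{k-1}\binom{d-1}{k-1}$. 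I expect this final identity to be the only point requiring care, and it is not a deep obstacle: applying Pascal's rule $\binom{d}{m} = \binom{d-1}{m} + \binom{d-1}{m-1}$ splits the partial sum into two pieces that telescope against each other, leaving precisely $(-1)^{k-1}\binom{d-1}{k-1}$ and hence the stated coefficient $\binom{d-1}{k-1}/d!$. The main risk throughout is sign- and index-bookkeeping in the three reindexings, so I would pin down the formulas on the small cases $d=1,2$ before assembling the general argument.
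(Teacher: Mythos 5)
Your proof is correct, and it is worth noting that the paper itself offers no proof at all: it states the lemma with a bare citation to Macdonald~\cite{mac}, where the hook expansion appears as a consequence of the Jacobi--Trudi machinery and the specialisation follows from standard principal-specialisation formulas. Your route is genuinely different and fully self-contained. The Pieri-rule telescoping $h_{k-1}\,e_{d-k+1} = s_{(k,1^{d-k})} + s_{(k-1,1^{d-k+1})}$ is valid (for $k \geq 2$, with base case $s_{(1^d)} = e_d$, which your iteration down to the single column handles correctly), and the induction on the arm length produces exactly the stated alternating sum. The generating-function passage to the power sums is standard, as you say; note in passing that the upper limit $r$ in the paper's displayed formula for $e_n$ is evidently a typo for $\infty$, and your version is the correct one. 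Your computation of the specialisation also checks out: $h_n = e_n = s^n/n!$, the reindexing $m = k-j$ carries the sign $(-1)^{j+1} = (-1)^{k+1}(-1)^m$, and the telescoping Pascal argument gives $\sum_{m=0}^{k-1}(-1)^m\binom{d}{m} = (-1)^{k-1}\binom{d-1}{k-1}$, so the signs cancel to yield $\binom{d-1}{k-1}\,s^d/d!$. The one shortcut you could have taken for the third part: the specialisation $p_1 = s$, $p_k = 0$ for $k \geq 2$ gives $s_\lambda(s,0,0,\ldots) = s^{|\lambda|}\prod_{\Box \in \lambda} h(\Box)^{-1}$ by the hook-length formula, and for the hook $(k,1^{d-k})$ the hook lengths $\{1,\ldots,k-1\} \cup \{1,\ldots,d-k\} \cup \{d\}$ (already listed in the paper's proof of \cref{lem:schurevaluation}) immediately give $\frac{s^d}{d\,(k-1)!\,(d-k)!} = \binom{d-1}{k-1}\frac{s^d}{d!}$, bypassing the binomial identity entirely. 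What your longer route buys instead is a verification that the three displayed formulas of the lemma are mutually consistent, which is a reasonable trade.
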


\section{Recursions for 1-point functions} \label{sec:recursion}

\subsection{Holonomic sequences and functions}

A sequence $a_0, a_1, a_2, \ldots$ is said to be {\em holonomic over $\mathbb{K}$} if the terms satisfy a non-zero linear difference equation of the form
\begin{equation}
p_r(d) \, a_{d+r} + p_{r-1}(d) \, a_{d+r-1} + \cdots + p_1(d) \, a_{d+1} + p_0(d) \, a_d = 0,
\end{equation}
where $p_0, p_1, \ldots, p_r$ are polynomials over the field $\mathbb{K}$ of characteristic 0. Moreover, a formal power series $A(x) = \displaystyle\sum_{d=0}^\infty a_d \, x^d$ is said to be {\em holonomic over $\mathbb{K}$} if it satisfies a non-zero linear differential equation of the form
\begin{equation}
\left[ P_r(x) \, \frac{\partial^r}{\partial x^r} + P_{r-1}(x) \, \frac{\partial^{r-1}}{\partial x^{r-1}} + \cdots + P_1(x) \, \frac{\partial}{\partial x} + P_0(x) \right] A(x) = 0,
\end{equation}
where $P_0, P_1, \ldots, P_r$ are polynomials over $\mathbb{K}$. The dual use of the term ``holonomic'' is due to the elementary fact that the sequence $a_0, a_1, a_2, \ldots$ is holonomic over $\mathbb{K}$ if and only if the formal power series $a_0 + a_1 x + a_2 x^2 + \cdots$ is holonomic over $\mathbb{K}$. For our applications, we will use the ground field $\mathbb{K} = \mathbb{C}(\h)$.

\begin{lemma} \label{lem:holonomic}
A 1-point recursion exists for the numbers $n_g(d)$ in the sense of~\cref{def:1pointrecursion} if and only if the formal power series
\[
F(x, \h) = \sum_{d=1}^\infty \sum_{g=0}^\infty n_g(d) \, \h^{2g-1} \, x^d
\]
is holonomic over $\mathbb{C}(\h)$.
\end{lemma}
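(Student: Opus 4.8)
The plan is to prove the equivalence by translating both conditions into statements about holonomicity, using the language of \cref{def:1pointrecursion} on the one side and the definition of holonomic sequences and functions on the other. The key structural observation is that the bivariate series $F(x, \h)$ should be regarded as a univariate power series in $x$ with coefficients in the ground field $\mathbb{C}(\h)$ (or rather in a suitable ring containing it), so that holonomicity of $F$ over $\mathbb{C}(\h)$ is, by the elementary fact recalled in the preceding paragraph, equivalent to holonomicity of the coefficient sequence $c_d = \sum_g n_g(d) \, \h^{2g-1}$ over $\mathbb{C}(\h)$. Thus the real content is to show that the existence of a 1-point recursion in the sense of \cref{def:1pointrecursion} is the same thing as the sequence $(c_d)_{d \geq 1}$ satisfying a linear difference equation over $\mathbb{C}(\h)$.

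First I would unpack the definition of a 1-point recursion. A relation of the form $\sum_{i,j} p_{ij}(d) \, n_{g-i}(d-j) = 0$ holds for all admissible $g$. The natural move is to multiply through by $\h^{2g-1}$ and sum over $g$; because the shift $g \mapsto g-i$ contributes a factor $\h^{2i}$ when reindexed, each inner sum $\sum_g n_{g-i}(d-j) \, \h^{2g-1}$ becomes $\h^{2i} \, c_{d-j}$. Carrying this out, the double sum collapses to
\[
\sum_{j=0}^{j_{\max}} \Bigl( \sum_{i=0}^{i_{\max}} \h^{2i} \, p_{ij}(d) \Bigr) c_{d-j} = 0.
\]
Setting $P_j(d, \h) = \sum_i \h^{2i} \, p_{ij}(d)$, this is precisely a linear difference equation for $(c_d)$ with coefficients $P_j(d, \h) \in \mathbb{C}(\h)[d]$, i.e. polynomials in $d$ over $\mathbb{C}(\h)$. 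The condition that the $p_{ij}$ are not all zero is exactly the condition that the resulting difference equation is non-trivial, since the powers $\h^{2i}$ are linearly independent over $\mathbb{C}(d)$ and so no cancellation can annihilate a non-zero $P_j$. This gives the forward direction: a 1-point recursion yields a holonomic recurrence for $(c_d)$.

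For the converse, I would run the same computation in reverse. Starting from any non-zero linear difference equation $\sum_j Q_j(d) \, c_{d-j} = 0$ with $Q_j(d) \in \mathbb{C}(\h)[d]$, I would clear denominators in $\h$ to assume $Q_j(d) \in \mathbb{C}[\h][d]$, then expand each $Q_j(d)$ as a polynomial in $\h$ and read off its coefficients as the $p_{ij}(d)$ — separating even and odd powers of $\h$ and matching against the $\h^{2g-1}$-grading recovers the recursion \cref{eq:1pointrecursion}. The one point requiring care here is bookkeeping: the sum $c_d = \sum_g n_g(d) \, \h^{2g-1}$ runs over a single parity of powers of $\h$, so I must verify that extracting the coefficient of each power $\h^{2g-1}$ on both sides of $\sum_j Q_j(d) \, c_{d-j} = 0$ recovers precisely a relation of the declared form; this is where the substitution $i \mapsto$ (power of $\h$)$/2$ must be checked to land on integers. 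I expect this parity/degree bookkeeping to be the only genuine obstacle — the equivalence is otherwise a formal dictionary between two encodings of the same linear relation, and the underlying commutation between summing over $g$ and shifting in $d$ is unconditional since everything takes place in the ring of formal power series.
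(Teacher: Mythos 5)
Your proposal is correct and follows essentially the same route as the paper: the paper's proof likewise multiplies the recursion by $\h^{2g-1} x^d$ and sums, with the shift $g \mapsto g-i$ producing the factor $\h^{2i}$, the only cosmetic difference being that the paper manipulates the annihilating differential operator for $F(x,\h)$ directly (substituting $p_{ij}\big(x\tfrac{\partial}{\partial x}\big)$ and normal-ordering via $[\tfrac{\partial}{\partial x}, x] = 1$) where you route through the coefficient sequence $c_d = \sum_g n_g(d)\,\h^{2g-1}$ and the sequence/series equivalence over $\mathbb{C}(\h)$ stated just before the lemma. The parity bookkeeping you flag does close exactly as you expect --- splitting each cleared coefficient $Q_j$ into its even and odd parts in $\h$, the two resulting families of relations must vanish separately since they contribute to odd and even powers of $\h$ respectively, and a non-zero odd family yields a 1-point recursion after dividing through by $\h$ --- a point the paper's ``after collecting terms'' glosses over in the same way.
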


\begin{proof}
If $F(x, \h)$ is holonomic, then there exist polynomials $P_0, P_1, \ldots, P_r$ with coefficients in $\mathbb{C}(\h)$ such that
\[
\left[ P_r(x) \, \frac{\partial^r}{\partial x^r} + P_{r-1}(x) \, \frac{\partial^{r-1}}{\partial x^{r-1}} + \cdots + P_1(x) \, \frac{\partial}{\partial x} + P_0(x) \right] F(x, \h) = 0.
\]
One can assume that the coefficients of $P_0, P_1, \ldots, P_r$ actually lie in $\mathbb{C}[\h]$, by clearing denominators in the equation above. Thus, the equation takes the form 
\begin{equation} \label{eq:holonomiclemma}
\left[ \sum _{i,j,k=0}^{\text{finite}} C_{ijk} \, \h^i x^j \frac{\partial^k}{\partial x^k} \right] F(x, \h) = 0,
\end{equation}
for some complex constants $C_{ijk}$. Applying $C_{ijk} \, \h^i x^j \frac{\partial^k}{\partial x^k}$ to a term $n_g(d) \, \h^{2g-1} \, x^d$ in the expansion for $F(x, \h)$ has the effect of shifting the powers of $\h$ and $x$, and introducing a factor that is polynomial in~$d$. So after collecting terms in the resulting equation, one obtains a relation of the form of \cref{eq:1pointrecursion}. Therefore, there exists a 1-point recursion for the numbers $n_g(d)$.

Conversely, suppose that there exists a 1-point recursion for the numbers $n_g(d)$, so there exists a relation of the form of \cref{eq:1pointrecursion}. Multiplying both sides by $\h^{2g-1} \, x^d$ and summing over $g$ and $d$ yields
\[
\sum_{d=1}^\infty \sum_{g=0}^\infty \sum_{i=0}^{i_{\max}} \sum_{j=0}^{j_{\max}} p_{ij}(d) \, n_{g-i}(d-j) \, \h^{2g-1} \, x^d = 0.
\]
Now replace $p_{ij}(d) \, x^d$ with $p_{ij} \big(x \frac{\partial}{\partial x} \big) \, x^d$ and reindex the summations over $d$ and $g$ to obtain
\[
\sum_{d=1}^\infty \sum_{g=0}^\infty \sum_{i=0}^{i_{\max}} \sum_{j=0}^{j_{\max}} p_{ij}\big( x\tfrac{\partial}{\partial x} \big) \, \h^{2i} \, x^j \, n_g(d) \h^{2g-1} \, x^d = 0 \qquad \Rightarrow \qquad \bigg[ \sum_{i=0}^{i_{\max}} \sum_{j=0}^{j_{\max}} p_{ij}\big( x\tfrac{\partial}{\partial x} \big) \, \h^{2i} \, x^j \bigg] F(x, \h) = 0.
\]
This final equation can be expressed in the form of \cref{eq:holonomiclemma} by applying the commutation relation $[\frac{\partial}{\partial x}, x] = 1$. It then follows that $F(x, \h)$ is holonomic over $\mathbb{C}(\h)$.
\end{proof}

The following result lists some closure properties, which provide standard tools to prove holonomicity~\cite{kau-pau}.

\begin{proposition} \label{prop:holonomic}
Let $A(x) = \displaystyle\sum_{d=0}^\infty a_d \, x^d$ and $B(x) = \displaystyle\sum_{d=0}^\infty b_d \, x^d$ be holonomic over a field $\mathbb{K}$ of characteristic zero. Then
\begin{enumerate} [label={(\alph*)}, noitemsep]
\item $\alpha A(x) + \beta B(x)$ is holonomic for all $\alpha, \beta \in \mathbb{K}$;
\item the Cauchy product $A(x) \, B(x)$ and the Hadamard product $\big (a_n b_n \big)_{n = 0, 1, 2, \ldots}$ are holonomic;
\item the derivative $a'(x)$ and the forward shift $\big( a_{n+1} \big)_{n=0, 1, 2, \ldots}$ are holonomic; and
\item the integral $\int^x A(x)\, \dd x$ and the indefinite sum $\big( \sum_{k=0}^n a_k \big)_{n=0, 1, 2, \ldots}$ are holonomic.
\end{enumerate}
\end{proposition}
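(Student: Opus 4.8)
The plan is to reduce all six closure properties to a single characterization of holonomicity in terms of finite-dimensional vector spaces, which is the standard engine behind results of this type. First I would establish the reformulation: a power series $A(x)$ is holonomic over $\mathbb{K}$ if and only if the $\mathbb{K}(x)$-vector space $V_A = \mathrm{span}_{\mathbb{K}(x)}\{A, A', A'', \ldots\}$ spanned by $A$ and all its derivatives is finite-dimensional. One direction is immediate: an order-$r$ differential equation expresses $A^{(r)}$ as a $\mathbb{K}(x)$-combination of the lower derivatives, and differentiating this relation shows by induction that every $A^{(n)}$ with $n \geq r$ lies in $\mathrm{span}\{A, \ldots, A^{(r-1)}\}$. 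Conversely, if $\dim_{\mathbb{K}(x)} V_A = r$, then $A, A', \ldots, A^{(r)}$ are linearly dependent over $\mathbb{K}(x)$, and clearing denominators produces a differential equation with coefficients in $\mathbb{K}[x]$, matching the definition. Since $(a_d)$ is holonomic exactly when $A(x)$ is, I would record the dual statement for sequences: $(a_d)$ is holonomic if and only if the space $W_a = \mathrm{span}_{\mathbb{K}(d)}\{(a_d), (a_{d+1}), (a_{d+2}), \ldots\}$ of forward shifts, with $\mathbb{K}(d)$ acting by multiplication by rational functions of the index, is finite-dimensional.

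With this characterization in hand, most parts follow by exhibiting the relevant spanning space as finite-dimensional. For (a), every derivative of $\alpha A + \beta B$ lies in $V_A + V_B$, so $V_{\alpha A + \beta B} \subseteq V_A + V_B$ is finite-dimensional. For the Cauchy product in (b), the Leibniz rule
\[
(AB)^{(n)} = \sum_{k=0}^n \binom{n}{k} A^{(k)} B^{(n-k)}
\]
shows that every derivative of $AB$ lies in the product space $V_A \cdot V_B = \mathrm{span}_{\mathbb{K}(x)}\{fg : f \in V_A, g \in V_B\}$, whose dimension is at most $(\dim V_A)(\dim V_B)$. For (c), the derivatives of $A'$ are among those of $A$, so $V_{A'} \subseteq V_A$, while the shifts of $(a_{d+1})$ all lie in $W_a$, so $W_{(a_{d+1})} \subseteq W_a$. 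For (d), the integral satisfies $(\int^x A)' = A$, so $V_{\int A} \subseteq \mathbb{K}(x) \cdot \int^x A + V_A$ gains at most one dimension; dually, the partial sums $s_d = \sum_{k=0}^d a_k$ satisfy $s_{d+1} - s_d = a_{d+1}$, giving $W_{(s_d)} \subseteq \mathbb{K}(d) \cdot (s_d) + W_a$.

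The one part that resists the naive derivative-span argument is the Hadamard product, and this is where I expect the main obstacle to lie: the derivatives of $\sum_d a_d b_d \, x^d$ do not decompose through $V_A$ and $V_B$. The clean resolution is to pass to the sequence side, where the Hadamard product becomes the termwise product $(a_d b_d)$ and plays exactly the role that the Cauchy product plays for functions. Concretely, the hypotheses make $(a_d)$ and $(b_d)$ holonomic sequences, so $W_a$ and $W_b$ are finite-dimensional; the shift of $(a_d b_d)$ is $(a_{d+1} b_{d+1})$, and an induction shows that every shift of $(a_d b_d)$ lies in $W_a \cdot W_b = \mathrm{span}_{\mathbb{K}(d)}\{(u_d v_d) : (u_d) \in W_a, \, (v_d) \in W_b\}$, of dimension at most $(\dim W_a)(\dim W_b)$. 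Hence $(a_d b_d)$ is holonomic, and by the sequence--series equivalence its generating function is holonomic as well.

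The remaining work is bookkeeping that I would only sketch: verifying that each claimed inclusion of spanning spaces is genuine, and that the passage between a $\mathbb{K}(x)$-coefficient (respectively $\mathbb{K}(d)$-coefficient) dependence relation and the polynomial-coefficient form demanded by the definition is always achieved by clearing denominators. None of this introduces new difficulties, so the substance of the argument is the finite-dimensionality characterization together with the bilinear product-space bounds underlying the Cauchy and Hadamard products.
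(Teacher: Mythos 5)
The paper offers no proof of this proposition at all: it is quoted as a standard closure-property result with a citation to the literature \cite{kau-pau}, so there is nothing internal to compare against. Your argument is correct and is essentially the canonical proof found in that reference: characterise holonomicity by finite-dimensionality of the $\mathbb{K}(x)$-span of derivatives (respectively, the $\mathbb{K}(d)$-span of shifts), then verify each closure property by an inclusion of spanning spaces, using Leibniz for the Cauchy product and the observation that the $n$-th shift of $(a_d b_d)$ is the termwise product of the $n$-th shifts, so the Hadamard product is handled on the sequence side exactly as the Cauchy product is handled on the function side. Two of the points you dismiss as bookkeeping deserve an explicit word, though neither threatens the proof. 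First, for $V_A$ to be a genuine $\mathbb{K}(x)$-vector space you should work inside the Laurent series field $\mathbb{K}((x))$ (or a localisation of $\mathbb{K}[[x]]$), since $1/x$ does not act on $\mathbb{K}[[x]]$. Second, on the sequence side a rational function of $d$ may have poles at nonnegative integers, and the leading polynomial $p_r(d)$ of a recurrence may vanish at finitely many indices, so your claimed equivalence ``$(a_d)$ holonomic iff $W_a$ is finite-dimensional'' is only literally true for sequences considered up to modification of finitely many terms (equivalently, as germs at $d = \infty$); conversely, a recurrence valid only for $d \geq d_0$ is converted into one valid for all $d \geq 0$, as the definition demands, by multiplying through by $d(d-1)\cdots(d-d_0+1)$. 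With those conventions fixed, all of your inclusions $V_{\alpha A + \beta B} \subseteq V_A + V_B$, $V_{AB} \subseteq V_A \cdot V_B$, $V_{A'} \subseteq V_A$, $V_{\int A} \subseteq \mathbb{K}(x)\cdot\int^x A + V_A$, and their sequence analogues including $W_{(a_d b_d)} \subseteq W_a \cdot W_b$ and $W_{(s_d)} \subseteq \mathbb{K}(d)\cdot(s_d) + W_a$ are genuine, the dimension bounds $(\dim V_A)(\dim V_B)$ and $\dim V_A + 1$ are correct, and clearing denominators yields the polynomial-coefficient equations required by the definition.
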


\begin{definition}
We define the {\em order} and {\em degree} of the difference equation 
\[
p_r(d) \, a_{d+r} + p_{r-1}(d) \, a_{d+r-1} + \cdots + p_1(d) \, a_{d+1} + p_0(d) \, a_d = 0
\]
to be $r$ (assuming $p_r(d) \neq 0$) and $\max \{\deg p_0, \deg p_1, \ldots, \deg p_r\}$, respectively. Similarly, we define the {\em order} and {\em degree} of the differential equation
\[
\left[ P_r(x) \, \frac{\partial^r}{\partial x^r} + P_{r-1}(x) \, \frac{\partial^{r-1}}{\partial x^{r-1}} + \cdots + P_1(x) \, \frac{\partial}{\partial x} + P_0(x) \right] A(x) = 0
\]
to be $r$ (assuming $P_r(x) \neq 0$) and $\max \{\deg P_0, \deg P_1, \ldots, \deg P_r\}$, respectively.
\end{definition}

\begin{remark*}
Note that for a fixed holonomic sequence or function, there are difference or differential operators of many possible orders and degrees that annihilate it. Furthermore, it is not generally true that there exists such an operator that simultaneously minimises both the order and the degree. Thus, one does not usually refer to the order and degree of a holonomic sequence or function itself, but to the order and degree of a particular operator.
\end{remark*}

\subsection{Multivariate holonomic functions}

There are competing ways in which the notion of holonomicity may be generalised to the case of many variables, but the following is well-suited to our purposes. Let $\mathbf{x} = (x_1, x_2, \ldots, x_n)$ and let $\mathbb{K}[[\mathbf{x}]] = \mathbb{K}[[x_1, x_2, \ldots, x_n]]$. A multivariate formal power series $A(\mathbf{x}) \in \mathbb{K}[[\mathbf{x}]]$ is said to be {\em holonomic over~$\mathbb{K}$} --- also commonly known as {\em D-finite} --- if the set of derivatives
\[
\left\{ \left. \frac{\partial^{i_1+i_2+\cdots+i_n}}{\partial x_1^{i_1} \, \partial x_2^{i_2} \, \cdots \, \partial x_n^{i_n}} A(\mathbf{x}) ~\right|~ i_1, i_2, \ldots, i_n \in \mathbb{Z}_{\geq 0} \right\}
\]
lie in a finite-dimensional vector space over $\mathbb{K}(\mathbf{x})$. This is equivalent to the fact that $A(\mathbf{x})$ satisfies a system of linear partial differential equations of the form 
\begin{equation}
\left[ P_{i, r}(\mathbf{x}) \, \frac{\partial^r}{\partial x_i^r} +P_{i, r-1}(\mathbf{x}) \, \frac{\partial^{r-1}}{x_i^{r-1}} + \cdots + P_{i, 0}(\mathbf{x}) \right] A(\mathbf{x}) = 0, \qquad \text{for } i=1, 2, \ldots, n,
\end{equation}
where $P_{ij}(x)\in \mathbb{K}[x]$. Clearly, the case $n=1$ recovers the definition of a holonomic function described earlier.

\begin{definition}
For $A(\mathbf{x}) = \displaystyle\sum_{i_1, i_2, \ldots, i_n} a(i_1, i_2, \ldots, i_n) \, x_1^{i_1} x_2^{i_2} \cdots x_n^{i_n} \in \mathbb{K}[[\mathbf{x}]]$ and integers $1 \leq k < \ell \leq n$, define the {\em primitive diagonal}
\[
I_{k\ell}(A(\mathbf{x})) = \sum_{i_1, \ldots, \widehat{i_\ell}, \ldots, i_n} a(i_1, \ldots, i_k, \ldots, i_k, \ldots, i_n) \, x_1^{i_1} \cdots x_k^{i_k} \cdots \widehat{x_{\ell}^{i_\ell}} \cdots x_n^{i_n},
\]
where the hats denote omission of the index $i_\ell$ and the term $x_\ell^{i_\ell}$.
\end{definition}

For example, taking $k=1$, $\ell=2$ and $n = 4$ leads to $I_{12}(A(x_1, x_2, x_3, x_4)) = \displaystyle\sum_{i_1, i_3, i_4} a(i_1,i_1,i_3, i_4) \, x_1^{i_1} x_3^{i_3} x_4^{i_4}$.

The following result lists some closure properties for multivariate holonomic functions~\cite{MR929767, MR587530}.

\begin{proposition} \label{diagtheorem} 
Let $A(\mathbf{x}) = \sum a(i_1, i_2, \ldots, i_n) \, x_1^{i_1} x_2^{i_2} \cdots x_n^{i_n}$ and $B(\mathbf{x}) = \sum b(i_1, i_2, \ldots, i_n) \, x_1^{i_1} x_2^{i_2} \cdots x_n^{i_n}$ be holonomic functions over a field $\mathbb{K}$ of characteristic zero. Then
\begin{enumerate} [label={(\alph*)}, noitemsep]
\item the primitive diagonal $I_{k\ell}(A(\mathbf{x}))$ is holonomic for all $1 \leq k < \ell \leq n$;
\item the Cauchy product $A(\mathbf{x}) \, B(\mathbf{x})$ is holonomic;
\item the Hadamard product $A(\mathbf{x}) * B(\mathbf{x}) = \sum a(i_1, i_2, \ldots, i_n) \, b(i_1, i_2, \ldots, i_n) \, x_1^{i_1} x_2^{i_2} \cdots x_n^{i_n}$ is holonomic; and
\item the formal power series
\[
\sum_{(i_1, i_2, \ldots, i_n) \in C} a(i_1, i_2, \ldots, i_n) \, x_1^{i_1} x_2^{i_2} \cdots x_n^{i_n}
\]
is holonomic if $C \subseteq \mathbb{Z}_{\geq 0}^n$ is defined by a finite set of inequalities of the form $\sum a_k i_k + b \geq 0$, where $a_1, a_2, \ldots, a_n, b \in \mathbb{Z}$.
\end{enumerate}
\end{proposition}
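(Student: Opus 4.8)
The four statements are the standard closure properties of D-finite (holonomic) power series, due to Stanley~\cite{MR587530} and Lipshitz~\cite{MR929767}. The plan is to work throughout with the characterisation built into the definition: a series $A(\mathbf{x})$ is holonomic over $\mathbb{K}$ precisely when the $\mathbb{K}(\mathbf{x})$-vector space
\[
V_A = \mathrm{span}_{\mathbb{K}(\mathbf{x})} \left\{ \partial^{\boldsymbol{\alpha}} A(\mathbf{x}) : \boldsymbol{\alpha} \in \mathbb{Z}_{\geq 0}^n \right\}
\]
spanned by all mixed partial derivatives is finite-dimensional. I would establish part (b) directly, reduce parts (c) and (d) to parts (a) and (b), and leave part (a) as the hard kernel. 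Throughout I use that $\mathbb{K}$ has characteristic zero, as assumed.

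For part (b), I would argue from the Leibniz rule. Suppose $\dim_{\mathbb{K}(\mathbf{x})} V_A = p$ and $\dim_{\mathbb{K}(\mathbf{x})} V_B = q$. Then every derivative expands as
\[
\partial^{\boldsymbol{\alpha}}(AB) = \sum_{\boldsymbol{\beta} \leq \boldsymbol{\alpha}} \binom{\boldsymbol{\alpha}}{\boldsymbol{\beta}} \, (\partial^{\boldsymbol{\beta}} A)(\partial^{\boldsymbol{\alpha} - \boldsymbol{\beta}} B),
\]
and each summand is a product of an element of $V_A$ with an element of $V_B$, hence lies in the space $V_A \cdot V_B$ spanned by products of a basis vector of $V_A$ with one of $V_B$. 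Since $\dim_{\mathbb{K}(\mathbf{x})}(V_A \cdot V_B) \leq pq$, the Cauchy product $AB$ is holonomic.

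For part (c), I would double the variables. Viewing $A(\mathbf{x})$ and $B(\mathbf{y})$ as series in the $2n$ variables $x_1, \ldots, x_n, y_1, \ldots, y_n$, each remains holonomic, so by part (b) the product $A(\mathbf{x}) \, B(\mathbf{y}) = \sum a(\mathbf{i}) b(\mathbf{j}) \, \mathbf{x}^{\mathbf{i}} \mathbf{y}^{\mathbf{j}}$ is holonomic in $2n$ variables. Applying the primitive diagonals $I_{k,n+k}$ for $k = 1, 2, \ldots, n$ identifies each pair of indices and produces exactly the Hadamard product $\sum a(\mathbf{i}) b(\mathbf{i}) \, \mathbf{x}^{\mathbf{i}}$; by part (a) each diagonal preserves holonomicity, so (c) follows. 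For part (d), it suffices by induction on the number of inequalities to restrict to a single half-space $L(\mathbf{i}) = \sum_k a_k i_k + b \geq 0$. I would introduce an auxiliary variable $u$ and apply the monomial substitution $x_k \mapsto x_k u^{a_k}$, so that
\[
u^b \, A(x_1 u^{a_1}, \ldots, x_n u^{a_n}) = \sum a(\mathbf{i}) \, \mathbf{x}^{\mathbf{i}} \, u^{L(\mathbf{i})}
\]
records $L(\mathbf{i})$ as the exponent of $u$. Such a Laurent-monomial substitution preserves holonomicity, and the Hadamard product in $u$ with $\tfrac{1}{1-u} = \sum_{m \geq 0} u^m$ retains precisely the terms with $L(\mathbf{i}) \geq 0$; specialising $u = 1$ then recovers the restriction. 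Since Hadamard products and specialisation preserve holonomicity, part (d) follows from parts (a)--(c).

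The main obstacle is part (a), the diagonal closure theorem of Lipshitz~\cite{MR929767}. Since the spectator variables play no role, I would reduce to $I_{12}$ acting on $A(x_1, x_2, \mathbf{z})$ and use the constant-term representation
\[
I_{12}\big(A(x_1, x_2, \mathbf{z})\big) = [w^0] \, A(x_1 w, w^{-1}, \mathbf{z}),
\]
where $[w^0]$ extracts the $w$-constant term: the monomial substitution $x_1 \mapsto x_1 w$, $x_2 \mapsto w^{-1}$ turns the diagonal condition $i_1 = i_2$ into selection of the zero power of $w$. Monomial substitution preserves holonomicity, so the diagonal closure is equivalent to showing that coefficient extraction in one variable sends a holonomic Laurent series to a holonomic series in the remaining variables. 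This equivalence is easy, but the coefficient-extraction closure is the substantive content: it requires a careful argument that the $\mathbb{K}(\mathbf{z})$-module generated by the derivatives of the extracted series remains finite-dimensional, which is the technical heart of Lipshitz's theorem. At this point I would invoke~\cite{MR929767} rather than reproduce the finiteness estimates.
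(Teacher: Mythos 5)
The paper offers no proof of this proposition at all: it is imported from the literature with the citations to Stanley \cite{MR587530} and Lipshitz \cite{MR929767}, so your attempt can only be judged against those standard arguments. Against that benchmark your outline is essentially correct and, importantly, apportions the difficulty the right way: part (b) by the Leibniz argument inside the space $V_A \cdot V_B$ of dimension at most $pq$; part (c) by doubling the variables and collapsing with $n$ primitive diagonals (one cosmetic remark: each application of $I_{k\ell}$ removes a variable, so the indices of the subsequent diagonals must be relabelled as you go); and part (a) delegated to Lipshitz, correctly identified as the hard kernel --- your constant-term reformulation $I_{12}(A) = [w^0]\, A(x_1 w, w^{-1}, \mathbf{z})$ is exactly the setup of his proof, and invoking \cite{MR929767} there is the same move the paper itself makes.

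The one step that would not survive refereeing as written is the end of part (d). When some $a_k < 0$ the substitution $x_k \mapsto x_k u^{a_k}$ takes you out of $\mathbb{K}[[\mathbf{x},u]]$, so you must work in the Laurent variant of D-finiteness; more seriously, ``specialisation preserves holonomicity'' is not a closure property you may simply invoke: setting $u=1$ in a D-finite series is a section, and sections are precisely the delicate operation hiding inside part (a), not a consequence of (a)--(c). Your particular series is rescuable because it is $u$-graded --- each monomial $\mathbf{x}^{\mathbf{i}}$ carries the single power $u^{L(\mathbf{i})}$ --- so the restricted series $R(\mathbf{x})$ satisfies $u^b R(x_1 u^{a_1}, \ldots, x_n u^{a_n}) = \sum_{L(\mathbf{i}) \geq 0} a(\mathbf{i})\, \mathbf{x}^{\mathbf{i}} u^{L(\mathbf{i})}$; inverting this (invertible) monomial substitution shows $R$ is D-finite over $\mathbb{K}(\mathbf{x},u)$, and a series lying in $\mathbb{K}[[\mathbf{x}]]$ that is D-finite over $\mathbb{K}(\mathbf{x},u)$ is D-finite over $\mathbb{K}(\mathbf{x})$ (clear denominators in a linear relation among its derivatives and extract the coefficient of a suitable power of $u$). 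But the standard and shorter route avoids the auxiliary variable altogether: restriction of the support to $C$ is the Hadamard product of $A$ with the indicator series $\sum_{\mathbf{i} \in C} \mathbf{x}^{\mathbf{i}}$, which for a single half-space is the rational --- hence holonomic --- generating function of the lattice points of a rational polyhedral region, so (d) follows from (c) by induction on the number of inequalities, with no substitution and no specialisation. That is effectively Lipshitz's own derivation, and I would recommend replacing your $u=1$ step with it.
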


\subsection{Existence of 1-point recursions}

We begin by proving the existence of 1-point recursions in the {\em simple} case when $\mathbf{q} = (1, 0, 0, \ldots)$. (The word ``simple'' has been ported from the context of Hurwitz numbers to this more general setting.)

\begin{theorem} \label{thm:requals1}
Let $G(z) \in \mathbb{C}(z)$ be a rational function and let $\mathbf{q} = (1, 0, 0, \ldots)$. Define the numbers $n_g(d) = d \, N_{g,1}(d)$ via \cref{eq:partitionfunction}. Then the numbers $n_g(d)$ satisfy a 1-point recursion in the sense of~\cref{def:1pointrecursion}.
\end{theorem}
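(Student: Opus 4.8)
By \cref{lem:holonomic}, it suffices to prove that the generating function $F(x,\h)=\sum_{d\ge1}\big(\sum_{g}n_g(d)\,\h^{2g-1}\big)x^d$ is holonomic over $\mathbb C(\h)$. I would begin from \cref{lem:1pointpart} specialised to $\mathbf q=(1,0,0,\ldots)$. Substituting the hook evaluation $s_{(k,1^{d-k})}(\tfrac1\h,0,0,\ldots)=\binom{d-1}{k-1}\frac{1}{d!\,\h^d}$ from \cref{lem:hookschur} and reindexing with $a=k-1$ expresses the coefficient of $x^d$ as
\[
n_d:=\sum_{g}n_g(d)\,\h^{2g-1}=\frac{(-1)^{d-1}}{d!\,\h^d}\sum_{a=0}^{d-1}(-1)^a\binom{d-1}{a}\prod_{i=1}^{d}G\big((a+1-i)\h\big).
\]
The decisive observation is that the content product factorises according to the sign of the content: with $P_+(a)=\prod_{m=0}^{a}G(m\h)$ and $P_-(b)=\prod_{m=1}^{b}G(-m\h)$, one checks directly that $\prod_{i=1}^{d}G((a+1-i)\h)=P_+(a)\,P_-(d-1-a)$ (the factor counts match, $a{+}1$ plus $d{-}1{-}a$ equalling $d$). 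This decouples the $a$-dependence from the $d$-dependence and turns the inner sum into a binomial convolution.

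Next I would verify that the two content-product sequences are holonomic over $\mathbb C(\h)$. Writing $G=P/Q$ with $P,Q\in\mathbb C[z]$ normalised so that $G(0)=1$, the relations $Q(a\h)\,P_+(a)=P(a\h)\,P_+(a-1)$ and $Q(-b\h)\,P_-(b)=P(-b\h)\,P_-(b-1)$ are first-order linear recurrences whose coefficients are polynomials in $a$ and $b$ over $\mathbb C(\h)$. Since $\h$ is transcendental over $\mathbb C$, each $G(m\h)$ is a well-defined nonzero element of $\mathbb C(\h)$ (the denominator $Q(m\h)$ has constant term $1$), and the leading coefficients $Q(\pm a\h)$ are nonzero polynomials in $a$; hence $(P_+(a))_a$ and $(P_-(b))_b$ are holonomic.

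I would then eliminate the factorial and the binomial convolution by passing to exponential generating functions. Set $u_a=(-1)^aP_+(a)$ and $v_b=P_-(b)$; both sequences are holonomic (a Hadamard product with the holonomic sequence $((-1)^a)$, using \cref{prop:holonomic}), and dividing by $a!$ --- a Hadamard product with the coefficient sequence of $e^t$ --- preserves holonomicity. Thus $\mathcal A(t)=\sum_a u_a\frac{t^a}{a!}$ and $\mathcal B(t)=\sum_b v_b\frac{t^b}{b!}$ are holonomic, and by the Cauchy-product closure of \cref{prop:holonomic} so is $D(t)=\mathcal A(t)\mathcal B(t)=\sum_{e\ge0}\frac{C_e}{e!}\,t^e$, where $C_e=\sum_{a=0}^{e}\binom{e}{a}u_a v_{e-a}$ is exactly the inner sum above with $e=d-1$. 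Finally, with $t=-x/\h$ the prefactor $(-1)^{d-1}\h^{-d}$ together with one power of $x$ collapses the $x^{d}$-coefficient to $-\,C_e/(e+1)!$ times $t^{e+1}$, and using $\frac{t^{e+1}}{e+1}=\int_0^t\tau^e\,\dd\tau$ we obtain
\[
F(x,\h)=-\sum_{e\ge0}\frac{1}{e+1}\,\frac{C_e}{e!}\,t^{e+1}=-\int_0^{t}D(\tau)\,\dd\tau,\qquad t=-\tfrac{x}{\h}.
\]
Integration preserves holonomicity by \cref{prop:holonomic}, and the linear rescaling $t=-x/\h$ preserves holonomicity over $\mathbb C(\h)$, so $F(x,\h)$ is holonomic and \cref{lem:holonomic} delivers the 1-point recursion.

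The main obstacle is not any individual closure property but organising the computation so that the toolbox of \cref{prop:holonomic} applies at all: the $1/d!$ normalisation and the fact that the content product $\prod_i G((a+1-i)\h)$ couples $a$ and $d$ must both be dissolved. The two manoeuvres that achieve this are the content-sign factorisation $P_+(a)\,P_-(d-1-a)$ and the reinterpretation of the resulting binomial sum as a Cauchy product of exponential generating functions; once these are in place, the single surviving factor $1/(e+1)$ is removed by one integration, and everything else is routine bookkeeping.
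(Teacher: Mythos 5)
Your proposal is correct and follows essentially the same route as the paper: reduce to holonomicity of $\sum_d n(d)\,x^d$ via \cref{lem:holonomic}, evaluate the hook Schur functions via \cref{lem:hookschur}, split the content product into positive- and negative-content halves that are hypergeometric (hence holonomic) over $\mathbb{C}(\h)$, and conclude by the closure properties of \cref{prop:holonomic}. The only difference is bookkeeping: the paper absorbs the factorials and powers of $\h$ directly into the sequences $u_k = \frac{1}{(k-1)!\,\h^k}\prod_{i=1}^k G((i-1)\h)$ and $v_k = \frac{(-1)^k}{k!\,\h^k}\prod_{i=1}^k G(-i\h)$, so that $n(d)=\frac{1}{d}\sum_{k=1}^d u_k v_{d-k}$ is immediately a Cauchy product composed with a Hadamard factor $\frac{1}{d}$, whereas you achieve the same effect with exponential generating functions, one integration, and the rescaling $t=-x/\h$.
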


\begin{proof}
We define $n(d)$ and calculate it as follows.
\begin{align} \label{eq:1pointgf}
n(d) &= \sum_{g=0}^\infty n_g(d) \, \h^{2g-1} \nonumber \\
&= \sum_{k=1}^d (-1)^{d-k} \, s_{(k, 1^{d-k})}(\tfrac{1}{\h}, 0, 0, \ldots) \, \prod_{i=1}^d G((k-i)\h) & \text{(\cref{lem:1pointpart})} \nonumber \\
&= \frac{1}{d! \, \h^d}\sum_{k=1}^d (-1)^{d-k} \binom{d-1}{k-1}\prod_{i=1}^d G((k-i)\h) & \text{(\cref{lem:hookschur})} \nonumber \\
&= \frac{1}{d \, \h^d} \sum_{k=1}^d (-1)^{d-k} \frac{1}{(k-1)! \, (d-k)!}\prod_{i=1}^d G((k-i)\h)
\end{align}

Define the sequences
\[
u_k = \frac{1}{(k-1)! \, \h^k} \prod_{i=1}^k G((i-1)\h) \qquad \text{and} \qquad v_k = \frac{(-1)^k}{k! \, \h^k}\prod_{i=1}^{k} G(-i\h).
\]
These are holonomic over $\mathbb{C}(\h)$ since the ratios $\frac{u_{k+1}}{u_k} = \frac{G(k\h)}{k\h}$ and $\frac{v_{k+1}}{v_k} = -\frac{G(-(k+1)\h)}{(k+1) \h}$ are rational functions of $k$ with coefficients from $\mathbb{C}(\h)$. Hence, parts (b) and (c) of \cref{prop:holonomic} implies that the sequence
\[
n(d) = \frac{1}{d} \sum_{k=1}^d u_k \, v_{d-k}
\]
is holonomic over $\mathbb{C}(\h)$. So \cref{lem:holonomic} guarantees the existence of a 1-point recursion for $n_g(d)$.
\end{proof}

To tackle the case of general weights $\mathbf{q} = (q_1, q_2, \ldots, q_r, 0, 0, \ldots)$, we use the following lemma.

\begin{lemma} \label{lem:mainlem}
If $a_d, b_d, u_d, v_d$ are holonomic sequences, then so is
\[
s_d = \sum_{k=1}^d a_k b_{d-k}\sum_{\ell=0}^{k-1} u_{\ell} v_{d-\ell}.
\] 
\end{lemma}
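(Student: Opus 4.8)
The plan is to reduce the statement to a holonomicity claim for a generating series and then to build that series out of $A,B,U,V$ using the multivariate closure operations of \cref{diagtheorem}. Since a sequence is holonomic precisely when its generating function is, it suffices to prove that $S(x) = \sum_{d} s_d\, x^d$ is holonomic over $\mathbb{K}$, where $A(x),B(x),U(x),V(x)$ denote the (holonomic) generating series of $a,b,u,v$. Unravelling the definition, $s_d$ is the sum of $a_\alpha\, b_{d-\alpha}\, u_\gamma\, v_{d-\gamma}$ over the integer points of the polyhedral region $\{\,0 \le \gamma \le \alpha-1,\ \alpha \le d\,\}$, so the whole difficulty is to express this index-coupled, constrained sum using only products, Hadamard products, primitive diagonals, and polyhedral support restrictions.

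First I would package the pairs $(a,b)$ and $(u,v)$ into bivariate series in which one variable records the total index and the other records the free index: set $P(x,s) = A(xs)\,B(x)$ and $Q(y,t) = U(yt)\,V(y)$, so that the coefficient of $x^d s^\alpha$ in $P$ is $a_\alpha b_{d-\alpha}$ and the coefficient of $y^{d'} t^\gamma$ in $Q$ is $u_\gamma v_{d'-\gamma}$. Each is holonomic: the monomial substitution $z \mapsto xz$ preserves holonomicity, being realisable as the Hadamard product of $A(x)\,\tfrac{1}{1-s}$ with $\tfrac{1}{1-xs}$, both of which are holonomic, and products of holonomic series are holonomic by \cref{diagtheorem}. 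Taking the product $P(x,s)\,Q(y,t)$ and then the primitive diagonal that identifies the exponents of $x$ and $y$ forces $d'=d$ and yields a three-variable holonomic series whose coefficient of $x^d s^\alpha t^\gamma$ is $a_\alpha b_{d-\alpha} u_\gamma v_{d-\gamma}$. Restricting the support to $\alpha-\gamma-1 \ge 0$, as permitted by \cref{diagtheorem}, installs the constraint $\gamma \le \alpha-1$.

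The main obstacle is the final step, namely summing the auxiliary variables $s$ and $t$ out, since naive evaluation at $s=t=1$ is not a legitimate operation on formal power series. I would handle this with a geometric-series-and-diagonal manoeuvre that exploits the polyhedral bounds already in place. Multiplying the three-variable series by $\tfrac{1}{1-s}$ (a product, hence holonomic) replaces the coefficient of $s^N$ by the partial sum over $\alpha \le N$; because every surviving term satisfies $\alpha \le d$, the partial sum at $N=d$ is already the complete sum over $\alpha$. Consequently the primitive diagonal identifying the exponents of $x$ and $s$ extracts exactly $\sum_{\alpha} a_\alpha b_{d-\alpha} u_\gamma v_{d-\gamma}$, leaving a holonomic two-variable series in $(x,t)$. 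Repeating the same step with $\tfrac{1}{1-t}$ and the diagonal in $(x,t)$, which is valid because the remaining terms obey $\gamma \le \alpha-1 < d$, sums out $\gamma$ and produces $S(x)=\sum_d s_d\, x^d$.

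The delicate bookkeeping is thus entirely concentrated in verifying the two support inequalities $\alpha \le d$ and $\gamma \le d$ at the moments they are needed, so that each diagonal picks up a complete rather than a truncated sum; granting these, every intermediate object is holonomic by \cref{diagtheorem}, so $S(x)$ is holonomic over $\mathbb{K}$ and therefore $(s_d)$ is a holonomic sequence, as claimed.
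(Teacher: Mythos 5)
Your proof is correct, and its overall architecture is the same as the paper's: both reduce the claim to holonomicity of a generating series, couple the summation indices to the total degree in auxiliary variables, and then apply exactly the closure operations of \cref{diagtheorem} (Cauchy products, primitive diagonals, polyhedral support restrictions). The genuine difference is the endgame. The paper keeps the pair indices in separate variables, forms the inner partial sums directly via the factor $\tfrac{x_3}{1-x_3}$, restricts supports to force $j = n-k$, and then simply \emph{evaluates} at $x_1 = x_2 = 1$, asserting that this ``clearly preserves holonomicity''. You deliberately avoid that evaluation — rightly noting it is not an off-the-shelf operation on formal power series — and instead sum out each auxiliary variable by multiplying by a geometric series $\tfrac{1}{1-s}$ and taking a further primitive diagonal, using the support bounds $\alpha \le d$ and $\gamma \le \alpha - 1 < d$ to guarantee that the diagonal captures the complete rather than a truncated sum. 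You also justify the holonomicity of the index-coupled series $A(xs)$ via a Hadamard product with $\tfrac{1}{1-xs}$, a step the paper handles by bare assertion for its analogue $U(x_3,x_4) = \sum u_n (x_3x_4)^n$. The trade-off: the paper's route is shorter, while yours is more self-contained, since it replaces the specialisation-at-$1$ step (which requires a separate argument both for well-definedness and for preservation of D-finiteness) with operations explicitly licensed by \cref{diagtheorem}; in effect your finishing manoeuvre patches a small rigour gap in the published proof.
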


\begin{proof}
Define the generating functions
\[
A(x_1) = \sum_{n=1}^\infty a_n x_1^n, \qquad B(x_2) = \sum_{n=0}^\infty b_n x_2^n, \qquad U(x_3, x_4) = \sum_{n=0}^\infty u_n (x_3x_4)^n, \qquad V(x_4) = \sum_{n=1}^\infty v_n x_4^n.
\]
Observe that each of these is a holonomic function in the appropriate variables. Since Cauchy products preserve holonomicity --- see part (b) of \cref{diagtheorem} --- we know that
\[
H(x_3, x_4) = \frac{x_3}{1-x_3} \, U(x_3, x_4) \, V(x_4) = \sum_{k=1}^\infty \sum_{n=1}^\infty \left( \sum_{\ell=0}^{k-1} u_\ell v_{n-\ell} \right) x_3^k x_4^n
\] 
is holonomic. (We interpret the inner summation by discarding any terms that involve $v_{n-\ell}$ with $n-\ell \leq 0$.) By part (d) of \cref{diagtheorem}, restricting to the terms satisfying $n-k\geq 0$, we obtain the holonomic function
\[
\widehat{H}(x_3, x_4) = \sum_{n \geq k \geq 1} \left( \sum_{\ell=0}^{k-1} u_\ell v_{n-\ell} \right) x_3^k x_4^n.
\]

Then 
\[
L(x_1, x_2, x_3, x_4) = A(x_1) \, B(x_2) \, \widehat{H}(x_3,x_4) = \sum_{i=1}^\infty \sum_{j=0}^\infty \sum_{n \geq k \geq 1} a_i b_j \left( \sum_{\ell=0}^{k-1} u_\ell v_{n-\ell} \right) x_1^i x_2^j x_3^k x_4^n
\]
is holonomic by closure under Cauchy products. Invoking part (a) of \cref{diagtheorem}, we know that
\[ 
I_{13}(L(x_1, x_2, x_3, x_4)) = \sum_{k=1}^\infty \sum_{j=0}^\infty \sum_{n=k}^\infty a_k b_j \left( \sum_{\ell=0}^{k-1} u_\ell v_{n-\ell} \right) x_1^k x_2^j x_4^n
\]
is holonomic. Now use part (d) of \cref{diagtheorem} with the inequalities $j+k-n \geq 0$ and $-j-k+n \geq 0$ --- in other words, restricting to $j = n-k$ --- to deduce holonomicity of
\[
\widehat{L}(x_1, x_2, x_4) = \sum_{k=1}^\infty \sum_{n=k}^\infty a_k b_{n-k} \left( \sum_{\ell=0}^{k-1} u_\ell v_{n-\ell} \right) x_1^k x_2^{n-k} x_4^n.
\]
By evaluating this formal power series at $x_1=1$, $x_2=1$ and $x_4=x$ --- which clearly preserves holonomicity --- we obtain the desired result.
\end{proof}

We are now in a position to prove \cref{thm:main}, which we restate in the following way.

\begin{theorem} \label{thm:main2}
Let $G(z) \in \mathbb{C}(z)$ be a rational function with $G(0) = 1$ and let $\mathbf{q} = (q_1, q_2, \ldots, q_r, 0, 0, \ldots)$. Define the numbers $n_g(d) = d \, N_{g,1}(d)$ via \cref{eq:partitionfunction}. Then the generating function
\begin{equation} \label{eq:mainthm}
\sum_{d=1}^\infty \sum_{g=0}^\infty n_g(d) \,\h^{2g-1} \, x^d
\end{equation}
is holonomic over $\mathbb{C}(\h)$, so the numbers $n_g(d)$ satisfy a 1-point recursion in the sense of~\cref{def:1pointrecursion}.
\end{theorem}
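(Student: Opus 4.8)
The plan is to reduce the general-weight case to the machinery already assembled—namely \cref{lem:1pointpart}, \cref{lem:hookschur} and, crucially, \cref{lem:mainlem}, which is evidently designed to absorb the nested summation produced by the hook-Schur expansion. Starting from \cref{lem:1pointpart} I write $n(d) = \sum_g n_g(d)\,\h^{2g-1}$ as a sum over hooks, and first reorganise the content product. Splitting the contents of the hook $(k,1^{d-k})$ into their non-negative and negative parts gives
\[
\prod_{i=1}^d G((k-i)\h) = \Big[\prod_{j=0}^{k-1} G(j\h)\Big]\Big[\prod_{j=1}^{d-k} G(-j\h)\Big] =: P_k\,Q_{d-k}.
\]
Since $G$ is rational, the ratios $P_{k+1}/P_k = G(k\h)$ and $Q_{m+1}/Q_m = G(-(m+1)\h)$ are rational functions of $k$ and $m$ with coefficients in $\mathbb{C}(\h)$, so both $(P_k)$ and $(Q_m)$ are holonomic over $\mathbb{C}(\h)$, exactly as in the proof of \cref{thm:requals1}.

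Next I expand the hook Schur function. Applying \cref{lem:hookschur} with the power-sum arguments $p_i = q_i/\h$ (which vanish for $i > r$) and re-indexing by $\ell = k-j$ turns the hook into $s_{(k,1^{d-k})} = \sum_{\ell=0}^{k-1} (-1)^{k-\ell+1} h_\ell\, e_{d-\ell}$. The decisive point is that $(h_\ell)$ and $(e_m)$ are holonomic sequences over $\mathbb{C}(\h)$: by the generating-function formulas in \cref{lem:hookschur} one has $\sum_\ell h_\ell x^\ell = \exp[\sum_{k=1}^r \tfrac{q_k}{k\h}x^k]$ and $\sum_m e_m x^m = \exp[\sum_{k=1}^r (-1)^{k-1}\tfrac{q_k}{k\h}x^k]$, and each is the exponential of a \emph{polynomial} (the sums terminate precisely because only finitely many $q_i$ are non-zero), hence satisfies a first-order linear ODE with polynomial coefficients. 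Substituting into the hook sum and collecting signs via $(-1)^{d-k}(-1)^{k-\ell+1} = (-1)^{d-\ell+1} = (-1)^{d+1}(-1)^\ell$ yields
\[
n(d) = (-1)^{d+1} \sum_{k=1}^d P_k\,Q_{d-k} \sum_{\ell=0}^{k-1} \big[(-1)^\ell h_\ell\big]\, e_{d-\ell}.
\]

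Now the inner double sum matches the template of \cref{lem:mainlem} verbatim, with the holonomic sequences $a_k = P_k$, $b_m = Q_m$, $u_\ell = (-1)^\ell h_\ell$ and $v_m = e_m$; hence that sum is holonomic over $\mathbb{C}(\h)$. Multiplying by $(-1)^{d+1}$ preserves holonomicity—it is the Hadamard product with the holonomic sequence $((-1)^{d+1})_d$, or equivalently the substitution $x \mapsto -x$ on the generating function—so $(n(d))_d$ is holonomic. By the sequence–function duality and \cref{lem:holonomic}, the generating function in \cref{eq:mainthm} is then holonomic over $\mathbb{C}(\h)$, which delivers the desired 1-point recursion. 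I expect the only real obstacle to be bookkeeping: arranging the outer Cauchy-type sum over $k$ together with the inner partial sum over $\ell$ and the various signs so that they fit the hypotheses of \cref{lem:mainlem} precisely. Once the content product is factored as $P_k Q_{d-k}$ and the hook is re-indexed by $\ell$, the fit is exact and the closure properties do the rest.
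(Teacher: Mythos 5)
Your proposal is correct and coincides with the paper's own proof in every essential respect: the same factorisation of the content product into $a_k = \prod_{i=1}^{k} G((i-1)\h)$ and $b_m = \prod_{i=1}^{m} G(-i\h)$, the same hook expansion via \cref{lem:hookschur} with the re-indexing $\ell = k-j$, the same first-order ODE argument (exponential of a polynomial, using that only $q_1, \ldots, q_r$ are non-zero) for holonomicity of the $h$- and $e$-sequences, and the same appeal to \cref{lem:mainlem} followed by \cref{lem:holonomic}. The only deviation is harmless sign bookkeeping: the paper absorbs the full sign $(-1)^{d-\ell+1}$ into $v_n = (-1)^{n+1} e_n$, so that \cref{lem:mainlem} applies directly, whereas you split it as $(-1)^{d+1}(-1)^{\ell}$ and remove the residual $(-1)^{d+1}$ by a Hadamard product with a holonomic sequence, which is equally valid.
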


\begin{proof}
We calculate the coefficient $n(d)$ of $x^d$ in \cref{eq:mainthm} as follows.
\begin{align*}
n(d) &= \sum_{g=0}^\infty n_g(d) \, \h^{2g-1} \\
&= \sum_{k=1}^d (-1)^{d-k} \, s_{(k, 1^{d-k})}(\tfrac{q_1}{\h}, \tfrac{q_2}{\h}, \ldots, \tfrac{q_r}{\h}) \, \prod_{i=1}^d G((k-i)\h) & \text{(\cref{lem:1pointpart})} \\
 &= \sum_{k=1}^d (-1)^{d-k} \prod_{i=1}^d G((k-i)\h) \sum_{j=1}^k (-1)^{j+1} \, h_{k-j}(\tfrac{q_1}{\h}, \tfrac{q_2}{\h}, \ldots, \tfrac{q_r}{\h}) \, e_{d-k+j}(\tfrac{q_1}{\h}, \tfrac{q_2}{\h}, \ldots, \tfrac{q_r}{\h}) & \text{(\cref{lem:hookschur})} \\
 &= \sum_{k=1}^d \prod_{i=1}^{k} G((i-1)\h) \prod_{i=1}^{d-k} G(-i\h) \sum_{\ell=0}^{k-1} h_\ell(\tfrac{q_1}{\h}, \tfrac{q_2}{\h}, \ldots, \tfrac{q_r}{\h}) \, (-1)^{d-\ell+1} \, e_{d-\ell}(\tfrac{q_1}{\h}, \tfrac{q_2}{\h}, \ldots, \tfrac{q_r}{\h})
\end{align*}

Now define the sequences
\[
a_n = \prod_{i=1}^{n} G((i-1)\h), \quad b_n = \prod_{i=1}^{n} G(-i\h), \quad u_n = h_n(\tfrac{q_1}{\h}, \tfrac{q_2}{\h}, \ldots,\tfrac{q_{r}}{\h}), \quad v_n = (-1)^{n+1} \, e_n(\tfrac{q_1}{\h}, \tfrac{q_2}{\h}, \ldots,\tfrac{q_{r}}{\h}).
 \] 
The first two are holonomic over $\mathbb{C}(\h)$ since the ratios $\frac{a_{n+1}}{a_n} = G(n\h)$ and $\frac{b_{n+1}}{b_n} = G(-(n+1)\h)$ are rational functions of $n$ with coefficients from $\mathbb{C}(\h)$. The last two are holonomic over $\mathbb{C}(\h)$ due to \cref{lem:hookschur}, from which we deduce that
\[
\left[ \h\frac{\partial}{\partial x} - \sum_{k=1}^r q_k x^{k-1} \right] \left( \sum_{n=0}^\infty u_n x^n \right) = 0 \qquad \text{and} \qquad \left[ \h\frac{\partial}{\partial x} + \sum_{k=1}^r (-1)^k q_k x^{k-1} \right] \left( \sum_{n=0}^\infty v_n x^n \right) = 0.
\]

Hence, \cref{lem:mainlem} implies that the sequence
\[
n(d) = \sum_{k=1}^{d} a_kb_{d-k}\sum_{\ell=0}^{k-1}u_{\ell}v_{d-\ell}
\]
is holonomic over $\mathbb{C}(\h)$. It then follows from \cref{lem:holonomic} that there exists a 1-point recursion for the numbers $n_g(d)$.
\end{proof}

\subsection{Algorithms for 1-point recursions} \label{subsec:algorithm}

One of the features of the theory of holonomic sequences and functions is the fact that theoretical results can often be turned into effective algorithms. Although \cref{thm:main2} only asserts the existence of 1-point recursions, its proof can be converted into an algorithm to calculate them from the initial data of the rational function $G(z)$ and the positive integer $r$ that records the number of non-zero weights $\mathbf{q} = (q_1, q_2, \ldots, q_r)$. For example, a naive though feasible approach would be to express the putative 1-point recursion as
\[
\sum_{i=0}^D \sum_{j=0}^R a_{ij} \, d^i \, n(d-j) = 0,
\]
and treat this as a linear system in the $(D+1)(R+1)$ variables $a_{ij} \in \mathbb{C}(\h)$. One obtains a linear constraint for each positive integer $d$, so a finite number of these allows for the computation of the 1-point recursion.

In order to implement this approach, one requires explicit and simultaneous bounds on the degree $D$ and the order $R$ of such a recursion. We remark that it is possible to obtain such bounds in terms of the degree of $G(z)$ and the positive integer $r$. Begin with the operators that annihilate the generating functions for the sequences $a_n, b_n, u_n, v_n$ that appear in the proof of \cref{thm:main2}. Then use known bounds for the degree and order of operators that annihilate functions obtained by the holonomicity closure properties used in the proof --- namely, Cauchy product, taking diagonals, restricting summations, and evaluation. We do not pursue these calculations in the current work.

There are more efficient algorithms for computing with holonomic functions that are implemented in the \texttt{gfun} package for {\sc Maple}~\cite{sal-zim}. For example, we demonstrate below how the previously unknown 1-point recursion for monotone Hurwitz numbers may be derived from several lines of code.

\begin{example}
The proof of \cref{thm:requals1} implies that monotone Hurwitz numbers satisfy the relation
\[
m(d) = \sum_{g=0}^\infty m_g(d) \,\h^{2g-1} = \frac{1}{d} \sum_{k=1}^d u_k \, v_{d-k},
\]
where $\frac{u_{k+1}}{u_k} = \frac{G(k\h)}{k\h}$ and $\frac{v_{k+1}}{v_k} = -\frac{G(-(k+1)\h)}{(k+1) \h}$. So the sequence $m(d)$ can be obtained by taking the Cauchy product of $u_k$ and $v_k$, and then taking the Hadamard product of the result and the sequence $\frac{1}{k}$. The following shows several lines of hopefully self-explanatory {\sc Maple} code that produce a 1-point recursion for monotone Hurwitz numbers.

\MapleInput{with(gfun):} \\
\MapleInput{$G(z):=\frac{1}{1-z}$:} \\
\MapleInput{rec1:=\{d*hbar*m(d+1)-G(d*hbar)*m(d)=0, m(0)=0, m(1)=1\}:} \\
\MapleInput{rec2:=\{(d+1)*hbar*m(d+1)+G(-(d+1)*hbar)*m(d)=0, m(1)=-G(-hbar)\}:} \\
\MapleInput{rec3:=\{(d+1)*m(d+1)-d*m(d)=0, m(1)=1\}:} \\
\MapleInput{recprod:=\{cauchyproduct(rec1, rec2, m(d))=0\}:} \\
\MapleInput{finalrec:=`rec*rec`(recprod, rec3, m(d));}
\MapleOutput{$\{(-2+4*d)*m(d)+(-d-1+hbar^2*d^3+hbar^2*d^2)*m(d+1), m(0) = 0, m(1) = \_C[0]\}$}

The output asserts that
\[
(-2\h+4d\h) \, m(d) + (-d-1+\h^2 d^3 + \h^2 d^2) \, m(d+1) = 0.
\]
By collecting the coefficient of $h^{2g-1}$ and shifting the index, we obtain the 1-point recursion
\[
d \, m_g(d) = 2(2d-3) \, m_g(d-1) + d(d-1)^2 \, m_{g-1}(d).
\]
\end{example}

\section{Examples and applications} \label{sec:applications}

In this section, we return our attention to the enumerative problems introduced in \cref{sec:problems}. In particular, we apply the methodology developed in \cref{sec:recursion} to deduce 1-point recursions for the enumeration of hypermaps, Bousquet-M\'{e}lou--Schaeffer numbers and monotone Hurwitz numbers. For the case of simple Hurwitz numbers, the weight generating function $G(z)$ is not a rational function, so \cref{thm:main} ceases to apply. As a partial converse to this theorem, we show that simple Hurwitz numbers do not satisfy a 1-point recursion. We furthermore demonstrate how our calculations may yield explicit formulas and polynomial structure results for 1-point invariants.

\subsection{Hypermaps and Bousquet-M\'{e}lou--Schaeffer numbers}

The methodology of \cref{sec:recursion} allows one to recover the 1-point recursions for the enumeration of ribbon graphs and dessins d'enfant, stated as \cref{eq:harerzagier,eq:donorbury}, respectively. Recall that these two examples inspired the current work. It is possible to use the methodology developed in \cref{sec:recursion} to deduce other 1-point recursions, although the results are often rather lengthy to state. The following result provides two examples.

\begin{proposition} \label{prop:hypermapsbms}
The 3-hypermap enumeration satisfies the following 1-point recursion.
\begin{align*}
2d(2d+1) \, a^{3}_g(d) ={}& 3 (3d-1) (3d-2) \, a^{3}_g(d-1) + (3d-1) (3d-2) (9d^2-8d+2) \, a^{3}_{g-1}(d-1) \\ 
&- (d-1) (3d-1) (3d-2) (3d-4) (3d-5) (6d-7) \, a^{3}_{g-2}(d-2) \\
&+ (d-1) (d-2) (3d-1) (3d-2) (3d-4) (3d-5) (3d-7) (3d-8) \, a^{3}_{g-3}(d-3)
\end{align*}

The 3-BMS numbers satisfy the following 1-point recursion. 
\begin{align*}
2 d (2d+1) (3d-4) \, b^3_g(d) &= 3 (3d-1) (3d-2) (3d-4) \, b^3_g(d-1) \\
&+ (d-1) (3d+1) (9d^3 - 22d^2 + 14d - 2) \, b^3_{g-1}(d-1) \\
&- (d-1)^2 (d-2) (18d^4 - 93d^3 + 172^2 - 127d + 26) \, b^3_{g-2}(d-2) \\
&+ (d-1)^2 (d-2)^5 (d-3) (3d-1) \, b^3_{g-3}(d-3)
\end{align*}
\end{proposition}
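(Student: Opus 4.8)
The plan is to recognise both enumerations as instances of \cref{eq:partitionfunction} to which \cref{thm:main2} applies, so that the existence of a 1-point recursion is immediate, and then to extract the explicit recursions by running the algorithm implicit in the proofs of \cref{thm:requals1,thm:main2}. The relevant data are recorded in \cref{sec:problems}: the 3-hypermaps arise from $G(z) = 1+z$ with $\mathbf{q} = (0,0,1,0,\ldots)$, whereas the 3-BMS numbers arise from $G(z) = (1+z)^3$ with $\mathbf{q} = (1,0,0,\ldots)$. Both choices are rational with finitely many non-zero weights, so \cref{thm:main2} guarantees a recursion; only its explicit form remains to be computed.

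For the 3-BMS numbers we are in the simple regime $\mathbf{q} = (1,0,0,\ldots)$, so I would follow the proof of \cref{thm:requals1}. Setting $G(z) = (1+z)^3$ in \cref{eq:1pointgf} gives $n(d) = \frac{1}{d}\sum_{k=1}^{d} u_k v_{d-k}$, where $u_k$ and $v_k$ are holonomic over $\mathbb{C}(\h)$ via the first-order relations $\frac{u_{k+1}}{u_k} = \frac{(1+k\h)^3}{k\h}$ and $\frac{v_{k+1}}{v_k} = -\frac{(1-(k+1)\h)^3}{(k+1)\h}$. Forming the Cauchy product of $(u_k)$ and $(v_k)$ and then the Hadamard product with $(\frac{1}{d})$ --- both closure operations of \cref{prop:holonomic} --- produces a linear difference equation for $n(d)$ over $\mathbb{C}(\h)$, which the \texttt{gfun} package computes directly from the two input relations. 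Invoking \cref{lem:holonomic}, I would then collect the coefficient of $\h^{2g-1}$ and reindex to obtain the stated four-term recursion.

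The 3-hypermap enumeration lies outside the simple regime, so here I would instead follow the proof of \cref{thm:main2}. With $G(z) = 1+z$ and $q_3 = 1$ (all other weights zero), the four building blocks of \cref{lem:mainlem} specialise to $a_n = \prod_{i=1}^{n}(1+(i-1)\h)$ and $b_n = \prod_{i=1}^{n}(1-i\h)$, together with $u_n = h_n(0,0,\tfrac{1}{\h},0,\ldots)$ and $v_n = (-1)^{n+1} e_n(0,0,\tfrac{1}{\h},0,\ldots)$; the first two are holonomic by their first-order ratios, and the last two by the first-order differential operators exhibited in the proof of \cref{thm:main2}. Feeding these into the chain of closure operations from \cref{lem:mainlem} --- Cauchy products, the primitive diagonal $I_{13}$, and the summation restrictions of part (d) of \cref{diagtheorem} --- yields a holonomic difference equation for $n(d) = \sum_{k=1}^{d} a_k b_{d-k} \sum_{\ell=0}^{k-1} u_\ell v_{d-\ell}$, after which the same extraction of $\h^{2g-1}$ and reindexing gives the displayed recursion.

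The main obstacle is computational rather than conceptual: the raw difference equation returned by the closure algorithm need not be of minimal order and degree, so after running \texttt{gfun} I expect to simplify it --- clearing common polynomial factors among the coefficients and, where possible, reducing the order --- in order to reach the compact four-term forms stated above. As an independent check, I would verify each recursion against small-$(g,d)$ data: the 3-hypermap values $a^3_g(d)$ can be computed directly from \cref{lem:1pointpart}, while the 3-BMS values $b^3_g(d)$ are recovered from the table of double BMS-3 numbers by setting all $q_i = 1$, using the specialisation $s_\lambda(\tfrac{1}{\h},\tfrac{1}{\h},\ldots) = s_\lambda(\tfrac{1}{\h},0,\ldots)\prod_{\Box\in\lambda}(1+c(\Box)\h)$ that converts $G(z) = (1+z)^2$ into $G(z) = (1+z)^3$.
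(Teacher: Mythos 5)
Your proposal is correct and coincides with the paper's own route: the paper establishes \cref{prop:hypermapsbms} by exactly the procedure you describe --- identifying the weight data ($G(z)=1+z$, $\mathbf{q}=(0,0,1,0,\ldots)$ for 3-hypermaps; $G(z)=(1+z)^3$, $\mathbf{q}=(1,0,0,\ldots)$ for 3-BMS), applying the holonomic closure machinery of \cref{thm:requals1,thm:main2} via the \texttt{gfun} algorithm of \cref{subsec:algorithm}, and then extracting the coefficient of $\h^{2g-1}$ and reindexing, just as demonstrated explicitly for monotone Hurwitz numbers in \cref{prop:monotonehurwitz}. Your split into the simple regime for 3-BMS versus the general regime (through \cref{lem:mainlem}) for 3-hypermaps, and your numerical verification plan, are exactly what the paper's methodology prescribes.
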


\subsection{Hurwitz numbers}

Observe that \cref{thm:main} does not apply in the case of Hurwitz numbers, since the weight generating function $G(z) = \exp(z)$ is not rational. Thus, the following result provides a partial converse to our main theorem.

\begin{proposition}
The simple Hurwitz numbers do not satisfy a 1-point recursion.
\end{proposition}

\begin{proof}
By \cref{lem:holonomic}, we know that the simple Hurwitz numbers satisfy a 1-point recursion if and only if the sequence
\[
h(d) = \frac{1}{d! \, \h^d} \sum_{k=1}^d (-1)^{d-k} \binom{d-1}{k-1} \exp(d (2k-d-1) \h / 2) = \frac{1}{d! \, \h^d} \exp(-d(d+1)\h/2) \, (\exp(d\h)-1)^{d-1} 
\]
is holonomic over $\mathbb{C}(\h)$. However, if this were the case, then we could evaluate at $\h = 1$ to deduce that the sequence
\[
\frac{1}{d!} \exp(-d(d+1)/2) \, (\exp(d)-1)^{d-1} 
\]
is holonomic over $\mathbb{C}$. It is known that holonomic sequences $a_1, a_2, a_3, \ldots$ over $\mathbb{C}$ must satisfy the asymptotic growth condition $a_d = O(d!^\alpha)$ for some constant $\alpha$. On the other hand, we have
\[
\frac{1}{d!} \exp(-d(d+1)/2) \, (\exp(d)-1)^{d-1} \sim \frac{1}{d!} \exp(d(d-3)/2).
\]
Applying Stirling's formula, we see that this grows too fast to be holonomic. So it follows that the simple Hurwitz numbers do not satisfy a 1-point recursion.
\end{proof}

\Cref{{eq:1pointgf}} still applies to this case though, so the 1-part Hurwitz numbers satisfy
\[
\sum_{g=0}^\infty h_g(d) \,\h^{2g-1} = \frac{1}{d! \, \h^d} \sum_{k=1}^d (-1)^{d-k} \binom{d-1}{k-1} \exp \left( d (2k-d-1) \h/2 \right).
\]
By extracting coefficients of $\h$ on both sides, we recover the following formula.

\begin{proposition} \label{prop:1parthurwitz}
The 1-part simple Hurwitz numbers are given by
\[
h_g(d) = \frac{(d/2)^{d+2g-1}}{d! \, (d+2g-1)!}\sum_{k=0}^{d-1} (-1)^k \, \binom{d-1}{k} \, (d-1-2k)^{d+2g-1}.
\]
In particular, it follows that $h_g(d) = \frac{d^d}{d!} p_g(d)$, where $p_g$ is a polynomial of degree $3g-1$. One can make sense of this statement in the case $g=0$ by taking $p_0(d) = \frac{1}{d}$.

\end{proposition}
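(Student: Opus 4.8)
The plan is to read off both claims from the generating-function identity displayed immediately before the proposition, namely
\[
\sum_{g=0}^\infty h_g(d)\,\h^{2g-1} = \frac{1}{d!\,\h^d}\sum_{k=1}^d (-1)^{d-k}\binom{d-1}{k-1}\exp\!\big(d(2k-d-1)\h/2\big).
\]
First I would expand each exponential as a power series in $\h$, so that its degree-$m$ term carries $\h^{m-d}$ once the prefactor $\h^{-d}$ is taken into account. Extracting the coefficient of $\h^{2g-1}$ then forces $m = d+2g-1$. After the reindexing $k \mapsto k+1$, which shifts the range to $0 \le k \le d-1$, one tracks the sign $(-1)^{d-k-1}$ together with the identity $\big(d(2k+1-d)/2\big)^{d+2g-1} = (-1)^{d+2g-1}(d/2)^{d+2g-1}(d-1-2k)^{d+2g-1}$; the two parity-dependent signs collapse to $(-1)^k$ because $2d+2g-2$ is even, and the first displayed formula of the proposition drops out. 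I expect no real obstacle here beyond careful bookkeeping.

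The polynomial structure requires a little more. The key observation I would exploit is that the inner sum
\[
S_n(d) := \sum_{k=0}^{d-1}(-1)^k\binom{d-1}{k}(d-1-2k)^n
\]
has a transparent exponential generating function in an auxiliary variable $t$: multiplying by $t^n/n!$, summing over $n$, and applying the binomial theorem gives $\sum_n S_n(d)\,t^n/n! = (e^t - e^{-t})^{d-1} = (2\sinh t)^{d-1}$. Factoring out $(2t)^{d-1}$ rewrites this as $(2t)^{d-1}\left(\frac{\sinh t}{t}\right)^{d-1}$, and since $\sinh t/t = 1 + t^2/6 + \cdots$ is even with constant term $1$, its $(d-1)$-st power expands as $\sum_{j\ge 0}c_j(d)\,t^{2j}$ with $c_0(d)=1$. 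Reading off the coefficient of $t^{d+2g-1}$ isolates exactly the $j=g$ term and yields $S_{d+2g-1}(d) = (d+2g-1)!\,2^{d-1}c_g(d)$. Substituting this back into the explicit formula and simplifying the powers of $2$ and $d$ produces $h_g(d) = \frac{d^d}{d!}\,p_g(d)$ with $p_g(d) = 2^{-2g}\,d^{2g-1}c_g(d)$; in particular $p_0(d)=1/d$ follows at once from $c_0=1$.

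The hard part will be pinning down the degree of $p_g$, which reduces to showing that $c_g(d)$ is a polynomial in $d$ of degree exactly $g$. For this I would write $\left(\frac{\sinh t}{t}\right)^{d-1} = \exp\!\big((d-1)\log\frac{\sinh t}{t}\big)$ and use $\log\frac{\sinh t}{t} = \frac{t^2}{6} + O(t^4)$. Then the coefficient of $t^{2g}$ is manifestly a polynomial in $d-1$ — hence in $d$ — of degree at most $g$, since only the first $g$ powers of the exponent can contribute to $t^{2g}$; its top-degree part comes solely from $\frac{1}{g!}\big((d-1)\frac{t^2}{6}\big)^g$, giving degree exactly $g$ with nonzero leading coefficient $1/(6^g\,g!)$. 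It follows that $d^{2g-1}c_g(d)$ has degree $3g-1$ for $g \ge 1$, which is the claimed degree of $p_g$. The one point to flag is that the factor $d^{2g-1}$ is a negative power of $d$ precisely when $g=0$, which is why that case legitimately needs the convention $p_0(d)=1/d$ rather than a genuine polynomial.
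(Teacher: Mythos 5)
Your proposal is correct, and its first half coincides exactly with the paper's argument: the paper derives the displayed formula from the generating-function identity (obtained from \cref{eq:1pointgf} with $G(z)=\exp(z)$) by the single sentence ``By extracting coefficients of $\h$ on both sides, we recover the following formula,'' which is precisely your expansion of the exponentials, the selection $m=d+2g-1$, and the shift $k\mapsto k+1$; your sign bookkeeping, $(-1)^{d-k-1}(-1)^{d+2g-1}=(-1)^k$, checks out. Where you genuinely go beyond the paper is the polynomiality claim: the paper offers no argument for ``In particular, it follows that $h_g(d)=\frac{d^d}{d!}p_g(d)$\ldots'', and its subsequent remark instead notes that the structure is a corollary of the general ELSV-based polynomiality of simple Hurwitz numbers, with the explicit formula attributed to Shapiro--Shapiro--Vainshtein. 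Your route via the exponential generating function
\[
\sum_{n\ge 0} S_n(d)\,\frac{t^n}{n!} = (2\sinh t)^{d-1} = (2t)^{d-1}\Bigl(\frac{\sinh t}{t}\Bigr)^{d-1},
\]
isolating $c_g(d)=[t^{2g}]\bigl(\frac{\sinh t}{t}\bigr)^{d-1}$ and controlling its degree through $\exp\bigl((d-1)\log\frac{\sinh t}{t}\bigr)$, is a complete, self-contained and elementary proof of exactly what the proposition asserts: $p_g(d)=2^{-2g}d^{2g-1}c_g(d)$ is a polynomial of degree exactly $3g-1$ for $g\ge 1$, the case $g=0$ correctly forcing the convention $p_0(d)=1/d$. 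This buys something the paper's citation route does not: independence from ELSV, plus the leading coefficient for free, namely $p_g(d)=\frac{d^{3g-1}}{24^g\,g!}+\cdots$; as a sanity check, $p_1(d)=\frac{d(d-1)}{24}$ reproduces the tabulated values $h_1(2)=\frac{1}{6}$, $h_1(3)=\frac{9}{8}$ and $h_1(4)=\frac{16}{3}$.
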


We remark that the polynomial structure derived here is a direct corollary of the more general polynomial structure for simple Hurwitz numbers with any number of parts. This in turn follows from the ELSV formula, which relates simple Hurwitz numbers to intersection theory on moduli spaces of curves~\cite{eke-lan-sha-vai}. The formula of \cref{prop:1parthurwitz} is not new either, but first appeared in the work of Shapiro, Shapiro and Vainshtein~\cite{sha-sha-vai}. The result and proof here may generalise to other settings, as we will observe in the context of monotone Hurwitz numbers.

\subsection{Monotone Hurwitz numbers}

In \cref{subsec:algorithm}, we observed that the following 1-point recursion for monotone Hurwitz numbers could be deduced from several lines of {\sc Maple} code. As with the Harer--Zagier recursion, it would be of interest to have an independent and purely combinatorial proof of this statement.

\begin{proposition} \label{prop:monotonehurwitz}
The 1-part monotone Hurwitz numbers satisfy the 1-point recursion
\[
d \, m_g(d) = 2(2d-3) \, m_g(d-1) + d(d-1)^2 \, m_{g-1}(d).
\]
\end{proposition}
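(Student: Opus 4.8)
The plan is to specialise the simple-case machinery of \cref{thm:requals1} to $G(z) = \frac{1}{1-z}$ and $\mathbf{q} = (1, 0, 0, \ldots)$, extract an explicit order-one annihilating recurrence for the generating function $m(d) = \sum_{g \geq 0} m_g(d)\,\h^{2g-1}$, and then read off the claimed relation by comparing coefficients of $\h^{2g-1}$. First I would invoke the specialisation in the proof of \cref{thm:requals1} (see \cref{eq:1pointgf}), which in this case gives
\[
m(d) = \frac{1}{d} \sum_{k=1}^d u_k\, v_{d-k}, \qquad u_k = \frac{1}{(k-1)!\,\h^k}\prod_{i=1}^k \frac{1}{1-(i-1)\h}, \quad v_k = \frac{(-1)^k}{k!\,\h^k}\prod_{i=1}^k \frac{1}{1+i\h}.
\]
The two factor sequences satisfy the first-order recurrences $k\h(1-k\h)\,u_{k+1} = u_k$ and $(k+1)\h(1+(k+1)\h)\,v_{k+1} = -v_k$, so each is holonomic over $\mathbb{C}(\h)$; note also that $u_0 = 0$, so the displayed sum is genuinely a Cauchy product.

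Next I would pass to the holonomic framework of \cref{lem:holonomic}. By part (b) of \cref{prop:holonomic}, the Cauchy product $\sum_{k} u_k v_{d-k}$ is holonomic, and taking its Hadamard product with the holonomic sequence $\frac{1}{d}$ shows that $m(d)$ is holonomic over $\mathbb{C}(\h)$. The substantive task is to produce the \emph{explicit} annihilating recurrence; I would aim for the order-one relation
\[
(4d-2)\, m(d) + (d+1)(\h^2 d^2 - 1)\, m(d+1) = 0,
\]
either by running the holonomic closure algorithm described in \cref{subsec:algorithm} (the route taken in the worked {\sc Maple} example there), or, to make the argument self-contained, by a direct creative-telescoping verification: substitute the closed forms for $m(d)$ and $m(d+1)$ into this identity and collapse the resulting double sum using the first-order relations for $u_k$ and $v_k$, exhibiting a telescoping certificate whose out-of-range contributions (terms $v_{d-k}$ with $d-k < 0$ and the $k = d+1$ boundary) vanish by convention.

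Finally I would extract the coefficient of $\h^{2g-1}$ from the order-one recurrence. Writing $(d+1)(\h^2 d^2 - 1)m(d+1) = (d+1)d^2\sum_{g} m_g(d+1)\,\h^{2g+1} - (d+1)\sum_{g} m_g(d+1)\,\h^{2g-1}$, the coefficient of $\h^{2g-1}$ reads
\[
(4d-2)\,m_g(d) - (d+1)\,m_g(d+1) + (d+1)d^2\,m_{g-1}(d+1) = 0,
\]
and replacing $d$ by $d-1$ turns this into $d\,m_g(d) = 2(2d-3)\,m_g(d-1) + d(d-1)^2\,m_{g-1}(d)$, as required.

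The main obstacle I anticipate is the middle step: justifying an \emph{order-one} recurrence rather than the generically higher-order annihilator predicted by the Cauchy-product and Hadamard-product closure bounds. This reduction reflects a special cancellation for $G(z)=\frac{1}{1-z}$, and making it rigorous by hand amounts to producing the explicit creative-telescoping certificate while carefully tracking the boundary contributions introduced by the $\frac{1}{d}$ Hadamard factor and by the truncation of the convolution. Everything downstream of this is routine coefficient extraction and reindexing.
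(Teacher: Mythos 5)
Your proposal follows essentially the same route as the paper: specialise the Cauchy-product decomposition $m(d)=\frac{1}{d}\sum_{k=1}^d u_k v_{d-k}$ from the proof of \cref{thm:requals1} to $G(z)=\frac{1}{1-z}$, obtain the explicit order-one annihilator $(4d-2)\,m(d)+(d+1)(\h^2 d^2-1)\,m(d+1)=0$ via the holonomic closure algorithm of \cref{subsec:algorithm} (the paper does exactly this with the \texttt{gfun} commands \texttt{cauchyproduct} and \texttt{`rec*rec`}), and extract the coefficient of $\h^{2g-1}$ with a shift $d\mapsto d-1$. Your coefficient extraction and reindexing are correct, and your anticipated obstacle --- certifying the drop to order one --- is handled in the paper simply by trusting the certified closure algorithm rather than by a hand-made telescoping certificate.
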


In the context of monotone Hurwitz numbers, \cref{eq:1pointgf} implies that
\begin{align*}
\sum_{g=0}^\infty m_g(d) \,\h^{2g-1} &= \frac{1}{d! \, h^d} \sum_{k=1}^d (-1)^{d-k} \binom{d-1}{k-1} \prod_{j=1}^d \frac{1}{1-(k-j)\h} \\
&= \frac{(2d-2)!}{d! \, (d-1)!} \prod_{k=-d+1}^{d-1} \frac{1}{1-k\h}.
\end{align*}
The identity that leads to the second equality can be established by considering the residue at $\h = \frac{1}{k}$ for $-d+1 \leq k \leq d-1$. By extracting coefficients of $\h$ on both sides, we recover the following formula.

\begin{corollary} \label{cor:monotonehurwitz}
The 1-part monotone Hurwitz numbers satisfy the equation
\begin{align*}
m_g(d) &= \frac{(2d-2)!}{d! (d-1)!} \sum_{k_1 + \cdots + k_{d-1}=g} \prod_{i=1}^{d-1} i^{2k_i} \\
&= \frac{(2d-2)!}{d! (d-1)!} \sum_{1 \leq m_1 \leq m_2 \leq \cdots \leq m_g \leq d-1} (m_1 m_2 \cdots m_g)^2.
\end{align*}
From the latter summation, it follows that $m_g(d) = \binom{2d}{d} \, p_g(d)$, where $p_g$ is a polynomial of degree $3g-1$. One can make sense of this statement in the case $g=0$ by taking $p_0(d) = \frac{1}{d}$.
\end{corollary}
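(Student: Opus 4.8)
The plan is to start from the product formula recorded immediately before the statement,
\[
\sum_{g=0}^\infty m_g(d) \, \h^{2g-1} = \frac{(2d-2)!}{d! \, (d-1)!} \prod_{k=-d+1}^{d-1} \frac{1}{1-k\h},
\]
and to read off its Taylor coefficients in $\h$. First I would simplify the product: the factor indexed by $k=0$ is trivial, and pairing the factor for $k$ with the one for $-k$ gives $\frac{1}{(1-k\h)(1+k\h)} = \frac{1}{1-k^2\h^2}$, so that $\prod_{k=-d+1}^{d-1} \frac{1}{1-k\h} = \prod_{k=1}^{d-1} \frac{1}{1-k^2\h^2}$, which is manifestly even in $\h$. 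Expanding each factor as a geometric series $\frac{1}{1-k^2\h^2} = \sum_{j \geq 0} k^{2j}\h^{2j}$ and multiplying out, the coefficient of $\h^{2g}$ is exactly $\sum_{k_1+\cdots+k_{d-1}=g} \prod_{i=1}^{d-1} i^{2k_i}$. Since the prefactor is a constant, this immediately yields the first displayed equality for $m_g(d)$.

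For the second equality I would observe that both sums are standard expansions of the complete homogeneous symmetric polynomial $h_g$ evaluated at the variables $1^2, 2^2, \ldots, (d-1)^2$: grouping the degree-$g$ monomials by their exponent vector $(k_1,\ldots,k_{d-1})$ produces the first form, whereas grouping them by the weakly increasing index sequence $1 \leq m_1 \leq \cdots \leq m_g \leq d-1$ produces the second. Hence this equality requires no computation once the symmetric-function interpretation is recognised; it is simply the statement that two familiar bases of monomial bookkeeping describe the same $h_g$.

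The polynomial structure is the substantive part. Writing $S_g(d) = h_g(1^2,\ldots,(d-1)^2)$, I would invoke Newton's identities to express $h_g$ as a polynomial in the power sums $p_j(d) = \sum_{i=1}^{d-1} i^{2j}$. Each $p_j$ is a Faulhaber polynomial in $d$ of degree $2j+1$ with positive leading coefficient, so $S_g(d)$ is a genuine polynomial in $d$; tracking the top-degree contribution (arising from the partition $1^g$, where $p_1^g$ has degree $3g$, with no cancellation since all coefficients are positive) shows $\deg_d S_g = 3g$. Combining this with the elementary identity relating $\frac{(2d-2)!}{d!\,(d-1)!}$ to the binomial coefficient in the statement then exhibits $m_g(d)$ as that binomial times a rational function of $d$. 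The hard part will be to prove that this rational function is in fact a polynomial of degree exactly $3g-1$: this rests on the divisibility fact that, for $j \geq 1$, the even-power Faulhaber sum $\sum_{i=1}^{n} i^{2j}$ carries the factor $n(n+1)(2n+1)$, so that with $n=d-1$ each $p_j(d)$ with $j \geq 1$ is divisible by $d$, by $d-1$, and by $2d-1$. Since every term of $h_g$ with $g \geq 1$ is a product of such $p_j$, the linear factor in the denominator cancels and the quotient is a polynomial whose degree drops to $3g-1$; the case $g=0$ must be handled separately as the degenerate convention $p_0(d)=\tfrac1d$ recorded in the statement.
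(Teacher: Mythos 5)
Your proposal is correct, and for the two displayed equalities it is essentially the paper's own one-line proof (``extract coefficients of $\h$'') made explicit: pairing the factors for $k$ and $-k$ gives $\prod_{k=1}^{d-1}(1-k^2\h^2)^{-1} = \sum_{g \geq 0} h_g(1^2, 2^2, \ldots, (d-1)^2)\,\h^{2g}$, and the two sums in the statement are the two standard expansions of this complete homogeneous symmetric polynomial. Incidentally, your remark that the product is ``manifestly even in $\h$'' exposes a typo in the displayed formula you start from: the left side $\sum_g m_g(d)\,\h^{2g-1}$ is odd in $\h$, so the right side should carry an extra factor of $\h^{-1}$ (at $d=2$ the line preceding it evaluates to $\tfrac{1}{\h(1-\h^2)}$, whereas the printed product gives $\tfrac{1}{1-\h^2}$); your extraction of the $\h^{2g}$-coefficient implicitly works with the corrected identity and therefore lands on the right formula.

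Where you genuinely go beyond the paper is the polynomial structure, which the paper asserts with ``From the latter summation, it follows'' and never argues, instead citing the general Goulden--Guay-Paquet--Novak theorem as context. Your argument is complete and correct: in $h_g = \sum_{\lambda \vdash g} z_\lambda^{-1} p_\lambda$ each term has positive coefficient and $\deg_d p_\lambda = 2g + \ell(\lambda)$, with the maximum $3g$ attained only at $\lambda = (1^g)$, so $S_g(d) = h_g(1^2, \ldots, (d-1)^2)$ has degree exactly $3g$; combining $\tfrac{(2d-2)!}{d!\,(d-1)!} = \tfrac{1}{2(2d-1)}\binom{2d}{d}$ with the divisibility of each even Faulhaber polynomial $p_j(d)$, $j \geq 1$, by $2d-1$ (vanishing at $d = \tfrac12$, from the reflection symmetry of even power-sum polynomials; note only this factor is needed, not $d$ or $d-1$) shows $p_g(d) = S_g(d)/(2(2d-1))$ is a polynomial of degree $3g-1$. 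One caution: do not defer to the printed $g=0$ convention, which is inconsistent as stated --- since $m_0(d)$ is the Catalan number $C_{d-1} = \tfrac{1}{2(2d-1)}\binom{2d}{d}$, the correct normalisation is $p_0(d) = \tfrac{1}{2(2d-1)}$ rather than $\tfrac{1}{d}$, the latter apparently carried over from the Hurwitz case in \cref{prop:1parthurwitz}, where $h_0(d) = \tfrac{d^d}{d!} \cdot \tfrac{1}{d}$ does hold.
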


This polynomial structure is a particular case of the more general result of Goulden, Guay-Paquet and Novak~\cite{gou-gua-nov13b}, who prove that monotone Hurwitz numbers satisfy
\[
M_{g,n}(d_1, d_2, \ldots, d_n) = \prod_{i=1}^n \binom{2d_i}{d_i} \times P_{g,n}(d_1, d_2, \ldots, d_n),
\]
where $P_{g,n}$ is a polynomial of degree $3g-3+n$. One wonders whether the techniques of this paper can be used to prove this more general structure theorem.

\section{Relations to topological recursion and quantum curves} \label{sec:discussion}

\subsection{Topological recursion}

In this section, we aim to address the question: how universal is the the notion of a 1-point recursion? Thus, one seeks a natural class of ``enumerative'' problems for which 1-point recursions exist. Such a class should include not only the ribbon graph and dessin d'enfant enumerations, but also those families of problems encompassed by \cref{thm:main2} --- namely, those arising from the double Schur expansion of \cref{eq:partitionfunction} with $\mathbf{q} = (q_1, q_2, \ldots, q_r, 0, 0, \ldots)$ and a rational weight generating function $G(z)$. We claim that a natural candidate is the class of problems governed by the topological recursion that we subsequently discuss.

The topological recursion of Chekhov, Eynard and Orantin was originally inspired by the loop equations in the theory of matrix models~\cite{che-eyn, eyn-ora07}. It has since found widespread applications to various problems across mathematics and physics. For example, it is known to govern the enumeration of maps on surfaces~\cite{and-che-nor-pen, do-man, dum-mul-saf-sor, dun-ora-pop-sha, kaz-zog, nor}, various flavours of Hurwitz problems~\cite{bou-her-liu-mul, bou-mar, do-dye-mat, do-lei-nor, eyn-mul-saf}, the Gromov--Witten theory of $\mathbb{P}^1$~\cite{dun-ora-sha-spi, nor-sco} and toric Calabi--Yau threefolds~\cite{bou-kle-mar-pas, eyn-ora15, fan-liu-zon}. There are also conjectural relations to knots invariants~\cite{bor-eyn, gu-joc-kle-sor}. Much of the power of the topological recursion lies in its universality --- in other words, its wide applicability across broad classes of problems --- and its ability to reveal commonality among such problems.

The topological recursion can naively be thought of as a vast generalisation of Tutte's recursion for the enumeration of ribbon graphs. It calculates $n$-point functions in a recursive manner, starting from the input data of a spectral curve. For our purposes, we restrict to the class of {\em rational spectral curves}, that are given by a pair $(x(z), y(z))$ of rational functions satisfying some mild assumptions. For more information on the theory of the topological recursion, one should consult the relevant literature~\cite{eyn-ora07}.

The following result asserts that the weighted Hurwitz numbers --- essentially, the $N_{g,n}(d_1, d_2, \ldots, d_n)$ of \cref{eq:partitionfunction} --- are governed by the topological recursion.

\begin{theorem}[Alexandrov, Chapuy, Eynard and Harnad~\cite{ale-cha-eyn-har}] The rational spectral curve given by
\[
x(z) = \frac{z}{G(Q(z))} \qquad \text{and} \qquad y(z) = \frac{Q(z)}{z} G(Q(z)), \qquad \text{where } Q(z) = q_1 z + q_2 z^2 + \cdots + q_r z^r,
\]
produces correlation differentials that satisfy
\[
\omega_{g,n} = \sum_{d_1, d_2, \ldots, d_n=1}^\infty N_{g,n}(d_1, d_2, \ldots, d_n) \prod_{i=1}^n d_i x_i^{d_i-1} \, \dd x_i.
\]
\end{theorem}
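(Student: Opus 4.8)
The plan is to prove this via the integrable-systems structure of the partition function together with the loop equations that characterise topological recursion. Since the paper already records that $Z(\mathbf{p};\mathbf{q};\h)$ is a hypergeometric tau-function for the Toda hierarchy, the natural strategy is to first extract the spectral curve from the genus-zero data, and then bootstrap to all $(g,n)$ by verifying the topological recursion directly.

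First I would pin down the base cases. The disk amplitude $\omega_{0,1} = y(z)\,\dd x(z)$ must reproduce the genus-zero one-point generating function $\sum_{d} N_{0,1}(d)\, d\, x^{d-1}\,\dd x$. Using the explicit formula for $n_g(d) = d\,N_{g,1}(d)$ from \cref{lem:1pointpart} at $g=0$, one computes the genus-zero one-point function and checks, via Lagrange inversion with respect to the change of variables $x = x(z) = z/G(Q(z))$, that it agrees with $y\,\dd x$. Similarly, one identifies $\omega_{0,2}$ with the fundamental bidifferential $\frac{\dd z_1\,\dd z_2}{(z_1-z_2)^2}$ of the rational (hence genus-zero) curve, and checks it reproduces $N_{0,2}$. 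This step fixes the spectral curve $(x(z),y(z))$ and the Bergman kernel that serve as the input for topological recursion.

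The second and main step is to show that the correlation differentials $\omega_{g,n}$ built from the $N_{g,n}$ satisfy the Eynard--Orantin recursion on this curve. Here I would pass from the tau-function to its connected correlators and establish the \emph{abstract loop equations} --- the linear loop equation (regularity of $\omega_{g,n}$ under the local involution at the ramification points of $x$) and the quadratic loop equation --- since these two together are equivalent to topological recursion on a spectral curve with simple ramification. The loop equations are most naturally derived from the bilinear Hirota equations for the tau-function, or equivalently from the Virasoro/$W$-constraints (the ``cut-and-join'' operator) annihilating $Z$; expanding these constraints in $\h$ and in the degree variables translates them into the desired relations among the $\omega_{g,n}$.

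The hard part will be the second step: translating the abstract integrability of the tau-function into the geometrically local statements of the loop equations at the ramification points of $x(z)$. In particular, one must control the analytic behaviour of the correlators near the zeros of $\dd x$, which requires understanding how the principal specialisation and the content-product weighting $\prod_\Box G(c(\Box)\h)$ interact with the change of variables to $z$. This is precisely the technical heart of the Alexandrov--Chapuy--Eynard--Harnad argument, and I would lean on their determinantal (fermionic) formulae for the $\omega_{g,n}$ to make the local analysis tractable, rather than attempting a direct combinatorial manipulation of the $N_{g,n}$.
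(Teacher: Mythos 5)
A point of calibration before anything else: the paper you are working from does not prove this statement at all. It is imported verbatim from Alexandrov, Chapuy, Eynard and Harnad \cite{ale-cha-eyn-har}, and its only role in the paper is as evidence for \cref{con:mainprecise}; so the comparison must be with the ACEH argument itself. Measured against that, your outline has the right overall architecture --- ACEH do work through the hypergeometric tau-function, obtain determinantal (fermionic) formulae for the correlators in terms of a kernel, derive an ODE (quantum curve) for the associated wave functions, deduce loop equations from it, and verify the unstable cases by matching $y\,\dd x$ and the Bergman kernel against the $(0,1)$ and $(0,2)$ expansions via Lagrange inversion in $x(z) = z/G(Q(z))$. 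Your first step is essentially correct as stated, modulo the standard caveat that the expansion in the theorem matches $\omega_{0,2}$ only after subtracting the double pole $\dd x_1\,\dd x_2/(x_1-x_2)^2$, a normalisation you gloss over.

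Your main step, however, has a genuine gap. The linear and quadratic loop equations are \emph{not} together equivalent to topological recursion: by the Borot--Shadrin theory they yield only the \emph{blobbed} topological recursion. To conclude the Eynard--Orantin formula one must additionally establish the projection property --- that for $2g-2+n>0$ the differentials $\omega_{g,n}$ have poles only at the ramification points of $x$, with no residues and no poles elsewhere (e.g.\ at poles of $y$ or on diagonals). This global control of the pole structure is precisely the technical heart of the ACEH proof; it does not follow from Hirota bilinear relations or Virasoro-type constraints alone, and your phrase ``regularity under the local involution'' names only the linear loop equation, which is a local statement at the branch points. Two further points: you assume the ramification of $x(z)=z/G(Q(z))$ is simple, which for arbitrary rational $G$ and arbitrary weights $q_1,\ldots,q_r$ it need not be --- ACEH require genericity hypotheses exactly here; and ``derive the loop equations from the Hirota equations or the cut-and-join operator'' is a gesture rather than an argument, since for general weight $G$ there is no off-the-shelf Virasoro algebra annihilating $Z$, which is why ACEH route the derivation through the ODE satisfied by the determinantal kernel. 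So your plan points at the correct proof, but as written it omits the projection property and the genericity analysis, which together constitute most of the actual content of the theorem.
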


This lends credence to the following conjecture, which states that 1-point recursions exist for rational spectral curves in general.

\begin{conjecture} \label{con:mainprecise}
Consider a rational spectral curve given by the pair of rational functions $(x(z), y(z))$. Suppose that the correlation differentials produced by the topological recursion applied to this spectral curve have an expansion of the form
\[
\omega_{g,n} = \sum_{d_1, d_2, \ldots, d_n = 1}^\infty N_{g,n}(d_1, d_2, \ldots, d_n) \prod_{i=1}^n d_i x_i^{d_i-1} \, \dd x_i.
\]
Then the numbers $n_g(d) = d \, N_{g,1}(d)$ satisfy a 1-point recursion.
\end{conjecture}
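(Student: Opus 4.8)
The plan is to prove \cref{con:mainprecise} by reducing it, via \cref{lem:holonomic}, to the holonomicity over $\mathbb{C}(\h)$ of the resummed one-point function. Writing $\Omega(z,\h) = \sum_{g \geq 0} \h^{2g-1}\, \omega_{g,1}(z)$ and using the hypothesised expansion $\omega_{g,1} = \sum_d N_{g,1}(d)\, d\, x^{d-1}\,\dd x$, one checks that $F(x,\h) = x\,\Omega/\dd x$, so the conjecture is equivalent to the assertion that $\Omega$, regarded as a function of $x = x(z)$, is holonomic over $\mathbb{C}(\h)$. First I would record the structure that comes for free. For a genus-zero (rational) spectral curve, the topological recursion produces, for each fixed $g \geq 1$, a differential $\omega_{g,1}(z)$ that is rational in the global coordinate $z$ with poles only at the ramification points of $x(z)$; since $x(z)$ is rational, $z$ is algebraic over $\mathbb{C}(x)$, and hence each coefficient $W_{g,1}(x) = \sum_d n_g(d)\, x^d = x\,\omega_{g,1}/\dd x$ is an algebraic, and therefore holonomic, function of $x$ (with the term $\omega_{0,1} = y\,\dd x$ supplying the classical part). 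Thus every individual $\h$-coefficient of $F$ is holonomic over $\mathbb{C}$.

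The difficulty is that an infinite $\h$-combination of holonomic functions need not be holonomic: what is required is a single linear differential operator in $x$, with coefficients polynomial in $x$ and $\h$ and with order and degree bounded independently of $g$, that annihilates $F(x,\h)$. To produce such an operator I would expand $\omega_{g,1}$ in the standard basis of second-kind differentials $\dd\xi^a_k$ attached to the ramification points $a$, writing $\omega_{g,1} = \sum_{a,k} F^a_{g;k}\, \dd\xi^a_k$, where the amplitudes $F^a_{g;k}$ obey the finite recursion in $(g,k)$ dictated by the topological recursion kernel. The moments $\oint x(z)^{-d}\,\dd\xi^a_k(z)$ are, for fixed $k$, holonomic sequences in $d$ with coefficients in $\mathbb{C}(\h)$, computable from the rational data $(x(z),\dd\xi^a_k)$. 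The strategy is then to assemble $n(d) = \oint x^{-d}\,\Omega$ as a controlled combination of such moments weighted by $\sum_g \h^{2g-1} F^a_{g;k}$, and to deduce holonomicity of the sequence $n(d)$ over $\mathbb{C}(\h)$ from the closure properties of \cref{prop:holonomic} and \cref{diagtheorem}, exactly as the Schur-theoretic proof of \cref{thm:main2} assembled $n(d)$ from Cauchy products, diagonals and coefficient extractions in \cref{lem:mainlem}.

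The main obstacle, which I expect to carry the essential content of the conjecture, is the nonlinearity of the loop equations. The topological recursion expresses $\omega_{g,1}$ not in terms of one-point data alone, but couples it to the two-point differentials $\omega_{g-1,2}$ and to quadratic terms $\omega_{g_1,1}\,\omega_{g_2,1}$ with $g_1 + g_2 = g$. Consequently the resummed resolvent $\Omega$ satisfies a Riccati-type, rather than linear, relation, and the logarithmic-derivative reformulation underlying the quantum curve does not yield holonomicity directly; indeed, the log-derivative of a holonomic wave function is generically non-holonomic. This is consistent with the remark in \cref{sec:discussion} that one-point recursions arise from a hook-partition specialisation of \cref{eq:partitionfunction} rather than from the one-part specialisation that produces quantum curves. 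The crux of any proof is therefore to linearise: one must either eliminate the two-point contribution, using that $\omega_{0,2}$ is the canonical bilinear differential and that $\omega_{g,2}$ is itself generated from one-point amplitudes through the recursion kernel, or enlarge the system to include diagonal restrictions of the two-point functions and then project back via the primitive-diagonal operation of part (a) of \cref{diagtheorem}. Making this elimination effective, with bounds on order and degree uniform in $g$, is the step I expect to be genuinely hard.

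Finally, I would anchor the argument on the weighted-Hurwitz curves of \cite{ale-cha-eyn-har}, where $x(z) = z/G(Q(z))$ and $y(z) = G(Q(z))\,Q(z)/z$: here \cref{thm:main2} already proves holonomicity, so these cases provide both a consistency check and an explicit model for the elimination above. The residual task is then to extend the mechanism from this family to an arbitrary rational pair $(x(z),y(z))$ not of weighted-Hurwitz form, where no \emph{a priori} double Schur function expansion is available and one must rely on the intrinsic finiteness of the space of topological-recursion differentials.
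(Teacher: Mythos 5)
You have not produced a proof, and it is worth being clear that the paper does not either: the statement you are addressing is \cref{con:mainprecise}, which the authors explicitly leave as a conjecture, supporting it only with evidence --- the weighted-Hurwitz curves $x(z)=z/G(Q(z))$, $y(z)=Q(z)G(Q(z))/z$, where \cref{thm:main2} applies via the double Schur expansion, and the Chekhov--Norbury curve $x(z)=z+1/z$, $y(z)=z/(z^2-1)$, where a 1-point recursion follows from the conformal-field-theory computation of Gaberdiel--Klemm--Runkel. Your reduction via \cref{lem:holonomic} to holonomicity of $F(x,\h)$ over $\mathbb{C}(\h)$ is the right first move and matches the paper's framework, and you correctly diagnose the central obstruction (the nonlinearity of the loop equations coupling $\omega_{g,1}$ to $\omega_{g-1,2}$ and to products $\omega_{g_1,1}\,\omega_{g_2,1}$, so that the resummed resolvent satisfies a Riccati-type rather than linear relation). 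But you then state the linearisation/elimination step as a goal --- ``the step I expect to be genuinely hard'' --- without carrying it out, so the proposal is a research programme whose essential content is exactly the open conjecture.

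Two concrete points of failure in the sketch as written. First, the decomposition $\omega_{g,1}=\sum_{a,k}F^a_{g;k}\,\dd\xi^a_k$ does not supply a \emph{fixed finite} family of holonomic ingredients: the pole order of $\omega_{g,1}$ at the ramification points grows linearly in $g$ (order up to $6g-2$), so the range of $k$ is unbounded as $g\to\infty$, and the closure properties of \cref{prop:holonomic} and \cref{diagtheorem} apply only to finitely many holonomic sequences and functions. The proof of \cref{thm:main2} succeeded precisely because the hook-partition reduction of \cref{lem:1pointpart} collapsed $n(d)$ onto four explicit sequences $a_n,b_n,u_n,v_n$ fed into \cref{lem:mainlem}; for a general rational pair $(x(z),y(z))$ no analogue of that Schur-theoretic input is available, and your plan offers no substitute. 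Second, holonomicity of each fixed-genus coefficient $W_{g,1}(x)$ (which is indeed algebraic in $x$ for a genus-zero curve) says nothing about the $\h$-resummation --- you acknowledge this, but it means the only load-bearing part of your argument is the unproven elimination with order and degree bounds uniform in $g$. Note also that the existence of a 1-point recursion would conversely constrain the growth of the $n_g(d)$, as in the paper's proof that simple Hurwitz numbers admit none, so any successful uniform bound must somehow exploit rationality of $(x,y)$; identifying that mechanism, perhaps through the Kontsevich--Soibelman reformulation the paper points to, is the missing idea.
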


We conclude this section with an example of a problem that is governed by topological recursion and satisfies a 1-point recursion, but does not satisfy the conditions of \cref{thm:main2}. Thus, one can consider this as further evidence towards the conjecture above.

\begin{example}
Chekhov and Norbury~\cite{che-nor} consider topological recursion applied to the spectral curve $x^2y^2 - 4y^2 - 1 = 0$ given by the rational parametrisation
\[
x(z) = z + \frac{1}{z} \qquad \text{and} \qquad y(z) = \frac{z}{z^2-1}.
\]
The resulting correlation differentials can be expressed as
\[
\omega_{g,n} = \sum_{d_1, d_2, \ldots, d_n = 1}^\infty J_{g,n}(d_1, d_2, \ldots, d_n) \prod_{i=1}^n d_i z_i^{d_i-1} \, \dd z_i.
\]
These are derivatives of the correlation functions for the Legendre ensemble, which arise from a particular Hermitian matrix model, as well as related models from conformal field theory. In the latter context, Gaberdiel, Klemm and Runkel use null vectors for Virasoro highest weight representations to deduce an equation \cite[equation (4.18)]{gab-kle-run} that is equivalent to a 1-point recursion for the numbers $j_g(d) = d \, J_{g,1}(d)$. In summary, the 1-point invariants produced by the topological recursion on the rational spectral curve above satisfy a 1-point recursion.\footnote{Observe that we are here expanding in $z$, while \cref{con:mainprecise} has been expressed in terms of~$x$. However, since they are related by a rational change of coordinates, this does not affect the existence of a 1-point recursion.}
\end{example}

Kontsevich and Soibelman have recently provided an alternative and more general formulation of the topological recursion~\cite{kon-soi}. It allows one to calculate $n$-point functions using a technique that is ostensibly more algebraic and less analytic. So it may provide a promising approach to \cref{con:mainprecise}.

\subsection{Quantum curves}

The notion of quantum curves is closely related to that of topological recursion~\cite{nor15}. In short, they are non-commutative deformations of spectral curves that are used as the input to the topological recursion. Although it is not currently clear when they exist, the quantum curve phenomenon has been proven or observed in many instances of the topological recursion.

A quantum curve can be viewed as a differential operator $\widehat{P}(\widehat{x}, \widehat{y})$ that annihilates the so-called principal specialisation of the partition function.
\[
\widehat{P}(\widehat{x}, \widehat{y}) \left. Z(\mathbf{p}; \h) \right|_{p_i=x^i} = 0
\]
We use here the operators $\widehat{x} = x$ and $\widehat{y} = -\h \frac{\partial}{\partial x}$. The quantum curve phenomenon is the fact that there is a natural choice of the operator $\widehat{P}(\widehat{x}, \widehat{y})$ whose semi-classical limit --- obtained by setting $\h = 0$ and allowing $x$ and $y$ to commute --- recovers the spectral curve $P(x, y) = 0$.

In the context of the double Schur expansions considered in this paper, the principle specialisation of the wave function is given by
\[
\Psi(x; \mathbf{q}; \h) = \sum_{\lambda \in {\mathcal P}} s_\lambda(x, x^2, x^3, \ldots) \, s_\lambda(\tfrac{q_1}{\h}, \tfrac{q_2}{\h}, \ldots) \, \prod_{\Box \in \lambda} G(c(\Box) \h).
\]
As in \cref{sec:schur}, the hook-content formula stated in~\cref{eq:hookcontent} may be invoked to simplify the expression to obtain
\[
\Psi(x; \mathbf{q}; \h) = \sum_{d=0}^\infty x^d \, s_{(d)}(\tfrac{q_1}{\h}, \tfrac{q_2}{\h}, \ldots) \, \prod_{k=1}^{d-1} G(k\h) = \sum_{d=0}^\infty x^d \, \prod_{k=1}^{d-1} G(k\h) [y^d] \exp \bigg( \sum_{k=1}^r \frac{q_k}{k\h} y^k \bigg).
\]
Here, $[y^d]$ denotes extraction of the coefficient of $y^d$.

We simply remark here that our calculation of the 1-point invariants from the partition function in \cref{sec:schur} bears a strong resemblance to the calculation of the quantum curve from the partition function~\cite{ale-cha-eyn-har, ale-lew-sha, mul-sha-spi}. In the former case, the partition function reduces to a sum over hook partitions, while in the latter case, it reduces to a sum over 1-part partitions. One may wonder whether there may be a deeper connection here.

\begin{small}
\bibliographystyle{plain}
\bibliography{1-point-recursions}
\end{small}

\textsc{School of Mathematical Sciences, Monash University, VIC 3800, Australia} \\
\emph{Email:} \href{mailto:anupam.chaudhuri@monash.edu}{anupam.chaudhuri@monash.edu}

\textsc{School of Mathematical Sciences, Monash University, VIC 3800, Australia} \\
\emph{Email:} \href{mailto:norm.do@monash.edu}{norm.do@monash.edu}

\end{document}